\patchcmd{\@maketitle}{\newpage}{}{}{} 
\numberwithin{equation}{section}
\newtheorem{thm}{Theorem}
\newtheorem{lem}{Lemma}
\newtheorem{prop}{Proposition}
\newtheorem{corol}{Corollary}
\newtheorem{defn}{Definition}
\newtheorem{rem}{Remark}
\newcommand{\p}{\partial}
\newcommand{\hdel}{\hat{\partial}_0}
\newcommand{\oddk}{{\mathring{k}}}
\newcommand{\ck}{{\left \lceil{k /2}\right \rceil }}
\newcommand{\fk}{{\left \lfloor{k/2}\right \rfloor}}
\newcommand{\Eg}{{E^g}}
\newcommand{\tphi}{\tilde \phi}
\newcommand{\hN}{\widehat{N}}
\newcommand{\Ephi}[1]{{\mathcal{E}_{#1}(\phi)}}
\newcommand{\tEphi}[1]{\tilde{ \mathcal{E}}_{#1}(\phi)}
\newcommand{\mcr}[1]{\mathscr{#1}}
\newcommand{\eq}[1]{\begin{equation} #1 \end{equation}}
\begin{document}


\title[Attractors of the Einstein-Klein Gordon system]{Attractors of the Einstein-Klein-Gordon system}
\author[D.~Fajman, Z.~Wyatt]{David Fajman, Zoe Wyatt}

\date{\today}

\subjclass[2010]{35Q75; 83C05;  35B35}
\keywords{Nonvacuum Einstein flow, Einstein-Klein-Gordon system, Nonlinear Stability, Milne model}

\address{
\begin{tabular}[h]{l@{\extracolsep{8em}}l} 
David Fajman & Zoe Wyatt \\
Faculty of Physics & Maxwell Institute for Mathematical Sciences \\ 
University of Vienna & School of Mathematics \\
Boltzmanngasse 5 & University of Edinburgh  \\
1090 Vienna, Austria & Edinburgh, EH9 3FD, UK \\
David.Fajman@univie.ac.at & zoe.wyatt@ed.ac.uk       \\
&\\
&\qquad \qquad AND\\
&\\
&Faculty of Physics \\
&University of Vienna \\
&Boltzmanngasse 5 \\
&1090 Vienna, Austria \\
\end{tabular}
}

%
%

\maketitle
\begin{abstract}
It is shown that negative Einstein metrics are attractors of the Einstein-Klein-Gordon system. As an essential part of the proof we upgrade a technique that uses the continuity equation complementary to $L^2$-estimates to control massive matter fields. In contrast to earlier applications of this idea we require a correction to the energy density to obtain sufficiently good pointwise bounds.
\end{abstract}

\section{Introduction}
Nonlinear stability results are milestones in the study of the Einstein vacuum equations. For two particular spacetimes, nonlinear stability for the vacuum Einstein flow is known, those are Minkowski spacetime \cite{CK93} and the Milne model \cite{AnMo11}. If the cosmological constant is non-vanishing a large class of deSitter type universes and black holes are known to be stable \cite{F,Ri08,HiVa18} and also towards the singularity of certain cosmological solutions, stability has been established recently \cite{RoSp18}. 
Restricting to stability results towards a complete direction of spacetime and the case of vanishing cosmological constant, i.e.~to the Milne model and Minkowski space, several results appeared recently that generalize these works to the non-vacuum setting. The matter models that have been considered in these generalizations include Maxwell fields \cite{BZ,BFK,CCL,S,Wy}, collisionless matter \cite{AF17,Faa,FJS17-2,LT,Ta16} and scalar fields \cite{LR,LM,Wj18,Wq18}, in particular Klein-Gordon fields, which are the subject of this paper. 

\subsection{The Einstein-Klein-Gordon system}
The Einstein-Klein-Gordon system (EKGS) describes nonvacuum spacetimes with massive scalar field(s) as the matter model. The system emerges as a projection of the Einstein equations on massive modes of the Fourier expansion of five-dimensional Kaluza-Klein metrics \cite{Wj18}. Independently, the system poses an interesting mathematical model as it consists of a system of quasilinear massless and massive wave equations. The presence of massive waves in the system makes its treatment substantially more difficult than that of, for instance, the Einstein-Scalar field system where only a massless wave is coupled to the Einstein equations. The present paper considers the stability problem of the Milne model in the expanding direction as a solution to the Einstein-Klein-Gordon system. 

The EKGS has been studied intensely in recent years with an emphasis on the nonlinear stability problem \cite{LM,Wq18,Wj18} (cf.~also \cite{HS13} where the decay of the KG field on a Kerr-AdS background is investigated).

\subsection{Cosmological spacetimes and stability}
The Milne model 
$((0, \infty) \times M, \bar g)$ with 
\eq{ \label{eq:background}
\bar{g} = - dt^2 + \tfrac{t^2}{9} \gamma,
}
where $M$ is a closed 3-manifold admitting an Einstein metric $\gamma$ with negative curvature, i.e. $R_{ij}[\gamma] = -\tfrac29 \gamma_{ij}$, is a solution to the vacuum Einstein equations in 3+1 dimensions. It models a universe emanating from a big bang singularity at $t=0$ expanding for all time with a linear scale factor. It is a member of the FLRW family of cosmologies and, in comparison with related isotropically expanding models for the Einstein equations with a positive cosmological constant (such as deSitter space), has the slowest expansion rate. This feature makes it difficult to establish stability results for the Milne model as decay rates of fields in cosmological spacetimes correspond inversely proportional to the rate of expansion.

The nonlinear stability of the Milne model in the expanding direction is known due to a series of works by Andersson and Moncrief who resolved this problem in general dimensions \cite{AnMo11}. Their approach uses the CMCSH gauge, developed in \cite{AnMo03}, which casts the Einstein equations into an elliptic-hyperbolic system. This gauge enables a crucial decomposition of the spatial Ricci tensor into an elliptic operator and perturbation terms. A corrected $L^2$-energy based on this operator in combination with control of its kernel then allows for a sufficiently strong energy estimate that yields decay of perturbations at a rate that implies future completeness of the spacetime. 

Recently, the stability of the Milne model has been generalized to the presence of a variety of matter models such as collisionless matter \cite{AF17} by Andersson and the first author, fields emanating from generalized Kaluza-Klein spacetimes, in particular electromagnetic fields \cite{BFK} by Branding, the first author and Kr\"oncke and also Klein-Gordon fields \cite{Wj18} by Wang. While the former two works use the CMCSH gauge to control the evolution of perturbations, in \cite{Wj18} a CMC-vanishing-shift gauge and Bel-Robinson-type energies (cf.~\cite{AnMo04}) are used.

A crucial difficulty that arises for massive matter models coupled to the Einstein equations for data close to the Milne model results from the slow decay of the lapse gradient. This was first observed for the Einstein-Vlasov system in \cite{AF17}. When this rate of decay is naively bootstrapped and enters the equation of motion for the matter, the resulting loss of decay for the matter field is too strong to close the argument. A similar difficulty arises for the Klein-Gordon field \cite{Wj18}.
Therein, it is resolved by a hierarchy which is initiated at lowest order of regularity by an estimate for the Klein-Gordon field that has been proven in the work on Minkowski stability by LeFloch and Ma \cite{LM} but surprisingly applies in the cosmological setting as well. 

\subsection{Upgrading decay estimates for massive fields by the continuity equation}
The main motivation for the present paper is a rough similarity between the Milne stability problem for the EKGS and the corresponding one for the massive Einstein-Vlasov system considered in \cite{AF17}. 
Therein, the crucial step to overcome the problem of slow decay of the lapse was the utilisation of the continuity equation, which turns into a first order evolution equation for the energy density. This evolution equation has a beneficial structure that allows one to obtain better estimates for the energy density than for a generic component of the energy-momentum tensor. As it is precisely the energy density that causes the lapse gradient to lose decay, this auxiliary estimate is the essential tool to obtain sharp bounds on the lapse and close the bootstrap argument. In \cite{AF17} is was conjectured that the continuity equation has similarly powerful applications in corresponding \emph{massive} matter models.\\
In the present paper we show that this conjecture is true for the EKGS. In particular, we use the continuity equation to obtain improved bounds for a suitable \emph{corrected energy density} on lowest order and based on this initialization construct a hierarchy of estimates increasing in regularity. In particular, we consider the rescaled energy density $\rho$ (defined in \eqref{resc-matt}) and correct it with a small indefinite term to obtain the corrected energy density
\eq{
\hat{\rho}=\rho-\frac12 \tau^2 \phi\left(\frac32N^{-1}\phi-\phi'\right),
}
where $\tau$, $\phi$ and $\phi'$ are mean-curvature, Klein-Gordon field and its time derivative defined in Section 2. The corrected energy density fulfils an evolution equation, given in \eqref{mod-cont-eq}, with only time-integrable terms on the right-hand side yielding uniform pointwise bounds on the energy density and, in turn, for the Klein-Gordon field. This approach is sufficiently strong to close a bootstrap argument for the full system. In contrast to \cite{Wj18}, we work in CMCSH gauge as the shift vector does not pose an obstruction to our approach, particularly when using the continuity equation. As a consequence, we have access to the concise energy-method to control the perturbation of the geometry and obtain a substantially shorter proof of the nonlinear stability problem. 

\subsection{Main theorem}
We formulate the main theorem using terminology introduced in Section \ref{sec : 2}. Let $\mathscr{B}^{k,l,m}_{\varepsilon}(\gamma,\frac13\gamma,0,0)$ denote the ball of radius $\varepsilon$ in the space $H^k\times H^l\times H^m$ centered at $(\gamma,\frac13\gamma,0,0)$.

\begin{thm}\label{thm-1}
Let $(M,\gamma)$ be a negative, closed 3-dimensional Einstein manifold with Einstein constant $\mu=-2/9$.
Let $\varepsilon>0$ and $(g_0,k_0,\phi_0,\dot\phi_0)$ be rescaled initial data for the Einstein-Klein-Gordon system at $\tau=\tau_0$ such that 
\eq{
(g_0,k_0,\phi_0,\dot\phi_0)\in \mathscr B_\varepsilon^{5,4,5,4}(\gamma,\frac13 \gamma,0,0).
}
Then, for $\varepsilon$ sufficiently small the corresponding future development under the Einstein-Klein-Gordon system is future complete and the rescaled metric and second fundamental form converge to
\eq{
(g,k)\rightarrow (\eta,\frac13 \eta) \mbox{ as } \tau\nearrow 0.
}
\end{thm}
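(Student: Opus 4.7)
The strategy is a standard bootstrap-continuity argument in the rescaled CMCSH gauge, where the novelty is the use of the corrected energy density $\hat\rho$ to close the lowest-order matter estimates. I would first rewrite the Einstein-Klein-Gordon system in CMCSH gauge with respect to the logarithmic (or mean-curvature) time $\tau$ after the Milne rescaling sketched in \eqref{eq:background}, so that the background $(\gamma,\tfrac13\gamma,0,0)$ becomes a stationary solution and the Klein-Gordon equation takes the form of a damped Klein-Gordon equation on $(M,g)$ with damping proportional to $\tau$. In this formulation the lapse $N$ and shift $X$ satisfy elliptic equations whose right-hand sides involve components of the energy-momentum tensor, in particular the rescaled energy density $\rho$; the known obstruction is that the naive bootstrap rate for $\|N-1\|$ is too weak to feed back into the matter evolution.

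Next I would define an $L^2$-based energy hierarchy: a corrected energy $\mathcal{E}_k(g,\Sigma)$ for the geometric perturbation $(g-\gamma,\Sigma:=k-\tfrac13 g)$ at regularity up to $k=5$, built using the rough Laplacian so as to absorb the principal part of the spatial Ricci term (as in Andersson-Moncrief), together with matter energies $\Ephi{k}$ for $(\phi,\phi')$ at regularity $k\le 4$ that include the mass term. Bootstrap assumptions would impose smallness of these energies with a weight $\tau^{-\delta}$ for a small $\delta>0$. Standard commutator and Moser estimates then yield differential inequalities for $\mathcal{E}_k$ and $\Ephi{k}$, and the elliptic estimates for $N-1$ and $X$ in CMCSH gauge provide pointwise control on the lapse/shift in terms of the matter and geometry, up to the problematic low-decay contribution of $\rho$.

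The heart of the argument is the use of the continuity equation at the lowest regularity level. I would derive the evolution equation for the corrected energy density
\eq{
\hat\rho = \rho - \tfrac12 \tau^2 \phi\bigl(\tfrac32 N^{-1}\phi - \phi'\bigr),
}
namely the referenced identity \eqref{mod-cont-eq}, and verify by direct computation that the deliberately chosen correction cancels the dangerous, non-time-integrable contributions that appear when one writes the ordinary continuity equation in the rescaled variables. Concretely, the $\tfrac32 N^{-1}\phi^2$ piece is designed to absorb the mass term produced by the damped Klein-Gordon equation, while the $\phi\phi'$ piece absorbs the shift-divergence terms; what is left on the right is schematically a sum of cubic terms and terms carrying an extra factor of $\tau$, all of which are time-integrable under the bootstrap. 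Integrating in $\tau$ then yields a \emph{uniform} pointwise bound on $\hat\rho$, hence on $\rho$ and $\phi$, strictly better than what $L^2$-energy alone provides.

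Finally I would feed this improved pointwise bound on $\rho$ back into the elliptic estimate for the lapse, obtaining a sharp decay rate $\|N-1\|_{C^0}\lesssim \tau^{-\ell}$ sufficient to drive the remainder of the hierarchy. Higher-order matter energies $\Ephi{k}$, $k=1,\dots,4$, are controlled by successive energy estimates initialized by the improved low-order bounds, and the geometric energies $\mathcal{E}_k$, $k=1,\dots,5$, close by the Andersson-Moncrief method augmented by matter source terms that are now small. A standard continuity/openness-closedness argument then improves the bootstrap constants, extends the solution to $\tau\nearrow 0$, and the decay rates imply future geodesic completeness together with the stated convergence of $(g,k)$ to $(\gamma,\tfrac13\gamma)$. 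The main obstacle I expect is verifying in detail that the correction term in $\hat\rho$ really does eliminate \emph{all} non-integrable sources on the right-hand side of the modified continuity equation while introducing only lower-order errors---this is the step where the precise coefficients $\tfrac32 N^{-1}$ and $-1$ in front of $\phi$ and $\phi'$ must be chosen correctly, and any residual uncancelled term would force a less favorable decay rate and potentially break the hierarchy at higher regularity.
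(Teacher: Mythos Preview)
Your plan is essentially the same as the paper's and is correct in outline: CMCSH gauge, rescaling to logarithmic time, the corrected density $\hat\rho$ to obtain a uniform pointwise bound on $\rho$, feeding this into the elliptic lapse estimate, and then an inductive hierarchy alternating between $\|\hN\|_{H^{\ell+1}}$ and the Klein--Gordon energies up to top order, with the geometry closed by the Andersson--Moncrief corrected energy.

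One point you should be aware of: the paper modifies not only the pointwise density but also the $L^2$ Klein--Gordon energies themselves, adding indefinite cross terms of the form $3\phi'\Delta^k(N^{-1}\phi\hN)+\tfrac32 N^{-1}\phi\Delta^k((\tfrac32-N)N^{-1}\phi)$ (see \eqref{eq:modified-energy-KG}). The purpose of this second modification is to make the $\hat\partial_0 N$ contributions cancel in $\partial_T\tEphi{k}$; with the unmodified energy you would be left with terms involving the time derivative of the lapse, which are awkward to estimate directly. Your description of which dangerous terms the $\hat\rho$ correction kills is also slightly off: the two non-integrable sources in the raw continuity equation are $Nm^2\phi^2$ and $N(\tau\phi')^2$, and the single choice $\lambda=-\tfrac12$ in $\hat\rho=\rho+\lambda\tau^2\phi(\tfrac32 N^{-1}\phi-\phi')$ removes both simultaneously (not via separate ``mass'' and ``shift-divergence'' cancellations). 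Finally, in this rescaling the background lapse is $3$, so the small quantity is $\hN=N/3-1$, not $N-1$.
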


\subsection{Overview on the paper}

The paper is organized as follows. Section 2 introduces notations and the fundamental equations. Section 3 discusses the $L^2$-energies for the Klein-Gordon field. Section 4 states the bootstrap assumptions and introduces the $L^2$-energies for the perturbation of the geometry. Section 5 discusses the continuity equation and its modification for the Klein-Gordon field. In Section 6 the energy estimates for the Klein-Gordon field are performed. 
Section 7 recalls the elliptic estimates for lapse and shift. Section 8 discusses the hierarchy of decay for the lapse function and the KG field. Finally, Section 9 closes estimates for the shift and the perturbation of the geometry and ends the proof.

\subsection*{Acknowledgements}
The authors acknowledge support of the Austrian Science Fund (FWF) through the Project \emph{Geometric transport equations and the non-vacuum Einstein flow} (P 29900-N27).

\section{Preliminaries} \label{sec : 2}
\subsection{The Einstein-Klein-Gordon system}
We consider the Einstein-Klein-Gordon system (EKGS) consisting of the Einstein equations
\eq{
\bar R_{\mu \nu} - \tfrac{1}{2} \bar R \bar g_{\mu \nu} = \tilde T_{\mu \nu}(\tilde \phi) 
}
where the stress-energy tensor is given by
\eq{ \label{eq:stressET}
\tilde T_{\mu \nu} (\tilde \phi) =  \bar \nabla_\mu \tphi \bar \nabla_\nu \tphi - \tfrac12 \bar g_{\mu\nu} \left( \bar g^{\rho \sigma} \bar \nabla_\rho \tphi \bar \nabla_\sigma \tphi + m^2 \tphi^2 \right) .
}
Note here $\bar \nabla$ is the Levi-Civita with respect to $ g$ and for $m>0$ the Klein-Gordon equation is
\eq{
\bar \nabla^\mu \bar \nabla_\mu \tilde \phi = m^2\tilde{\phi}. 
}

\subsection{Negative Einstein metrics, Gauge choice and variables} The following setup is similar to earlier papers on the vacuum case or different matter models. We recall it briefly for the sake of completeness.
Throughout the paper let $\gamma$ denote a fixed negative Einstein metric such that $R_{ij}[\gamma]=-\frac29\gamma_{ij}$. We choose the constant for convenience, but any negative Einstein metric can be treated in the same way. Note also that Roman letters will always range over spatial indices $1,2,3$ and Greek letters will range over spacetime indices $0, 1,2,3$. 

To model the dynamic spacetime we consider the 3+1-dimensional metric in ADM form 
\eq{ \overline g=-\tilde N^2dt^2+\tilde g_{ab}(dx^a+\tilde X^adt)(dx^b+\tilde X^bdt)}
where $\tilde N$, $\tilde g$ and $\tilde X$ denote the lapse function, the induced Riemannian metric on $M$ and the shift vector field respectively\footnote{Note the coordinate $t$ here does not coincide with the same symbol in the explicit Milne model \eqref{eq:background}.}. Recall that $\bar g_{ab} = \tilde g_{ab}$ but in general $\bar g^{ab} \neq \tilde g^{ab}$. We denote by $\tau$ the trace of the second fundamental form $\tilde k$ with respect to $\tilde g$ and decompose $\tilde k=\tilde\Sigma+\tfrac13 \tau \tilde g$. We then impose the CMCSH gauge via
\eq{\label{gauges}
\begin{split}
t=\tau,\qquad
\widetilde g^{ij}(\widetilde\Gamma^{a}_{ij}-\widehat\Gamma^a_{ij})&=0,
\end{split}
}
where $\widetilde\Gamma$ and $\widehat\Gamma$ denote the Christoffel symbols of $\widetilde g$ and $\gamma$, respectively. 

\subsection*{Rescaling}
We rescale the variables $(\tilde g, \tilde \Sigma, \tilde N, \tilde X, \tilde \phi)$ with respect to mean curvature time $t=\tau$, calling the rescaled variables $(g, \Sigma, N, X, \phi)$. This coincides with earlier works, except for the Klein-Gordon field, which is rescaled here as follows. The rescaling is done according to
\eq{\label{rescaling}
\begin{array}{cc}
g_{ij}=\tau^2\tilde{g}_{ij}&N=\tau^2\tilde{N}\\
g^{ij}=\tau^{-2}\tilde{g}^{ij}& \Sigma_{ij}=\tau\tilde \Sigma_{ij}\\
\phi=-| \tau |^{-3/2}\tilde{\phi}& X^i=\tau \tilde{X}^i.
\end{array}
}
On top of this we introduce the logarithmic time $T=-\ln(\tau/\tau_0)$, ($\leftrightarrow\tau=\tau_0\exp(-T)$) with $\p_T=-\tau \p_\tau$. We have the following ranges $\tau_0 \leq \tau\nearrow 0$ and $0 \leq T \nearrow \infty$. 
We also define $\widehat N:=\frac N3-1$.
The rescaled Einstein equations in CMCSH gauge take the following form.
\begin{align}
R(g)-|\Sigma|_g^2+\tfrac{2}{3}&=4 
	\tau \rho \label{eq:EoM-R}
\\
\nabla^a \Sigma_{ab} &=
	\tau^2 \jmath_b \label{eq:EoM-Sigma}
\\
(\Delta - \tfrac{1}{3})N &= 
	N \left( |\Sigma|_g^2 - \tau \eta \right) \label{eq:EoM-lapse}
\\
\Delta X^i + R^i_m X^m &=
	2 \nabla_j N \Sigma^{ji} - \nabla^i \hN+ 2 N \tau^2 \jmath^i \label{eq:EoM-shift}
\\
& \quad \notag
	- (2N \Sigma^{mn} - \nabla^m X^n)(\Gamma^i_{mn} - \hat \Gamma^i_{mn})  
\\
\p_T g_{ab} &=
	2N \Sigma_{ab} + 2\hN g_{ab} - \mcr{L}_X g_{ab}
\\
\p_T \Sigma_{ab} &=
	-2\Sigma_{ab} - N(R_{ab} +\tfrac{2}{9}g_{ab} ) + \nabla_a \nabla_b N + 2N \Sigma_{ai} \Sigma^i_b
\\
& \quad \notag
	-\tfrac{1}{3} \hN  g_{ab} - \hN \Sigma_{ab} - \mcr{L}_X \Sigma_{ab} + N \tau S_{ab} .
\end{align}
Note here and throughout $\nabla$ denotes the Levi-Civita connection for the rescaled 3-metric $g$.  The energy density and energy current are defined by
\eq{
\tilde \rho = \tilde N^2 \tilde T^{00} \,, \,
\tilde \jmath_a = \tilde N \tilde T^0_a,
}
respectively. Furthermore $\tilde \jmath^a$ is defined by raising the index using $\tilde g^{ab}$, and in general we raise and lower indices on un-rescaled quantities (i.e. those with tildes) using $\tilde g$. We define the rescaled matter quantities as
\begin{equation}\label{resc-matt}
\begin{split}
\rho &:= \tilde \rho (-\tau)^{-3},\qquad\quad\,
\eta := (\tilde \rho + \tilde g^{ab} \tilde T_{ab} ) (-\tau)^{-3}, 
\\
\jmath^b &:= \tilde \jmath^b (-\tau)^{-5},\qquad
S_{ab} := (\tilde T_{ab} - \tfrac12 \tilde g_{ab} \tilde T ) (-\tau)^{-1} .
\end{split}
\end{equation}
Here and throughout, spatial indices for rescaled quantities will be raised and lowered using the rescaled metric $g$. Thus $ \jmath_a$ is defined using the rescaled metric $g_{ab}$, and moreover $\jmath_a$ would scale as $\tilde \jmath_a (-\tau)^{-3}$. 
For the Klein-Gordon field these matter quantities are evaluated as
\begin{align}
\rho &= \tfrac12 m^2 \phi^2 + \tfrac{\tau^2}{2} \big( \tfrac32 N^{-1} \phi - \phi' \big)^2 + \tfrac12 \tau^2 g^{ab} \nabla_a \phi \nabla_b \phi  \label{eq:rho}
\\
\jmath^a &= \tau \big( \tfrac32 N^{-1} \phi - \phi' \big) g^{ab} \nabla_b \phi \label{eq:j}
\\
\eta &= -\tfrac12 m^2 \phi^2 + 2 \tau^2 \big( \tfrac32 N^{-1} \phi - \phi' \big) ^2 
\label{eq:eta}
\\
S_{ab} &= \tfrac12 m^2 \phi^2 g_{ab} + \tau^2 \nabla_a \phi \nabla_b \phi \label{eq:Sij}
\\
g_{ab} T^{ab} &=
- \tfrac32 \tau^{-2} m^2 \phi^2 + \tfrac32  \big( \tfrac32 N^{-1} \phi - \phi' \big) ^2
- \tfrac12 g^{ab} \nabla_a \phi \nabla_b \phi, \label{eq:gab-Tab}
\end{align}
where we have used the following notation 
\eq{
\hdel := \p_T + \mathcal{L}_X \,, \quad  \phi' := N^{-1} \hdel \phi .
}
For completeness note that the unit normal vector acting on $\tilde \phi$ becomes, in the rescaled variables,
\eq{
\tilde N^{-1} (\p_\tau - \tilde X^a \p_a) \tilde \phi
= \tau^2 N^{-1} \cdot \tau^{-1} \hdel \big( (-\tau)^{3/2} \phi \big)
= (- \tau)^{5/2} \big( \tfrac32 N^{-1} \phi - \phi'\big).
}
Finally, the rescaled Klein-Gordon equation takes the form
\eq{\label{eq:rescaledKG-eq}
 \hdel  \phi' = \nabla^a (N \nabla_a \phi ) +(4-N) \phi' +\tfrac32 \phi - \tfrac{15}{4} N^{-1} \phi - \tfrac32 N^{-2} \phi \hdel N - \tau^{-2} m^2 N \phi .
}
Note to derive \eqref{eq:rescaledKG-eq} it was convenient to move to a `Cauchy adapted frame', see for example \cite[VI\S 3]{CB09} 
This ends the setup of the EKGS in the CMCSH gauge with appropriate rescaling. In the following we will work solely with these equations.
\section{Energy functionals for the Klein-Gordon field}
In this section we define the $L^2$-energy of the Klein-Gordon field in two steps. First, we define the natural $L^2$-norm of a massive scalar field. In the second step we modify this energy with two non-definite terms to obtain the corrected energy, which turns out to fulfil the desired energy estimate, which is derived later.

\subsection{Natural energy} 
The following energy is the natural $L^2$-energy expressed in the rescaled variables. 
\begin{defn}
\eq{ \begin{aligned}
E_k(\phi) 
&:=
	\int_\Sigma \tau^2 (-1)^k \left(
	\phi' \Delta^k \phi' - \phi \Delta^{k+1} \phi 
	\right) \sqrt{g} 
	+ \int_\Sigma m^2 (-1)^k \phi \Delta^k \phi \sqrt{g},
\\
\Ephi{\ell} 
&:= 
	\sum_{k=1}^\ell E_k(\phi).
\end{aligned}
}
\end{defn}
We need the following lemma further below.
\begin{lem} \label{lem:L2norm-laplacian-equiv}
The following equivalence (denoted $\cong$) holds
\eq{
\| \phi \|_{H^{k+2}} \cong \| \Delta \phi \|_{H^k} + \| \phi \|_{L^2}
}
for a sufficiently regular function $\phi$. This implies
\eq{
\| \phi \|_{H^k}
 \cong
	\| \Delta^\fk \phi \|_{L^2} + \| \nabla^\oddk \Delta^\fk \phi \|_{L^2} + \| \phi \|_{L^2},
}
where $\oddk=\ck - \fk$.
Thus 
\eq{\label{eq:Ephi-equiv}
\Ephi{k}^{1/2} \cong \| \tau \phi'\|_{H^k} + \|\tau \phi \|_{H^{k+1}} + \| m \phi \|_{H^k}.
}
\end{lem}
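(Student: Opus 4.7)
I would prove the three equivalences in sequence, since each feeds into the next. The first is standard elliptic regularity for $\Delta$ on the closed Riemannian manifold $(M,g)$: the $\gtrsim$ direction is trivial from the definition of the $H^{k+2}$ norm, while $\lesssim$ uses the Fredholm property of $\Delta:H^{k+2}\to H^k$. Its kernel on the closed manifold consists of the constants, so $\|\phi-\bar\phi\|_{H^{k+2}}\lesssim\|\Delta\phi\|_{H^k}$, and the constant mean $\bar\phi$ is controlled by $\|\phi\|_{L^2}$, giving the first equivalence.

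\textbf{Iteration.} The second equivalence then follows by iterating the first $\fk$ times. When $k=2j$ is even, this immediately yields $\|\phi\|_{H^{2j}}\cong\|\Delta^{j}\phi\|_{L^2}+\|\phi\|_{L^2}$ after absorbing any intermediate $\|\Delta^i\phi\|_{L^2}$ with $0<i<j$ by interpolation between the two endpoint terms. When $k=2j+1$ is odd, $j$ iterations reduce the norm to $\|\Delta^j\phi\|_{H^1}+\|\phi\|_{L^2}$, and the standard identification $\|\psi\|_{H^1}\cong\|\nabla\psi\|_{L^2}+\|\psi\|_{L^2}$ applied to $\psi=\Delta^j\phi$ finishes the claim. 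Both parities are unified via the abbreviations $\fk,\oddk$.

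\textbf{Third equivalence.} For the last statement, self-adjointness of $\Delta$ on the closed manifold yields, after $j$ integrations by parts, the identity
\[
(-1)^j\int_M\phi\,\Delta^j\phi\,\sqrt{g}=\int_M\bigl|\nabla^{\mathring{j}}\Delta^{\lfloor j/2\rfloor}\phi\bigr|^2\,\sqrt{g},
\]
with $\mathring{j}=\lceil j/2\rceil-\lfloor j/2\rfloor$ the natural analogue of $\oddk$. Applied term-by-term to $E_j(\phi)$, this expresses the energy as a sum of squared semi-norms of order $j$ in $\tau\phi'$ and in $m\phi$, together with a semi-norm of order $j+1$ in $\tau\phi$. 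Summing $j=1,\dots,k$ collects every order up to the prescribed maximum, and invoking the second equivalence converts these sums into the three Sobolev norms on the right of \eqref{eq:Ephi-equiv}. The main obstacle is the lowest-order $L^2$ contributions of $\tau\phi'$, $\tau\phi$, and $m\phi$: these appear in no single $E_j$ with $j\ge 1$ and must be recovered by interpolating the top-order terms against a base $L^2$ norm, using the Poincaré-type content of the first step, keeping the $\cong$-constants independent of $\tau$ and of $m$.
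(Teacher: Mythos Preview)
Your treatment of the first two equivalences is sound and is essentially the standard elliptic-regularity argument; the paper itself simply cites Besse's \emph{Einstein manifolds} (Appendix H, Theorem 27) for the first equivalence and leaves the remaining deductions implicit, so your argument is considerably more explicit than what the paper provides.

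The third equivalence, however, has a genuine gap in your proposal. You correctly observe that the sum $\Ephi{k}=\sum_{j=1}^{k}E_j(\phi)$ produces, after integration by parts, only the seminorms of order $j\ge 1$ in $\tau\phi'$ and $m\phi$ (and of order $j\ge 2$ in $\tau\phi$); the pure $L^2$ pieces are absent. Your proposed remedy---``interpolating the top-order terms against a base $L^2$ norm, using the Poincar\'e-type content of the first step''---cannot work: on a closed manifold the constants lie in the kernel of every derivative, so no combination of $\|\nabla^j\psi\|_{L^2}$ with $j\ge 1$ controls $\|\psi\|_{L^2}$. There is no $L^2$ ``base'' available inside $\Ephi{k}$ to interpolate against, and the Poincar\'e inequality goes the wrong way for this purpose.

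The resolution is that the lower summation index in the definition of $\Ephi{\ell}$ is almost certainly a typo: the modified energy $\tEphi{\ell}$ is defined with $\sum_{k=0}^{\ell}$, and the proof of the equivalence $\tEphi{\ell}\cong\Ephi{\ell}$ (Lemma~\ref{lem:E-equivalence}) explicitly sums ``in $k$ from $0$ to $\ell$'', treating both energies on equal footing. With $E_0(\phi)=\int\tau^2\bigl((\phi')^2+|\nabla\phi|^2\bigr)+\int m^2\phi^2$ included, the missing $L^2$ contributions are supplied directly, and your integration-by-parts argument for part three goes through without any interpolation.
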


\begin{proof}
This follows from \cite[Appendix H, Theorem 27]{Be08}. 
\end{proof}
It is important to note the appearance of $|\tau|$ weights in \eqref{eq:Ephi-equiv}.

\subsection{Modified energy} \label{subsec:modified-energy-KG}
Now we introduce the modified energy, which contains two indefinite terms (cf.~\cite{Wj18}, (4.24)). Its equivalence to the standard energy is shown below.
\begin{defn}
\eq{\label{eq:modified-energy-KG}
\begin{aligned}
\tilde E_k(\phi) 
 &:=
	\int_\Sigma \tau^2 (-1)^k \Big[ 
	\phi' \Delta^k \phi' - \phi \Delta^{k+1} \phi 
+ 3 \phi'\Delta^k \big( N^{-1} \phi \hN \big) 
\\
& \qquad
	+ \tfrac32 N^{-1}\phi \Delta^k \left( (\tfrac32-N) N^{-1} \phi \right) 
	\Big] \sqrt{g}
	+ \int_\Sigma m^2 (-1)^k \phi \Delta^k \phi \sqrt{g},
\\
\tEphi{\ell} 
&:=
	\sum_{k=0}^\ell \tilde E_k(\phi).
\end{aligned}
}
\end{defn}

\begin{lem} \label{lem:E-equivalence}
Assume that there exists a constant $C>0$ such that $\|{N}\|_{L^\infty}+\|N^{-1}\|_{L^\infty}+\|g-\gamma\|_{H^N}<C$ and $\|\widehat N\|_{H^{N+1}} \leq C e^{-\tfrac12 T}$. Then there exists a $\tau_0$ such that for all $\tau>\tau_0$ the following equivalence holds.
\eq{
\tEphi{\ell}\cong \Ephi{\ell}\,, \quad \ell \leq N .
}
\end{lem}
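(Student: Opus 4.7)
The plan is to reduce the equivalence $\tEphi{\ell}\cong\Ephi{\ell}$ to the corresponding equivalence $\tilde E_k(\phi)\cong E_k(\phi)$ at each individual level $k$, and then sum. Taking the difference one finds the clean identity
\begin{equation*}
\tilde E_k(\phi) - E_k(\phi) = \int_M \tau^2 (-1)^k \left[ 3 \phi' \Delta^k\!\big(N^{-1}\phi \hN\big) + \tfrac{3}{2} N^{-1}\phi\, \Delta^k \!\big( (\tfrac{3}{2}-N) N^{-1}\phi \big) \right] \sqrt{g},
\end{equation*}
so the task reduces to (a) bounding the first bracketed term by a small multiple of $\Ephi{k}$, and (b) controlling the second bracketed term so that its contribution is absorbed into the definite mass contribution of $E_k$ plus a small remainder. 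Lemma \ref{lem:L2norm-laplacian-equiv} will be used throughout to translate between $\Ephi{k}$ and Sobolev norms.

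The first modification term carries an explicit factor of $\hN$ and is the easy piece. Integration by parts to redistribute the $\Delta^k$, followed by Moser-type product estimates (permissible by the $L^\infty$ bounds on $N,N^{-1}$ and the Sobolev control of $g-\gamma$) and Cauchy--Schwarz, should give an estimate of the form
\begin{equation*}
\left| \int_M \tau^2 (-1)^k \cdot 3 \phi' \Delta^k(N^{-1}\phi \hN) \sqrt{g} \right|
\leq C \|\hN\|_{H^{k+1}}\,\Ephi{k} \leq C e^{-T/2} \Ephi{k},
\end{equation*}
using the decay hypothesis $\|\hN\|_{H^{k+1}} \leq C e^{-T/2}$.

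The second modification term is the delicate one, because $(\tfrac{3}{2}-N)N^{-1}=\tfrac{3}{2}N^{-1}-1$ carries no manifest factor of $\hN$. Since $\hN = N/3 - 1$ one computes $\tfrac{3}{2}N^{-1} - \tfrac{1}{2} = -\hN/(2(1+\hN))$, so I would split $\tfrac{3}{2}N^{-1} = \tfrac{1}{2} + h(\hN)$ with $h(\hN) = O(\hN)$. Expanding, the $\hN$-independent constant piece contributes exactly
\begin{equation*}
\tau^2 (-1)^k \cdot \tfrac{1}{2}\phi\cdot \Delta^k\!\big(-\tfrac{1}{2}\phi\big) = -\tfrac{\tau^2}{4}(-1)^k \phi\Delta^k\phi,
\end{equation*}
which, upon adding to the mass contribution $m^2(-1)^k\phi\Delta^k\phi$ of $E_k$, effectively replaces the coefficient $m^2$ by $m^2 - \tau^2/4$ in front of the manifestly non-negative quantity $(-1)^k\int\phi\Delta^k\phi\sqrt{g}$ (non-negativity coming from integration by parts on the closed manifold $M$). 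For $|\tau_0|$ small enough (namely $|\tau_0|<m$) one has $m^2 - \tau^2/4 \geq \tfrac{3}{4}m^2$, preserving positivity and comparability with $m^2$. All remaining $h(\hN)$-dependent pieces of this second term are handled exactly as for the first modification term, yielding additional contributions of order $Ce^{-T/2}\Ephi{k}$.

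Assembling these pieces gives $|\tilde E_k - E_k^{\mathrm{eff}}| \leq Ce^{-T/2}\Ephi{k}$, where $E_k^{\mathrm{eff}}$ is $E_k$ with $m^2$ shifted to $m^2-\tau^2/4$; choosing $|\tau_0|$ small enough that the shifted mass term is comparable to $m^2$ and the $e^{-T/2}$ contribution is strictly less than (say) $\tfrac{1}{2}$, both directions of the equivalence $\tilde E_k \cong E_k$ follow, and summing over $k$ yields the lemma. The main obstacle — the one that distinguishes this modified energy from the first one — is precisely that the second modification term lacks an obvious smallness prefactor in its leading part; the resolution is to recognize that this leading part has strictly lower derivative count than the principal $-\tau^2\phi\Delta^{k+1}\phi$ piece of $E_k$, so it is of the same type as (and absorbable by) the companion mass term, at the cost of requiring $|\tau|$ small compared to $m$.
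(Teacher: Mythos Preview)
Your argument is correct and reaches the same conclusion, but it is organized differently from the paper's proof. The paper does \emph{not} keep the explicit $\hN$ factor in the first correction term; instead it expands $N^{-1}\phi\,\hN = \tfrac13\phi - N^{-1}\phi$ and writes the full difference $E_k-\tilde E_k$ as a sum of bilinear expressions in $\phi'$, $\phi$, and $N^{-1}\phi$, each carrying a prefactor $\tau^2$. Every such term is then handled uniformly by Young's inequality with a free parameter $\delta$ together with the trivial bound $\|\tau\phi\|_{H^k}^2 \le (\tau_0^2/m^2)\|m\phi\|_{H^k}^2$, so that the smallness mechanism is the pair $(\delta,\tau_0^2/m^2)$ rather than the decay of $\hN$. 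The $\hN$-decay hypothesis enters only to control the cross terms in which derivatives land on $N^{-1}$, and even there it is ultimately absorbed into a $|\tau_0|$-small coefficient. By contrast, you isolate the $\hN$-dependent pieces and exploit $\|\hN\|_{H^{k+1}}\le Ce^{-T/2}$ directly, while for the one piece with no $\hN$ you make the mass shift $m^2\mapsto m^2-\tau^2/4$ explicit. Your route makes the structural reason for positivity (and the threshold $|\tau_0|<2m$) transparent; the paper's route is more mechanical but treats all terms on an equal footing and relies almost entirely on $|\tau_0|$ being small, with the $\hN$-decay playing only a secondary role.
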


\begin{proof}
We first write the difference between the two energies (note without summing in $k$). 
\eq{ \label{eq:EminusEt}
\begin{aligned}
E_k(\phi) - \tilde E_k(\phi)  
&=
	\int_\Sigma (-1)^k\tau^2 \Big( 
	3 \phi'  \Delta^k (N^{-1} \phi) - \tfrac94  N^{-1} \phi \Delta^k (N^{-1} \phi) \Big) \sqrt{g}
\\
& \quad
	+  \int_\Sigma (-1)^k \tau^2  \Big( \tfrac32 N^{-1} \phi \Delta^{k}\phi - \phi' \Delta^{k} \phi \Big) \sqrt{g}.
\end{aligned}
}
Examine the first term on the right hand side of \eqref{eq:EminusEt}. The claim for $\ell=0$ is easily seen from:
\begin{align*}
\left| \int_\Sigma \tau^2 \left( 
	-3 (N^{-1} \phi) \phi' 
	\right) \sqrt{g} \right|
\leq
	C \delta \int_\Sigma \tau^2 (\phi')^2 \sqrt{g}
	+ {C \tau_0^2 \over \delta m^2} 
	\int_\Sigma m^2 \phi^2 \sqrt{g} 
\end{align*} 
where we used $\| N^{-1} \|_{L^\infty} \leq C$ and $\delta$ is a constant we are free to choose. For sufficiently small $\delta$ and $\tau_0 = \text{tr}\tilde k_0$ one can ensure all coefficients are strictly less than 1. Note that $0>\tau>\tau_0$ implies $|\tau|<|\tau_0|$. Thus this term can be absorbed by the clearly positive terms of $E_0(\phi)$.

A similar argument holds for $\ell \geq 1$.
For some smooth functions $v,w$ 
$$
\Delta^i(vw) = (\Delta^i v) w + \sum_{|I|+|J|+1=2i} c_{IJ} \nabla^{I} v \nabla^{J+1} w
$$
for some coefficients $c_{IJ}$ depending on $g$. 
So for a fixed value of $k \geq 1$, integration by parts (where there are $2\fk+\oddk = k$  derivatives distributed) gives
\begin{align*}
&\left| \int_\Sigma \tau^2 (-1)^k \left( 
	3 \Delta^k (N^{-1} \phi) \phi' 
	\right) \sqrt{g} \right|
\\
& \quad =	
	3 \left| \int_\Sigma \tau^2  \left( 
	 (\nabla_a)^\oddk \Delta^\fk (N^{-1} \phi) \right) \left( (\nabla^a)^\oddk \Delta^\fk \phi' 
	\right) \sqrt{g} \right|
\\
& \quad \leq
	C \int \tau^2 N^{-1} \big| \nabla^\oddk \Delta^\fk \phi \nabla^\oddk \Delta^\fk \phi' \big| \sqrt{g}
	+ C  \sum_{|I|+|J| +1=k} \int \tau^2 \big| \nabla^I \phi \nabla^{J+1} N^{-1} \nabla^\oddk \Delta^\fk \big| \sqrt{g}
\\
& \quad \leq	
	C \| N^{-1} \|_{L^\infty} \| \tau \nabla^\oddk \Delta^\fk \phi \|_{L^2} \| \tau \nabla^\oddk \Delta^\fk \phi' \|_{L^2}
	+ C \sum_{|I|+|J| +1=k} \| \tau \nabla^I \phi \|_{L^4} \| \nabla^{J+1} N^{-1} \|_{L^4}
	\| \tau \nabla^\oddk \Delta^\fk \phi' \|_{L^2}^2
\\
& \quad \leq	
	C   \| \tau \phi\|_{H^k} \| \tau \phi'\|_{H^k}
	+ C \sum_{|I|\leq k-1} \| \tau \nabla^I \phi \|_{H^1} \cdot
	\sum_{1 \leq |J| \leq k-1} \| \nabla^J N^{-1} \|_{H^1}
	\cdot \| \tau \phi'\|_{H^k}
\\
& \quad \leq	
	C   \| \tau \phi\|_{H^k} \| \tau \phi'\|_{H^k}
	+ C\| \tau \phi \|_{H^k} \cdot
	\| \hN \|_{H^{k+1}} \| \tau \phi'\|_{H^k}
\\
& \quad \leq	
	C \delta  \| \tau \phi'\|_{H^k} ^2 +  {C \tau_0^2 \over \delta m^2}  \| m \phi\|_{H^k} ^2.
\end{align*} 
In the third to last line we used the following estimate
\eq{ \label{eq:nablaN-inverse} \begin{aligned}
\sum_{1 \leq |J| \leq k} \| \nabla^J N^{-1} \|_{L^2}
& \leq	
	\| N^{-2} \|_{L^\infty} \| \hN \|_{H^k} 
	+ 	C( \| N^{-1} \|_{L^\infty}) \sum_{|I| \leq \fk} \| \nabla^I \hN \|_{L^\infty} \| \hN \|_{H^k} 
\\
& \leq	
	C \| \hN \|_{H^k} 
	+ C \| \hN \|_{H^{k+1}} \| \hN \|_{H^k} 
 \leq	
	C \| \hN \|_{H^k} .
\end{aligned} }
In this estimate we used $\fk+2 \leq k+1$ for $k \geq 1$ in order to take the terms with low derivatives on $\hN$ out in $L^\infty$ and embed using Sobolev, recalling also that $N$ is controlled at one order of regularity higher than $\phi$.  Note we also used the Sobolev-embedding $H^1 \hookrightarrow L^4$. 

In a similar way we can show
\begin{align*}
 \Big| \int_\Sigma & \tau^2  N^{-1} \phi \Delta^k (N^{-1} \phi) \sqrt{g} \Big|
+ \Big| \int_\Sigma \tau^2  \Big( \phi' \Delta^{k} \phi - \tfrac32 N^{-1} \phi \Delta^{k}\phi \Big) \sqrt{g} \Big|
\\ 
& \leq
	\int_\Sigma \big| \tau \nabla^\oddk \Delta^\fk (N^{-1} \phi) \big|^2 \sqrt{g}
	+ C \int_\Sigma \tau^2 \left| \nabla^\oddk \Delta^\fk (N^{-1} \phi) \nabla^\oddk \Delta^\fk \phi \right| \sqrt{g}
\\
& \quad
		+ 	C \delta \int_\Sigma \tau^2  |\nabla^\oddk \Delta^\fk  \phi'|^2 \sqrt{g}
	+ {C \tau_0^2 \over \delta m^2} \int_\Sigma m^2 | \nabla^\oddk \Delta^\fk \phi|^2 \sqrt{g}
\\ 
& \leq	
	{C \tau_0^2 \over m^2} \| m \phi \|_{H^k}^2 
	+ \| \hN \|_{H^k}^2 \| \tau \phi \|_{H^{k+1}}^2
	+ C \delta \| \tau \phi'\|_{H^k}^2 
	+ {C \tau_0^2 \over \delta m^2} \| m \phi \|_{H^k}^2
\\ 
& \leq	
	{C \tau_0^2 \over m^2} \Ephi{k}
	+ |\tau_0 | \Ephi{k}
	+ C \delta \Ephi{k}
	+ {C \tau_0^2 \over \delta m^2} \Ephi{k}.
\end{align*}
Thus the claim holds by summing in $k$ from $0$ to $\ell$ and reducing $\delta$ and $|\tau_0|$ to be sufficiently small so that all coefficients can be made to be strictly smaller than 1.
\end{proof}

\section{Energy norms and smallness assumptions}
In this section we state the global bootstrap assumptions to provide for a simpler notation in all the estimates to follow. Also, we recall the definition of the corrected $L^2$-energy to control the perturbation of the geometry from the earlier works \cite{AnMo11,AF17}.
\subsection{Bootstrap assumptions}
Fix the regularity $N \geq 4$ and some constant $0 < \gamma << 1$. Then we impose the following assumptions on the data we consider.	
\begin{subequations} \label{eq:bootstraps}
\begin{align}
\| g- \gamma\|_{H^{N+1}}^2 + \| \Sigma \|_{H^{N}}^2 & \leq C_I \varepsilon^2 e^{-\tfrac32 T},
\\
\| \hN \|_{H^{N+1}} &\leq C_I \varepsilon e^{(-1+\gamma)T},
\\
\| X \|_{H^{N+1}} & \leq C_I \varepsilon e^{(-1+\gamma)T},
\\
\mathcal{E}_{N}(\phi) & \leq C_I \varepsilon^2 e^{2 \gamma T} .
\end{align}
\end{subequations}
A simple check shows the assumptions of Lemma \ref{lem:E-equivalence} are consistent with \eqref{eq:bootstraps}. 
\subsection{Energy for the perturbation of the geometry} \label{subsec:energy-geom-def}
As in \cite{AnMo11} and later in \cite{AF17} define an energy for the geometric perturbation as follows. We define the correction constant $\alpha=\alpha(\lambda_0,\delta_\alpha)$ by
\eq{
\alpha=
\begin{cases}
1& \lambda_0>1/9\\
1-\delta_\alpha& \lambda_0=1/9,
\end{cases}
}
where $\delta_\alpha=\sqrt{1-9(\lambda_0-\varepsilon')}$ with $1>>\varepsilon'>0$ remains a variable to be determined in the course of the argument to follow. By fixing $\varepsilon'$ once and for all, $\delta_\alpha$ can be made suitable small when necessary.\\
The corresponding correction constant, relevant for defining the corrected energies is defined by
\eq{
c_E=\begin{cases}
1& \lambda_0>1/9\\
9(\lambda_0-\varepsilon')& \lambda_0=1/9.
\end{cases}
}
We are now ready to define the energy for the geometric perturbation. For $m\geq 1$ let
\eq{\begin{split}
\mathcal{E}_{(m)}&=\frac12\int_M\langle 6\Sigma,\mathcal{L}_{g,\gamma}^{m-1} 6\Sigma \rangle\mu_g+\frac92\int_M \langle (g-\gamma),\mathcal{L}_{g,\gamma}^{m}(g-\gamma)\rangle\mu_g\\
\Gamma_{(m)}&=\int_M \langle 6\Sigma,\mathcal{L}_{g,\gamma}^{m-1}(g-\gamma)\rangle\mu_g.
\end{split}}

Then, the following corrected energy for the geometric perturbation is defined by
\begin{defn}
\eq{
E_k(g, \Sigma)= \sum_{1\leq m\leq k} \mathcal{E}_{(m)}+c_E\Gamma_{(m)}.
}
\end{defn}
Under the imposed conditions, the energy is coercive and equivalent
$$
\Eg_k(g, \Sigma) \cong \| g- \gamma\|_{H^{k+1}}^2 + \| \Sigma \|_{H^k}^2 .
$$

\subsection{Local Well-posedness}
Local existence theory is a prerequisite for addressing the global existence and stability problem for any Einstein-matter system. The local existence problem for the vacuum Einstein equations in CMCSH gauge was proven in \cite{AnMo03}. We provide the corresponding result for the Einstein-Klein-Gordon system below. As it differs from the vacuum system only by coupling an additional nonlinear wave equation to the elliptic-hyperbolic system there is essentially no difference in the proof. One issue is however important to remark, which concerns the elliptic system. To preserve the crucial feature that the elliptic operators are isomorphisms we need to impose a smallness condition on the matter variables. This has been observed already in \cite{Fa16} for collisionless matter and, for simplicity, turned into a smallness assumption for the full perturbation. Following the strategy of proof in \cite{AnMo03} and making an  additional smallness assumption analogously to that in \cite{Fa16} yields the following local-existence theorem for the EKGS.
\begin{lem}
There exists a $\delta>0$ such that for any initial data set at time $T_0$ for the rescaled Einstein-Klein-Gordon system in CMCSH gauge $(g_0,k_0,\phi_0,\dot\phi_0)$ with
\eq{
\|g_0-\gamma\|_5+\|\Sigma_0\|_4+\sqrt{|\tau|}(\|{\phi_0}\|_5+\|\dot{\phi_0}\|_4)\leq \delta ,
}
there exists a local-in-time solution to the rescaled EKGS in CMCSH gauge with this initial data. Moreover, let $T_+$ be the supremum of all $T>T_0$ such that the corresponding solution exists up until $T$. Then either $T_+=\infty$ or 
\eq{
\limsup_{T\rightarrow T_+}\|g-\gamma\|_5+\|\Sigma\|_4+\sqrt{|\tau|}(\|{\phi}\|_5+\|\dot\phi\|_4)\geq2\delta.
} 
\end{lem}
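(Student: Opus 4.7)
The plan is to follow the Andersson--Moncrief iteration scheme of \cite{AnMo03} for the vacuum CMCSH system, with the Klein--Gordon field coupled in as an additional quasilinear wave equation. Starting from the data $(g_0,k_0,\phi_0,\dot\phi_0)$ at $T_0$, define a sequence $(g^{(n)},\Sigma^{(n)},\phi^{(n)},\phi'^{(n)})$ inductively: given the $n$-th iterate, first solve the coupled elliptic system \eqref{eq:EoM-lapse}--\eqref{eq:EoM-shift} for $(N^{(n)},X^{(n)})$ with the matter sources \eqref{eq:eta}, \eqref{eq:j} treated as known data, and then solve the hyperbolic evolution system for $(g,\Sigma)$ together with the rescaled Klein--Gordon equation \eqref{eq:rescaledKG-eq} for $(\phi,\phi')$ to produce the $(n+1)$-th iterate. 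The CMCSH constraint propagation argument of \cite{AnMo03} carries over unchanged because the Klein--Gordon energy-momentum tensor satisfies $\bar\nabla^\mu \tilde T_{\mu\nu}=0$ as a consequence of \eqref{eq:rescaledKG-eq}, so the Hamiltonian and momentum constraints \eqref{eq:EoM-R}--\eqref{eq:EoM-Sigma} and the spatial harmonic gauge condition are preserved by the flow.

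The key analytic input is the invertibility of the elliptic operators. The scalar operator $\Delta - \tfrac13$ on $(M,\gamma)$ has trivial kernel because $\tfrac13$ is not a Laplace eigenvalue under the normalization $R_{ij}[\gamma]=-\tfrac29 \gamma_{ij}$, and the vector operator $\Delta + R_m^i$ has trivial kernel on the negative Einstein background since such metrics admit no nontrivial conformal Killing fields. Under a smallness hypothesis on $\|g_0-\gamma\|_5+\|\Sigma_0\|_4$, these operators remain isomorphisms between the appropriate Sobolev spaces for the perturbed metric, and this is precisely where the additional matter smallness assumption enters: the right-hand sides of \eqref{eq:EoM-lapse}--\eqref{eq:EoM-shift} contain the terms $N\tau\eta$ and $2N\tau^2\jmath^i$, and the weights $\sqrt{|\tau|}$ on the Klein--Gordon norms in the hypothesis give precisely the factors needed to absorb $\tau\eta\sim\tau (m^2\phi^2+\tau^2(\phi')^2)$ and $\tau^2\jmath\sim \tau^3 \phi'\nabla\phi$ as small perturbations in $H^{N-1}$, so that a contraction argument yields unique solutions $(N^{(n)},X^{(n)})$ with bounds comparable to those of the iterates, exactly as in \cite{Fa16}.

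With the elliptic step under control, the hyperbolic step reduces to a standard quasilinear first-order symmetric hyperbolic system for $(g-\gamma,\Sigma)$ coupled to a linear wave equation for $\phi$ on the evolving background. A combined energy of the form $\Eg_4(g,\Sigma)+\Ephi{4}$, using the definitions of Section~3 and Section~4, yields uniform bounds on a short time interval via a Gr\"onwall argument: all terms appearing on the right-hand sides can be controlled by polynomials in the iterate norms and the elliptic output, and the $|\tau|$-weighting in \eqref{eq:Ephi-equiv} is harmless on a short time interval where $\tau$ stays close to $\tau_0$. Convergence of the iteration in a lower-regularity norm (by the usual $H^k\to H^{k-1}$ loss) and the Banach fixed-point theorem then produce a unique local solution at the stated regularity.

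The continuation criterion is the standard one: the local solution can be extended as long as the Sobolev norm appearing in the smallness assumption remains below a fixed threshold, because an a priori bound on that norm controls the coefficients of the elliptic system (keeping it invertible), the coefficients of the hyperbolic system (keeping symmetric-hyperbolic theory applicable), and the source terms in the energy inequality. Therefore either the solution exists globally in $T$, i.e. $T_+=\infty$, or the norm must cross the threshold $2\delta$ before $T_+$. The main obstacle is the elliptic invertibility step---ensuring the matter sources do not destroy the isomorphism property---and this is precisely where the factor $\sqrt{|\tau|}$ in the smallness assumption plays its role.
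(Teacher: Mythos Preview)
Your sketch is correct and follows precisely the approach the paper indicates: the paper does not give a detailed proof of this lemma but simply states that one follows the Andersson--Moncrief scheme of \cite{AnMo03}, couples in the Klein--Gordon field as an additional wave equation, and imposes the matter smallness assumption analogously to \cite{Fa16} to preserve the isomorphism property of the elliptic operators. Your identification of the role of the $\sqrt{|\tau|}$ weight---ensuring that $-\tau\eta$, which is quadratic in the rescaled Klein--Gordon field, remains small in the lapse equation---matches exactly the paper's own remark following the lemma.
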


\begin{rem}
The mean-curvature factor in front of the Klein-Gordon field terms results from the lapse equation, where this condition assures smallness of the corresponding matter term $-\tau\eta$. As $\eta$ is quadratic in the rescaled Klein-Gordon field, each terms obtains a factor of $\sqrt{|\tau|}$.
\end{rem}

\begin{rem}
We use the previous lemma to establish global existence of the solution corresponding to the considered perturbations by proving decay estimates that assure, in particular, that the necessary bounds above hold.
\end{rem}


\section{Modified continuity equation}
In this section we cast the rescaled Klein-Gordon equation in the particular form \eqref{eq:rescaledKG-eq2} and derive a modified continuity equation \eqref{mod-cont-eq}. This is essential to prove an improved bound for the pointwise norm of the energy-density, which in turn allow for an initialization of the hierarchy that improves bounds on the Klein-Gordon field and the lapse function.\\

The rescaled Klein-Gordon equation \eqref{eq:rescaledKG-eq} when written in terms of the small quantity $\hN$ reads
\eq{\label{eq:rescaledKG-eq2}
 \hdel  \phi' = \nabla^a (N \nabla_a \phi ) -3\hN \phi' +\phi' + \tfrac92 N^{-1} \phi \hN + \tfrac34 N^{-1} \phi + \tfrac32 \phi \hdel N^{-1} - \tau^{-2} m^2 N\phi.
}
The standard continuity equation, see for example \cite{Re08}, is
\begin{align*}
\p_T \rho
&=
	(3-N)\rho - X^a \nabla_a \rho 
	+ \tau N^{-1} \nabla_a (N^2 j^a) 
	- \tau^2 {\tfrac{N}{3}} g_{ab} T^{ab}
	- \tau^2 N \Sigma_{ab} T^{ab}
\\
&=
	- 3 \hN\rho - X^a \nabla_a \rho 	
	+ \tau^2 N^{-1} \nabla_a \Big( N \nabla^a \phi ({\tfrac32}\phi - \hdel \phi) \Big)
	+ {\tfrac12} N m^2 \phi^2 
\\
& \quad
	+ \tau^2 \tfrac N6 \nabla^a \phi \nabla_a \phi 
	- \tau^2 \tfrac{N^{-1}}{2} \Big({\tfrac32} \phi - \hdel \phi \Big)^2 + \tau^2 {\tfrac{N}{2}} \Sigma_{ab} \nabla^a \phi \nabla^b \phi.
\end{align*}
The problematic terms in this expression are $N m^2 \phi^2$ and $N ( \tau \phi')^2$. This is because naively estimating such terms using the standard Sobolev embedding $L^\infty \hookrightarrow H^2$ leads to an problematic $e^{\gamma T}$ growth for $\rho$. Nonetheless, motivated by the notion of a `modified' energy, we consider a modified energy density. Consider the quantity
\eq{
P:= 
	\phi \Big( \tfrac32 N^{-1} \phi - \phi' \Big).
}
Up to a factor $ \tau^2$, this is similar to one of the terms subtracted from $E_0(\phi)$ to obtain $ \tilde E_0(\phi)$. 
Using the KG equation \eqref{eq:rescaledKG-eq} its  evolution equation is the following. 
\begin{align*}
\p_T P
&=
	- \mathcal{L}_X P + \hdel P
\\
&=
	- \mathcal{L}_X P + 3 \phi \phi' + \tfrac32 \phi^2 \hdel N^{-1} - N(\phi')^2 - \phi \hdel ( \phi' )
\\
&=
	- \mathcal{L}_X P + 3 \phi \phi'  
	- \phi \nabla^a(N \nabla_a \phi) + N \phi'\phi -\tfrac32 \phi^2 - 4 \phi'\phi  + \tfrac{15}{4} N^{-1} \phi^2 
\\
& \quad
	- N(\phi')^2 + \tau^{-2} m^2 N \phi ^2.
\end{align*}
For some constant $\lambda$ define
\eq{
\hat \rho 
:= 
	\rho + \lambda \tau^2 P.
}
We will be interested in the modified continuity equation for $\lambda=-1/2$
\eq{\label{mod-cont-eq}\begin{aligned}
\p_T \hat \rho 
&=
	-3\hN \rho - X^a \nabla_a \hat \rho + \tau^2 {\tfrac12} (1+3/N) \phi \nabla_a N \nabla^a \phi + \tau^2 {\tfrac12} (3+N) \phi \Delta \phi
\\
& \quad
	- \tau^2 N \phi' \Delta \phi 
	+ \tau^2 \tfrac32 (1+N/9) \nabla_a \phi \nabla^a \phi
	- \tau^2 N \nabla^a \phi \nabla_a \phi'
	- 2 \tau^2 \phi'\nabla^a \phi \nabla_a N
\\
&\quad
	+ \tau^2 {\tfrac34} (1 - 2/N) \phi^2 
	+ \tau^2 {\tfrac12} (1-N/2) \phi \phi'
	+ \tau^2 {\tfrac{N}{2}} \Sigma_{ab} \nabla^a \phi \nabla^b \phi.
\end{aligned}
}
\begin{prop} \label{prop:modified-cont}
Assume the bootstrap assumptions \eqref{eq:bootstraps} hold. 
If $\lambda = -1/2 $ the following equivalence holds
$\rho \cong \hat \rho $
and also the estimates
\eq{
| \p_T \hat \rho | \lesssim
	\big( \| \hN \|_{H^3} 
	+ \| X \|_{H^2} 
	+ \| \Sigma \|_{H^2} 
	+ |\tau|
	\big) \Ephi{4},
	}
and thus
\eq{
\rho |_{T} \lesssim
	 \rho |_{T_0} 
	+ C \varepsilon^3 \int_{T_0}^T e^{(-1+2\gamma)s} ds
\leq 
	C \varepsilon^2.
}
\end{prop}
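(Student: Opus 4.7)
The plan is to prove the three claims in order: the pointwise equivalence $\rho\cong\hat\rho$, the $L^\infty$-in-$x$ estimate on $\p_T\hat\rho$, and then the integrated bound on $\rho$. The benefit of the choice $\lambda=-1/2$ is that the two problematic terms $Nm^2\phi^2$ and $N\tau^2(\phi')^2$ appearing in the plain continuity equation for $\rho$ are eliminated, leaving only terms in \eqref{mod-cont-eq} which either carry an explicit small factor of $\hN$, $X$ or $\Sigma$, or carry an explicit $\tau^2$ weight paired with Klein-Gordon quantities controllable by $\Ephi{4}$.

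For the equivalence, I would apply a weighted Cauchy--Schwarz inequality pointwise to the correction term. Using $\|N^{-1}\|_{L^\infty}\lesssim 1$ from the bootstrap,
\eq{
\tfrac12 \tau^2 \big| \phi(\tfrac32 N^{-1}\phi - \phi') \big|
\leq \delta \tau^2 (\tfrac32 N^{-1}\phi - \phi')^2 + \tfrac{C}{\delta}\tau^2 \phi^2 .
}
The first term is absorbed into the kinetic part of $\rho$ for $\delta$ small, and the second into the mass part $\tfrac12 m^2\phi^2$ once $|\tau_0|/m$ is small (using $|\tau|\leq |\tau_0|$). This yields $\rho\cong\hat\rho$.

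For the estimate on $\p_T\hat\rho$, I would take the pointwise absolute value of each term in \eqref{mod-cont-eq}. The terms carrying a factor of $\hN$, $X$ or $\Sigma$ are bounded in $L^\infty$ by the corresponding Sobolev norm via $H^2\hookrightarrow L^\infty$, while the remaining Klein-Gordon quantities---including lower-order derivatives of $\phi$ and $\phi'$---lie in $L^\infty$ with a margin of regularity thanks to Lemma~\ref{lem:L2norm-laplacian-equiv}, so four orders of regularity of $\phi$ suffice. The transport term $-X^a\nabla_a\hat\rho$ is dealt with by expanding $\nabla\hat\rho$ explicitly and bounding $\|X\|_{L^\infty}\lesssim \|X\|_{H^2}$. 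The remaining source terms---such as $\tau^2\phi\Delta\phi$, $\tau^2\nabla\phi\nabla\phi'$ and $\tau^2\phi^2$---carry no external small coefficient. To extract the advertised $|\tau|$ prefactor I would trade one factor of $\phi$ for $\|m\phi\|_{L^\infty}/m$: writing $|\tau|\,\|\phi\|_{L^\infty}\lesssim |\tau|\,\|m\phi\|_{H^2}/m\lesssim |\tau|\,\Ephi{2}^{1/2}/m$ leaves the remaining $\tau$-weighted factor controllable by $\Ephi{4}^{1/2}$. This is where the massive character of the field is essential.

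Finally, integrating in $T$ at each spatial point and taking the supremum in $x$ gives
\eq{
\rho|_T\lesssim \hat\rho|_{T_0}+\int_{T_0}^T\big(\|\hN\|_{H^3}+\|X\|_{H^2}+\|\Sigma\|_{H^2}+|\tau|\big)\Ephi{4}\,ds .
}
Inserting the rates \eqref{eq:bootstraps}, each of the four summands becomes $O(\varepsilon^3 e^{(-1+c\gamma)s})$, which is integrable on $[T_0,\infty)$ for $\gamma$ small, and the equivalence of the first step returns the bound on $\rho$. The main obstacle is the absence of any intrinsic small factor in many source terms of \eqref{mod-cont-eq}; overcoming this requires the mass-based trade of a $\phi$ factor for an extra $|\tau|$ as above, and this is precisely what justifies the particular choice $\lambda=-1/2$ in the definition of $\hat\rho$.
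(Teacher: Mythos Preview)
Your proposal is correct and follows essentially the same route as the paper: the pointwise equivalence via weighted Cauchy--Schwarz absorbing into the kinetic and mass parts of $\rho$, the term-by-term $L^\infty$ estimate of \eqref{mod-cont-eq} using $H^2\hookrightarrow L^\infty$ (with $H^3$ needed for the $\nabla N$ factors), and the mass-based trade $\|\phi\|_{L^\infty}\lesssim \|m\phi\|_{H^2}/m$ to extract a spare $|\tau|$ from terms without an external small coefficient. One small remark: the $|\tau|\,\Ephi{4}$ summand actually contributes $O(\varepsilon^2 e^{(-1+2\gamma)s})$ rather than $O(\varepsilon^3 e^{(-1+c\gamma)s})$, since $|\tau|$ carries no factor of $\varepsilon$; this is immaterial for the final bound $\rho\lesssim\varepsilon^2$ and the paper's own displayed intermediate bound has the same imprecision.
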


\begin{proof}

\begin{align*}
\p_T \hat \rho 
&=
	\p_T \rho + \lambda \tau^2 \partial_T P - 2 \lambda \tau^2 P
\\
& =
	-3 \hN \rho - X^a \nabla_a \hat \rho  
	+ N m^2 \phi^2 ( {\tfrac12} + \lambda) - \tau^2 N (\phi')^2 ( {\tfrac12} + \lambda)
\\
&\quad
	+ \tau^2 N^{-1} \nabla_a N \nabla^a \phi \Big( \tfrac32 \phi - \hdel \phi \Big)
	+ \tau^2  \Delta \phi \Big( \tfrac32 \phi - \hdel \phi \Big) 
	+ \tau^2 \nabla^a \phi \nabla_a \Big( \tfrac32 \phi - \hdel \phi \Big) 
\\
&\quad
	+ \tau^2 \tfrac{N}{6} \nabla_a \phi \nabla^a \phi 
	- \tau^2 \tfrac98 N^{-1} \phi^2 
	+ \tau^2 \tfrac32 \phi \phi'
	+ \tau^2 {\tfrac{N}{2}} \Sigma_{ab} \nabla^a \phi \nabla^b \phi
\\
&\quad
	+ \lambda \tau^2 \big( - \phi \phi' - \phi \nabla^a(N \nabla_a \phi) + N \phi \phi' - \tfrac32 \phi^2 + \tfrac{15}{4} N^{-1} \phi^2 - 3N^{-1} \phi^2 + 2 \phi \phi' \big).
\end{align*}
Choosing $\lambda = -1/2$ we can remove the problematic terms. Indeed the evolution equation is now
\eqref{mod-cont-eq}.
To show the equivalence, note
\eq{
| \rho - \hat \rho| = 
{\tau^2 \over 2} \Big|  \phi \Big( \tfrac32 N^{-1} \phi - \phi' \Big) \Big|
\leq 
	{\tau_0^2 \over \delta m^2} m^2 \phi^2
	+ \delta \tau^2\Big( \tfrac32 N^{-1} \phi - \phi' \Big)^2.
}
Thus for sufficiently small $\delta$ and $\tau \geq \tau_0$ equivalency holds. Finally, and recalling \eqref{eq:Ephi-equiv} and the standard Sobolev embedding $H^2 \hookrightarrow L^\infty$, we have
\begin{align*}
| \p_T \hat \rho |
& \lesssim
	\| \hN \|_{L^\infty} \big( \| m \phi \|_{L^\infty}^2 + \| \tau N^{-1} \phi \|_{L^\infty}^2 + \| \tau \phi' \|_{L^\infty}^2 \big)
	+ \| X \|_{L^\infty} 
	\| m^2 \phi \nabla \phi \|_{L^\infty} 
\\
& \quad
	+ \| X \|_{L^\infty} (\| \tau N^{-1} \phi \|_{L^\infty} + \| \tau \phi'\|_{L^\infty}) ( \| \tau \nabla (N^{-1} \phi )\|_{L^\infty} + \| \tau \nabla \phi' \|_{L^\infty} )
	+ \| \nabla \hN \|_{L^\infty} \| \tau^2 \phi \nabla \phi \|_{L^\infty}
\\
& \quad
	+|\tau| \| \tau \phi' \|_{L^\infty} \| \Delta \phi \|_{L^\infty}
	+ \tau^2 \| \nabla \phi \|_{L^\infty}^2 
	+ | \tau | \| \nabla \phi \|_{L^\infty} \| \tau \nabla \phi' \|_{L^\infty}
	+ \| \tau \phi' \|_{L^\infty} \| \tau \nabla \phi \|_{L^\infty} \| \nabla \hN \|_{L^\infty}
\\
& \quad
	+ \tau^2 \| \phi \|_{L^\infty}^2 
	+ | \tau | \| \tau \phi' \|_{L^\infty} \| \phi \|_{L^\infty}
	+ \| \Sigma \|_{L^\infty} \| \tau \nabla \phi \|_{L^\infty}^2
\\
& \lesssim 
	\| \hN \|_{H^3} \Ephi{3}
	+ \| X \|_{H^2}\Ephi{3}
	+ | \tau | \Ephi{4}
	+ \| \Sigma \|_{H^2} \Ephi{2} \,.
\end{align*}
Lastly we estimate the initial value of $\rho$ 
\begin{align*}
\rho \big|_{T_0} \lesssim
	\big( \| m \phi \|_{L^\infty}^2 + \| \tau N^{-1} \phi \|_{L^\infty}^2 + \| \tau \phi' \|_{L^\infty}^2 \big) \big|_{T_0}  
\lesssim
	\Ephi{2} \big|_{T_0} 
\lesssim \varepsilon^2 e^{4 \gamma T_0 } \,.
\end{align*}
\end{proof}

\section{Energy inequalities}
In this section we will derive decay inequalities for the time derivative $\p_T$ of the modified $L^2$-energy norm of the Klein-Gordon field defined in Section \ref{subsec:modified-energy-KG}. The main results in this section, Propositions \ref{prop:Ezero-KG} and \ref{prop:prelim-est-Ek-KG}, will then be combined with later estimates for the Lapse in Lemma \ref{lem:lapse-est} in order to close the bootstrap argument. 

\subsection{Zeroth order Klein-Gordon energy}
Recall the definition
\begin{align*}
\tEphi{0}
&= 
	\int_\Sigma \tau^2 \left[ 
	(\phi')^2 + g^{ab} \nabla_a \phi \nabla_b \phi + 3  N^{-1}\phi \phi' \hN
	+ \tfrac32  N^{-2} \phi^2\left( \tfrac32-N \right) \right] \sqrt{g}
	+ \int_\Sigma m^2 \phi^2 \sqrt{g}.
\end{align*}

\begin{prop} \label{prop:Ezero-KG}
Assume the bootstrap assumptions \eqref{eq:bootstraps}  hold. Then the zeroth-order estimate holds
\eq{
\p_T \tEphi{0}
 \lesssim 
		 \big( \| \Sigma \|_{H^2} + \| \hN \|_{H^3}
		 + | \tau | \big) \tEphi{0}
}
and thus 
\eq{
\tEphi{0} \big|_T
 \lesssim 
		\tEphi{0} \big|_{T_0} 
		\cdot \exp \Big( C \int_{T_0}^T  e^{(-1+\gamma)s}  ds \Big) \,.
}

\end{prop}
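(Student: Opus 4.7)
The plan is to differentiate $\tEphi{0}$ in $T$, substitute the rescaled Klein-Gordon equation \eqref{eq:rescaledKG-eq2} for every occurrence of $\hdel \phi'$ (and $\hdel \phi = N\phi'$), integrate by parts on the closed manifold $M$ to eliminate second derivatives of $\phi$, and verify that the two modification terms in \eqref{eq:modified-energy-KG} precisely cancel the otherwise problematic source terms. What remains is controlled by the bootstrap assumptions \eqref{eq:bootstraps}.

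First I would write $\p_T = \hdel - \mathcal{L}_X$ on every scalar factor of the integrand and use $\p_T \tau^2 = -2\tau^2$ together with $\p_T \sqrt{g} = (3\hN - \nabla_a X^a)\sqrt{g}$ (which holds since $\mathrm{tr}_g \Sigma = 0$). The shift contributions combine with the $-\nabla_a X^a$ factor from the volume form into exact divergences that vanish upon integration over the closed $M$. Next, substitute \eqref{eq:rescaledKG-eq2} wherever $\hdel \phi'$ appears. The key cancellation is between the term $-\tau^{-2} m^2 N \phi$ on the right of \eqref{eq:rescaledKG-eq2}, which after multiplication by $2\tau^2 \phi'$ gives $-2 m^2 N \phi \phi'$, and the contribution $2 m^2 \phi \hdel \phi = 2 m^2 N \phi \phi'$ coming from $\p_T (m^2 \phi^2)$. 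This removes all $\tau^{-2}$-type terms from the evolution, which would otherwise be catastrophic.

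The second mechanism is the role of the modification terms $3\tau^2 N^{-1}\phi \phi' \hN$ and $\tfrac32 \tau^2 N^{-2}\phi^2(\tfrac32 - N) = -\tfrac92 \tau^2 N^{-2} \phi^2 \hN$ in $\tEphi{0}$. After substituting \eqref{eq:rescaledKG-eq2}, one is left with unwanted non-decaying terms of the form $\tfrac32 \tau^2 \phi' \phi$, $\tfrac{15}{4}\tau^2 N^{-1} \phi \phi'$ and $\tfrac92 \tau^2 N^{-1} \phi \phi' \hN$, together with their analogues coming from integrations by parts of $\phi' \nabla^a(N\nabla_a \phi)$ and from the $\hdel$ hitting $g^{ab}\nabla_a\phi\nabla_b\phi$ (here one needs the commutator relation $\hdel \nabla_b \phi = \nabla_b(N\phi') = N\nabla_b \phi' + \phi' \nabla_b N$ and the evolution $\hdel g^{ab} = -2N\Sigma^{ab} - 2 \hN g^{ab}$ to first order). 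Direct algebraic bookkeeping shows these terms are exactly absorbed by $\p_T$ applied to the two modification terms, leaving behind only multilinear expressions with at least one factor among $\hN$, $\nabla \hN$, $\Sigma$, or an explicit power of $\tau$.

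For these remaining terms I apply Cauchy-Schwarz, Hölder and Sobolev embedding $H^2 \hookrightarrow L^\infty$ on the closed three-manifold $M$, placing one small factor in $L^\infty$ and the quadratic Klein-Gordon factor in $L^2$ (using the equivalence \eqref{eq:Ephi-equiv} to identify $\|\tau\phi'\|_{L^2}^2 + \|\tau\nabla\phi\|_{L^2}^2 + \|m\phi\|_{L^2}^2$ with $\Ephi{0} \cong \tEphi{0}$ via Lemma \ref{lem:E-equivalence}). This yields the differential inequality. Integration in $T$ and insertion of the bootstrap bound $\|\hN\|_{H^3} + \|\Sigma\|_{H^2} + |\tau| \lesssim \varepsilon e^{(-1+\gamma)T}$ then give the stated exponential bound, which stays finite because $\int_{T_0}^\infty e^{(-1+\gamma)s}\, ds < \infty$. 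The main obstacle is the algebraic verification in the previous paragraph: tracking every coefficient in \eqref{eq:rescaledKG-eq2} through the time derivative and integrations by parts to confirm that the combinations designed into \eqref{eq:modified-energy-KG} annihilate exactly the obstruction terms, without spawning new non-integrable sources.
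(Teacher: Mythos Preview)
Your plan matches the paper's proof: both differentiate $\tEphi{0}$, rewrite $\p_T$ via $\hdel$ using the identity $\p_T\int_\Sigma u\sqrt{g} = 3\int_\Sigma\hN u\sqrt{g} + \int_\Sigma\hdel(u)\sqrt{g}$, substitute \eqref{eq:rescaledKG-eq2}, integrate the $\nabla^a(N\nabla_a\phi)$ term by parts, and check that the modification terms eliminate the dangerous $O(1)$ sources (in particular the $\hdel N$ contributions cancel).

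Two small corrections to your bookkeeping. First, the second modification term is not purely proportional to $\hN$: since $\tfrac32 - N = -\tfrac32 - 3\hN$, one has $\tfrac32\tau^2 N^{-2}\phi^2(\tfrac32-N) = -\tfrac94\tau^2 N^{-2}\phi^2 - \tfrac92\tau^2 N^{-2}\phi^2\hN$, so an $O(1)$ piece is present and participates in the cancellations. Second, the absorption is not literally exact: after the algebra the paper is left with residual terms $\tau^2\int\phi^2$, $\int\tau^2|\phi\phi'|$, and $-2\tau^2\int(\nabla\phi)^2$ that carry no $\hN$ or $\Sigma$ prefactor. The first two gain an extra $|\tau|$ when compared against the mass piece $m^2\phi^2$ of the energy (e.g.\ $\tau^2\phi^2 \lesssim (|\tau|/m)^2\,m^2\phi^2$ and $\tau^2|\phi\phi'| \lesssim |\tau|\,(|\tau\phi'|^2 + m^2\phi^2)/m$), while the third has a favourable sign. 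This is exactly where the $|\tau|\,\tEphi{0}$ contribution in the stated inequality comes from, so your claim that every surviving term carries an ``explicit power of $\tau$'' should be read in this sense.
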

\begin{proof}
The modified energy takes the form 
$$ \tEphi{0} =: \int_\Sigma (\tau^2 f_0(\phi) + m^2 \phi^2) \sqrt{g}$$
where $f_0(\phi)$ is the expression between the square brackets above.  An identity taken from \cite[(6.5)]{AF17}, valid for some function $u$ on $\Sigma$, is the following
\eq{
\p_T \int_\Sigma u \mu_g = 3 \int_\Sigma \hN u \mu_g + \int_\Sigma \hdel(u) \mu_g \,.
}
Thus we find 
\begin{align}
\p_T \tEphi{0}
\notag &= - \int_\Sigma (3- N)(\tau^2 f_0(\phi) + m^2 \phi^2) \sqrt{g} + \int_\Sigma \left( \tau^2 \hdel(f_0) - 2 \tau^2 f_0 + m^2 \hdel(\phi^2) \right) \sqrt{g}
\\
& \leq \| \hN \|_{L^\infty} \tEphi{0} + \Big| \int_\Sigma \tau^2 \hdel(f) - 2 \tau^2 f + m^2 \hdel(\phi^2)  \sqrt{g} \Big| \,.
\label{eq:pT-Ephi}
\end{align}
Another identity taken from \cite[(6.4)]{AF17} and relevant for this and later calculations is
\eq{\label{eq:hdelg-up}
\hdel g^{ab} = -2N \Sigma^{ab} - 2 \hN g^{ab} \,.
}
Using these and the Klein-Gordon equation in the form \eqref{eq:rescaledKG-eq2} we find
\begin{align*}
 \tau^2 & \hdel(f_0) - 2 \tau^2 f_0 + m^2 \hdel (\phi^2)
\\
& =  
	2 \tau^2 \nabla^a (\hdel \phi \nabla_a \phi) 
	+ 3 \tau^2 \nabla^a (\phi \hN \nabla_a \phi) 
	+\tau^2 \nabla_a N  \big( 3 \hN N^{-1} \phi \nabla_a \phi +2 \phi' \nabla_a \phi - \phi \nabla_a \phi \Big)
\\
&\quad
	+ \tau^2 \hN \Big( -3 (\phi')^2 - 5 (\nabla \phi)^2 + 6 N^{-1} \phi \phi' - 9 \hN (N^{-1} \phi) \phi' + \tfrac92 (N-\tfrac32) (N^{-1} \phi)^2 \Big)
\\
& \quad
	- \tau^2 \Big( 
	3  N^{-2} \phi^2 (\tfrac32 - N)
	+2 (\nabla \phi )^2 \Big)
	+ 3 \tau^2 N^{-1} \phi \phi' (2-N)
	- 2N \tau^2 \Sigma^{ab} \nabla_a \phi \nabla_b \phi
	-3 m^2 \hN \phi^2 .
\end{align*}
Note the terms involving $\hdel N$ cancelled. Combining this with \eqref{eq:pT-Ephi} we find
\begin{align*}
\p_T \tEphi{0}
& \lesssim
	 \big( \| \Sigma \|_{H^2} + \| \hN \|_{H^3} \big) \tEphi{0}
	+  \tau^2 \int \phi^2 \sqrt{g} +  \int ||\tau|^{1/2} \phi |\tau|^{3/2} \phi'|\sqrt{g}
\\
& \lesssim 
		 \big( \| \Sigma \|_{H^2} + \| \hN \|_{H^3}
		 + | \tau | \big) \tEphi{0}.
\end{align*}
Applying Gr\"onwall's inequality yields the result. 
\end{proof}

\subsection{Higher order modified Klein-Gordon energies}
Now we calculate the time derivatives of higher-order energy norms for the Klein-Gordon field. Recall the definition \eqref{eq:modified-energy-KG} of the modified $L^2$-energy, which we repeat again here for convenience. 
\begin{align*}
\tilde E_k(\phi) 
 &:=
	\int_\Sigma \tau^2 (-1)^k \Big[ 
	\phi' \Delta^k \phi' - \phi \Delta^{k+1} \phi 
+ 3 \phi'\Delta^k \big( N^{-1} \phi \hN \big) 
	+ \tfrac32 N^{-1}\phi \Delta^k \left( (\tfrac32-N) N^{-1} \phi \right) 
	\Big] \sqrt{g}
\\
& \qquad
	+ \int_\Sigma m^2 (-1)^k \phi \Delta^k \phi \sqrt{g},
\\
\tEphi{\ell} &:= \sum_{k=0}^\ell \tilde E_k(\phi).
\end{align*}

The main energy estimate for the modified higher order energies is given in the following proposition.
\begin{prop} \label{prop:prelim-est-Ek-KG}
Assume the bootstrap assumptions \eqref{eq:bootstraps}  hold, then the higher order energies for $1 \leq \ell \leq N$ satisfy
\eq{
\begin{split}
\p_T \tEphi{\ell}
& \lesssim
	\Big(
	\| \hN \|_{H^3} 
	+  \|\Sigma \|_{H^3} 
	+ \| \Sigma \|_{H^\ell}
	+ \| \hN \|_{H^{\ell+1}}
	+ |\tau|
	\Big)\tEphi{\ell} 
\\
& \quad
	+ \Big( \| \hN \|_{H^{\ell+1}} + \| \Sigma \|_{H^\ell} \Big) \tEphi{3}
	+ |\tau| \varepsilon^4
	+ \sum_{k=1}^\ell \Big| \int_\Sigma B_k \sqrt{g} \Big| ,
\end{split}
}
where $B_k$ denote the border-line terms, which for $k\geq 1$ are defined by
$$ 
B_k:= m^2 \phi' [N, \Delta^k] \phi .
$$

\end{prop}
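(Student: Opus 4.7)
The plan is to compute $\p_T \tilde E_k(\phi)$ for each $1 \leq k \leq \ell$ and sum, mirroring the argument of Proposition \ref{prop:Ezero-KG}. I would first apply the identity $\p_T \int_\Sigma u\,\mu_g = 3\int_\Sigma \hN u\,\mu_g + \int_\Sigma \hdel(u)\,\mu_g$ to each summand of $\tilde E_k(\phi)$, then substitute the rescaled Klein-Gordon equation \eqref{eq:rescaledKG-eq2} wherever $\hdel \phi'$ appears, and use \eqref{eq:hdelg-up} to handle the time derivative of $g^{ab}$ inside the $\Delta^k$-factors. The $\hN$-prefactor coming from the volume identity contributes a harmless $\|\hN\|_{L^\infty}\tEphi{\ell}$-type term after summation.

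The heart of the argument is to identify the cancellations engineered by the modification. Computing $\hdel$ of the leading pair $\phi'\Delta^k\phi' - \phi\Delta^{k+1}\phi$ and integrating by parts $k$-times in conjunction with \eqref{eq:rescaledKG-eq2} produces, up to commutators $[\Delta^k,N]$ and $[\Delta^k,\nabla^a]$, a residual of the schematic form $\phi'\Delta^k(N^{-1}\phi\hN)$, $\phi'\Delta^k(N^{-1}\phi)$, $\phi'\Delta^k(\phi\,\hdel N^{-1})$, and $\tau^{-2}m^2 \phi'\Delta^k(N\phi)$. The first three residuals cancel against $\hdel$ of the two modification terms $3\phi'\Delta^k(N^{-1}\phi\hN)$ and $\tfrac32 N^{-1}\phi\Delta^k((\tfrac32-N)N^{-1}\phi)$ in \eqref{eq:modified-energy-KG}; the fourth cancels $\hdel$ of the pure mass term $m^2\phi\Delta^k\phi$ after writing $\Delta^k(N\phi) = N\Delta^k\phi + [\Delta^k,N]\phi$. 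The left-over commutator $m^2\phi'[N,\Delta^k]\phi$ is precisely the border-line term $B_k$ isolated in the statement.

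The remaining non-cancelling contributions are nonlinear errors which are estimated by writing each as a product of a low-regularity geometric factor ($\hN$, $X$, $\Sigma$, or $g-\gamma$) and a high-regularity matter factor. When the top derivatives fall on the matter, one places the geometric factor in $L^\infty$ via $H^2\hookrightarrow L^\infty$ and pairs the matter factor against $\tEphi{\ell}^{1/2}$, using Lemma \ref{lem:L2norm-laplacian-equiv} and Lemma \ref{lem:E-equivalence} to convert between derivative norms and the modified energy; this yields the $(\|\hN\|_{H^3}+\|\Sigma\|_{H^3}+|\tau|)\tEphi{\ell}$ contribution. When the top derivatives instead fall on $\hN$ or $\Sigma$, only a few derivatives can land on $\phi$, which is then placed in $L^\infty$ via $H^3$-control, producing the $(\|\hN\|_{H^{\ell+1}}+\|\Sigma\|_{H^\ell})\tEphi{3}$ contribution. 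Terms quadratic in $\hN$ or carrying $\hdel N^{-1}$ gain an extra factor of $|\tau|$ through the lapse equation \eqref{eq:EoM-lapse}, which together with the bootstrap assumptions \eqref{eq:bootstraps} yields the residual $|\tau|\varepsilon^4$.

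The main obstacle is a careful bookkeeping of the commutators $[\Delta^k,N]$, $[\Delta^k,\nabla^a]$, and $[\Delta^k,\hdel]$: one must verify that every non-isolated commutator either reduces to something estimable by $\tEphi{\ell}$ (or $\tEphi{3}$) with a low-regularity prefactor, or collapses into the isolated border-line term $B_k$. The $g$-dependence of $\Delta^k$ adds an extra layer because $[\hdel,\Delta^k]$ generates factors of $\Sigma$ and $\hN$ via \eqref{eq:hdelg-up}, distributed across all intermediate orders between $0$ and $k$; these are benign under the bootstraps and contribute only errors of the type already accounted for on the right-hand side of the claim.
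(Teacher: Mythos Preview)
Your proposal is correct and follows essentially the same route as the paper's proof: apply the volume identity to each summand of $\tilde E_k(\phi)$, substitute the rescaled Klein--Gordon equation \eqref{eq:rescaledKG-eq2}, use \eqref{eq:hdelg-up} to commute $\hdel$ past $\Delta^k$, observe that the $\hdel N$-terms cancel thanks to the modification, isolate the single borderline commutator $B_k = m^2\phi'[N,\Delta^k]\phi$, and estimate the remaining error integrals by distributing derivatives and Sobolev embedding. The paper organizes this last step by grouping the errors into explicit families $I_k^1, I_k^2, I_k^3, C_k^1, C_k^2$ and proving a separate lemma for each, but your description of the estimation strategy matches theirs; one small point is that the residual $|\tau|\varepsilon^4$ arises from product estimates under the bootstraps (e.g.\ bounds of the type $\|\tau N^{-1}\phi\hN\|_{H^k}\lesssim |\tau|\tEphi{k}^{1/2}+|\tau|\varepsilon^2$) rather than directly from the lapse equation.
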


\begin{proof}
In a similar way to Proposition \ref{prop:Ezero-KG}, we have 
\begin{align*}
\p_T \tilde E_k(\phi) 
\leq 
	\| \hN \|_{L^\infty} \tilde E_k(\phi) 
	+ \Big| \int_\Sigma  (-1)^k \big( \tau^2 \hdel (f_k) - 2 \tau^2 f_k +  m^2 \hdel ( \phi \Delta^k \phi ) \big) \sqrt{g} \Big| 
\end{align*}
where $f_k$ is the integrand inside the square brackets of $\tilde E_k(\phi)$ above. 
Hence we calculate the final term above and use the Klein-Gordon equation \eqref{eq:rescaledKG-eq2} to simplify.
For some function $u$ we have, by repeated applications of \eqref{eq:hdelg-up},
\begin{align*}
\hdel(\Delta^k u) 
&=	
	\hdel \big( g^{a_1 a_2} \cdots g^{a_{2k-1}a_{2k}} \nabla_{a_1} \nabla_{a_2} \cdots \nabla_{a_{2k-1}} \nabla_{a_{2k}} u  \big)
	\\
& =	
	\Delta^k (\hdel u) 
	-2N \sum_{i=1}^k g^{a_1 a_2} \cdots \Sigma^{a_{2i-1} a_{2i}} \cdots g^{a_{2k} a_{2k-1}}\nabla_{a_1} \nabla_{a_2} \cdots \nabla_{a_{2i}} \nabla_{a_{2i-1}} \cdots \nabla_{a_{2k-1}} \nabla_{a_{2k}} u
\\
& \quad
	 -2\hN |k| \Delta^k u
	+  g^{a_1 a_2} \cdots g^{a_{2k-1}a_{2k}}  [\hdel,  \nabla_{a_1} \nabla_{a_2} \cdots \nabla_{a_{2k-1}} \nabla_{a_{2k}} ] u \,.
\end{align*}
We introduce the following compact notation for the second and last terms. 
\eq{ \begin{aligned}
\Sigma^I \Delta_I^k u &:= 	\sum_{i=1}^k g^{a_1 a_2} \cdots \Sigma^{a_{2i-1} a_{2i}} \cdots g^{a_{2k} a_{2k-1}}\nabla_{a_1} \nabla_{a_2} \cdots \nabla_{a_{2i}} \nabla_{a_{2i-1}} \cdots \nabla_{a_{2k-1}} \nabla_{a_{2k}} u
\\
[\hdel, \Delta^k]u &:= g^{a_1 a_2} \cdots g^{a_{2k-1}a_{2k}}  [\hdel,  \nabla_{a_1} \nabla_{a_2} \cdots \nabla_{a_{2k-1}} \nabla_{a_{2k}} ] u \,.
\end{aligned}
}
Thus
$$
\Big| \int_\Sigma (-1)^k  \big( \tau^2 \hdel (f_k) - 2 \tau^2 f_k + m^2 \hdel ( \phi \Delta^k \phi )\big) \sqrt{g} \Big|
\lesssim 
	I_k^1 + I_k^2 + I^3_k + C^1_k + C^2_k+ \Big| \int_\Sigma B_k \sqrt{g} \Big|
$$ 
where we define the lower-order integrals by
\eq{ \label{eq:Ik-def} \begin{aligned}
I_k^1
&:=
	\Big| \int_\Sigma \tau^2 \hN \big( |k| f_k - \phi \Delta^{k+1} \phi \big) \sqrt{g}\Big|
	+ \Big| \int_\Sigma \tau^2 N \phi' \Sigma^I \Delta^k_I \phi' \sqrt{g} \Big| 
	+  \Big| \int_\Sigma \tau^2 N \phi \Sigma^I \Delta^{k+1}_I \phi \sqrt{g} \Big| 
\\
&  \quad
	+ \Big| \int_\Sigma \tau^2 N \phi' \Sigma^I \Delta^k_I (N^{-1} \phi \hN) \sqrt{g} \Big| 
	+ \Big| \int_\Sigma \tau^2 \phi \Sigma^I \Delta^k_I \big( (\tfrac32-N) N^{-1} \phi \big) \sqrt{g} \Big| 
\\
& \quad
	  +  \Big| \int_\Sigma \hN m^2 \phi \Delta^k \phi \sqrt{g} \Big|
	  + \Big| \int_\Sigma m^2 \phi \Sigma^I \Delta^k_I \phi \sqrt{g} \Big| 
\\
I_k^2
&:= 	\Big| \int_\Sigma  \tau^2 
	 N \Delta \phi \Delta^k (N^{-1} \phi \hN) \sqrt{g}\Big|
	+ \Big|
	\int_\Sigma \tau^2 
	\nabla^a N \nabla_ a\phi \Delta^k \phi'  \sqrt{g}\Big|
	+\Big|
	\int_\Sigma \tau^2 
	\nabla^a N  \nabla_a \phi \Delta^k(N^{-1} \phi \hN) \sqrt{g} \Big|
\\
I_k^3
& :=
	\Big| \int_\Sigma \tau^2 
	 \Big(
	-3 \hN \phi' \Delta^k \phi'
	+ (15-3N) \phi' \Delta^k (\hN N^{-1} \phi)
	+ \tfrac92 (N-\tfrac32) N^{-1} \phi \Delta^k(N^{-1} \phi \hN) \Big) \sqrt{g} \Big|
\\
& \quad
	+ \Big| \int_\Sigma \tau^2 		
	(2-N) N^{-1} \phi \Delta^k \phi' \sqrt{g}\Big|
	+\Big| 
	\int_\Sigma \tau^2 (\tfrac32 -N )\phi \Delta^k (N^{-1} \phi)
	\sqrt{g}\Big|
\end{aligned} }
and the integrals involving commutators by
\eq{ \label{eq:Ck-def} \begin{aligned}
C_k^1 &:=
		 \Big| \int_\Sigma m^2 \phi[\hdel, \Delta^k] \phi  \sqrt{g} \Big|
		+ \Big| \int_\Sigma \tau^2 \Big( \phi' [\hdel,\Delta^k ] \phi' 
	-  \phi [\hdel, \Delta^{k+1}] \phi	
	+ 3 \phi' [\hdel,\Delta^k] \big( N^{-1} \phi \hN\big)
\\
& \quad
	+ \tfrac32 (N^{-1} \phi) [\hdel, \Delta^k] \big( (\tfrac32-N) N^{-1} \phi \big) 	\Big)\sqrt{g} \Big|
	\\
C_k^2 &:=
	\Big| \int_\Sigma \tau^2 \Big(
	- 2 \Delta \phi [\Delta^k, N] \phi' 
	-\tfrac32 N^{-1} \phi [\Delta^k, N] \phi'
	\Big) \sqrt{g} \Big|.
\end{aligned} }
Finally the terms without decaying factors (for example, without factors of $|\tau|$ or $\| \hN \|_{L^\infty}$), and which will require additional care to control, are the following. 
\begin{align*}
B_k &:= 
	 m^2 \phi' [N, \Delta^k] \phi.
\end{align*}
Note the terms involving $\hdel N$ have cancelled with each other. The terms $I_k$ will be controlled using Lemmas \ref{lem:Lk1} and \ref{lem:Lk23}. The commutator terms $C_k$ will be controlled using Lemmas \ref{lem:commutator-hdel} and \ref{lem:commutatorsN} below. Summing these estimates for $k=0$ to $k=\ell$, noting that $k=0$ is covered using Proposition \ref{prop:Ezero-KG}, yields the claim. 
\end{proof}

\subsection{Auxiliary lemmas}
As mentioned in the foregoing proof we require a series of lemmas that are used in the proof of the main energy estimate above. We list and prove those in the following. The main strategy throughout this section is to integrate by parts on each term and distribute $k \geq 1$ derivatives while also making use of the Sobolev embeddings $H^2 \hookrightarrow L^\infty$ and $H^1 \hookrightarrow L^4$.

\begin{lem} \label{lem:common-Hk-est}
For $k \geq 1$ and general functions $v, u, $ and $w$ we have
\begin{align*}
&\Big| \int_\Sigma vu \Delta^k w \sqrt{g} \Big|
	\lesssim
	\| v \|_{L^\infty} 
	\|  u \|_{H^{k}} \|  w \|_{H^{k}}
	+ \| u|_{L^\infty} 
	\| v \|_{H^{k}} \|  w \|_{H^{k}}
	+ \| v\|_{H^k}
	\|  u \|_{H^k}
	\| w \|_{H^k}.
\end{align*}
The sums involving $|I| , |J| \leq k-1$ do not appear if $k=1$. 
Also assuming the bootstrap assumptions \eqref{eq:bootstraps}  hold, then for $k \leq N$
\begin{align} \label{eq:est-Ninv-phi}
\| \tau N^{-1} \phi \|_{H^k}
&\lesssim
	\big( |\tau| + \| \hN \|_{H^{k+1}}\big) \tEphi{k}^{1/2}
\\
\| \tau (3-N) N^{-1} \phi  \|_{H^k} \label{eq:est-NhN}
& \lesssim
	|\tau| \tEphi{k}^{1/2} 
	+ | \tau | \varepsilon^2
\\
\| \tau (\alpha -N) N^{-1} \phi \|_{H^k} \label{eq:est-NalphaN}
& \lesssim
		\big( |\tau| + \| \hN \|_{H^{k+1}}\big) \tEphi{k}^{1/2}
		+ \| \hN \|_{H^k}\tEphi{1}^{1/2} 
\end{align}
where $\alpha \neq 3$ .
\end{lem}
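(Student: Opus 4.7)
My plan for the first bound is to integrate by parts so that $k$ derivatives land on each side, then apply the Leibniz rule. Writing $k = 2\fk + \oddk$, $k$ successive integrations by parts yield
\eq{
\int_\Sigma v u \,\Delta^k w\,\sqrt g = \pm\int_\Sigma \nabla^{\oddk}\Delta^{\fk}(vu)\cdot\nabla^{\oddk}\Delta^{\fk}w\,\sqrt g,
}
modulo commutators with the Levi-Civita connection, whose coefficients involve the curvature of $g$ and are controlled by the bootstrap assumptions \eqref{eq:bootstraps}. Leibniz then expands $\nabla^{\oddk}\Delta^{\fk}(vu) = \sum_{|I|+|J|=k}c_{IJ}\nabla^I v\,\nabla^J u$, and I would estimate the three ranges of $(|I|,|J|)$ by H\"older. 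The case $|I|=0$ gives $\|v\|_{L^\infty}\|u\|_{H^k}\|w\|_{H^k}$; the case $|J|=0$ is symmetric. For $|I|,|J|\geq 1$ (which only occurs when $k\geq 2$), the $L^4$-$L^4$-$L^2$ H\"older inequality combined with the Sobolev embedding $H^1\hookrightarrow L^4$ in three dimensions yields $\|\nabla^I v\|_{L^4}\|\nabla^J u\|_{L^4}\|w\|_{H^k}\lesssim\|v\|_{H^{|I|+1}}\|u\|_{H^{|J|+1}}\|w\|_{H^k}\leq\|v\|_{H^k}\|u\|_{H^k}\|w\|_{H^k}$, since $|I|+1,|J|+1\leq k$. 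For $k=1$ only the first two $L^\infty$ terms are present and the cross term is vacuous.

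For estimate \eqref{eq:est-Ninv-phi} my plan is to expand $N^{-1} = \tfrac13 - \tfrac{\hN}{N}$ using $N=3(1+\hN)$, reducing the task to controlling $\|\tau\phi/3\|_{H^k}$ and $\|\tau\hN\phi/N\|_{H^k}$ separately. The factor $|\tau|$ in the first piece is extracted by invoking the mass part of the equivalence from Lemma \ref{lem:L2norm-laplacian-equiv}: $\|m\phi\|_{H^k}\lesssim\tEphi{k}^{1/2}$, so $\|\tau\phi\|_{H^k}\lesssim |\tau|\tEphi{k}^{1/2}$ treating $m$ as a fixed constant. For the second piece I would apply the Moser product estimate $\|\hN\tau\phi/N\|_{H^k}\lesssim\|\hN/N\|_{L^\infty}\|\tau\phi\|_{H^k}+\|\hN/N\|_{H^k}\|\tau\phi\|_{L^\infty}$, controlling derivatives of $N^{-1}$ via \eqref{eq:nablaN-inverse}, and embedding $\|\tau\phi\|_{L^\infty}\lesssim\|\tau\phi\|_{H^2}\lesssim\tEphi{1}^{1/2}$. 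This produces the $\|\hN\|_{H^{k+1}}\tEphi{k}^{1/2}$ correction in \eqref{eq:est-Ninv-phi}.

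Estimates \eqref{eq:est-NhN} and \eqref{eq:est-NalphaN} follow from the same toolbox. For \eqref{eq:est-NhN}, the identity $3-N=-3\hN$ makes the prefactor itself bootstrap-small; after factoring out $|\tau|$ and running the Moser product rule as above, one obtains a bound of the form $|\tau|(\|\hN\|_{H^k}\tEphi{1}^{1/2}+\|\hN\|_{L^\infty}\tEphi{k}^{1/2})$, which is absorbed into $|\tau|\tEphi{k}^{1/2}+|\tau|\varepsilon^2$ using the bootstraps $\|\hN\|_{L^\infty},\|\hN\|_{H^k}\lesssim\varepsilon$ and $\tEphi{1}^{1/2}\lesssim\varepsilon$. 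For \eqref{eq:est-NalphaN} I split $\alpha-N=(\alpha-3)-3\hN$: the nonzero-constant piece is handled directly by \eqref{eq:est-Ninv-phi}, giving the $(|\tau|+\|\hN\|_{H^{k+1}})\tEphi{k}^{1/2}$ contribution, while the $\hN$-piece generates the additional $\|\hN\|_{H^k}\tEphi{1}^{1/2}$ term via the $L^\infty$ part of the Moser estimate on $\tau\phi$.

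The main technical obstacle I anticipate is careful regularity bookkeeping for inverse powers of $N$: estimate \eqref{eq:nablaN-inverse} shows that every additional derivative on $N^{-1}$ costs one more derivative of $\hN$, which is ultimately why $\|\hN\|_{H^{k+1}}$ (rather than $\|\hN\|_{H^k}$) appears on the right-hand sides of \eqref{eq:est-Ninv-phi} and \eqref{eq:est-NalphaN}. A minor nuisance is the endpoint case $k=1$ in the first inequality, where no Leibniz multi-index with both $|I|,|J|\geq 1$ exists; there the simpler $L^\infty$-$L^2$-$L^2$ H\"older alone suffices.
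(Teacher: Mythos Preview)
Your proposal is correct and follows essentially the same route as the paper: integrate by parts to distribute $k$ derivatives, expand via Leibniz, and estimate the three ranges of indices with $L^\infty$--$L^2$--$L^2$ and $L^4$--$L^4$--$L^2$ H\"older together with $H^1\hookrightarrow L^4$; then handle the product norms \eqref{eq:est-Ninv-phi}--\eqref{eq:est-NalphaN} by separating the zero-derivative contribution from the terms where at least one derivative lands on $N^{-1}$ (controlled via \eqref{eq:nablaN-inverse}). The only cosmetic difference is that you first rewrite $N^{-1}=\tfrac13-\hN/N$ and $\alpha-N=(\alpha-3)-3\hN$ algebraically before applying a Moser product estimate, whereas the paper works directly with the Leibniz expansion of $\nabla^I(N^{-1}\phi)$; both lead to the same bounds.
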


\begin{proof}
Using the Sobolev embeddings $H^2 \hookrightarrow L^\infty$ and $H^1 \hookrightarrow L^4$ and integration by parts on general functions $v, u, $ and $w$ gives, for $k \geq 1$,
\begin{align*}
&\Big| \int_\Sigma vu \Delta^k w \sqrt{g} \Big|
	\lesssim
	\int_\Sigma \Big| (\nabla)^\oddk \Delta^\fk (v u)
	(\nabla)^\oddk  \Delta^\fk (w) \Big| \sqrt{g}	
\\
& \quad \lesssim
	\| v \|_{L^\infty} 
	\|  u \|_{H^{k}} \|  w \|_{H^{k}}
	+ \| u\|_{L^\infty} 
	\| v \|_{H^{k}} \|  w \|_{H^{k}}
	+ \sum_{1 \leq |I| \leq k-1} \|  \nabla^I v\|_{L^4} 
	\sum_{1 \leq |J|\leq k-1} \| \nabla^{J} \phi\|_{L^4}
	\|  w \|_{H^{k}}
\\
& \quad \lesssim
	\| v \|_{L^\infty} 
	\|  u \|_{H^{k}} \|  w \|_{H^{k}}
	+ \| u|_{L^\infty} 
	\| v \|_{H^{k}} \|  w \|_{H^{k}}
	+ \| v\|_{H^k}
	\|  u \|_{H^k}
	\| w \|_{H^k}.
\end{align*}
In general sums involving $|I| , |J| \leq k-1$ do not appear if $k=1$. 
We also have
\begin{align*}
\| \tau N^{-1} \phi \hN \|_{H^k}
& \lesssim
	\| \tau \phi \|_{H^k} + \| \tau N^{-1} \phi \|_{H^k}
\\
& \lesssim 
	\frac{|\tau|}{m^2} \tEphi{k}^{1/2} 
	+ |\tau| \| N^{-1} \| _{L^\infty} \tEphi{k}^{1/2} 
	+ \| \tau \phi \|_{L^\infty} \| \hN \|_{H^k} 
	+ |\tau| \| \hN \|_{H^k} \tEphi{k}^{1/2} 
\\
& \lesssim 
	|\tau| \tEphi{k}^{1/2} 
	+ | \tau | \varepsilon^2.
\end{align*}
and also
\begin{align*}
\| \tau N^{-1} \phi \|_{H^k}
&=
	|\tau| \| N^{-1} \|_{L^\infty} \| \phi\|_{H^k} 
	+ \Big( \sum_{|I|+|J| +1\leq k} \int_\Sigma \tau^2 | \nabla^{I+1} N^{-1}|^2| \nabla^{J} \phi|^2 \sqrt{g} \Big)^{1/2} 
\\
& \lesssim 
	|\tau| \tEphi{k}^{1/2} 
	+ \sum_{1 \leq |I|\leq k} \| \nabla^I N^{-1} \|_{L^4}
	\sum_{|J|\leq k-1} \| \tau \nabla^J \phi \|_{L^4}
\\
&\lesssim
	\big( |\tau| + \| \hN \|_{H^{k+1}}\big) \tEphi{k}^{1/2}. 
\end{align*}

Finally although $\hN$ is small, there are several terms involving $N- \alpha$ where $\alpha \neq 3$. In this case we take care to extract $N-\alpha$ in $L^\infty$ when no derivatives hit it, but when derivatives do hit this term we note that $\nabla (N-\alpha) = \nabla \hN$ and this is small. 
\begin{align*}
\| \tau (\alpha - N) N^{-1} \phi \|_{H^k}
& \lesssim
	\| \alpha -N \|_{L^\infty} \| \tau N^{-1} \phi \|_{H^k}
	+ \| \tau N^{-1} \phi \|_{L^\infty} \| \hN \|_{H^k}
	+ \| \tau N^{-1} \phi \|_{H^k} \| \hN\|_{H^k}
\\
& \lesssim
	\| \tau N^{-1} \phi \|_{H^k}
	+ \| N^{-1} \|_{L^\infty} \| \tau \phi \|_{H^2} \| \hN \|_{H^k}
\\
& \lesssim
		\big( |\tau| + \| \hN \|_{H^{k+1}}\big) \tEphi{k}^{1/2}
		+ \| \hN \|_{H^k}\tEphi{1}^{1/2} .
\end{align*}
\end{proof}

The first set of lower-order integrals $I_k^1$ are controlled in the following Lemma. 
\begin{lem} \label{lem:Lk1}
\eq{ I_k^1 \lesssim 	
	\big( \| \hN \|_{H^2} + \| \Sigma \|_{H^2} \big) \tEphi{k} 
	+ \big( \| \hN \|_{H^k} + \| \Sigma \|_{H^k} \big) \tEphi{2}
	+ \big( \| \hN \|_{H^k}+ \| \Sigma \|_{H^k} \big)  \tEphi{k}.
}
\end{lem}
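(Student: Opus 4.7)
The plan is to bound each of the seven integrals comprising $I_k^1$ separately and add the resulting estimates. Throughout, we use the equivalence \eqref{eq:Ephi-equiv} of Lemma~\ref{lem:L2norm-laplacian-equiv}, the product/commutator estimate of Lemma~\ref{lem:common-Hk-est}, and the auxiliary bounds \eqref{eq:est-Ninv-phi}--\eqref{eq:est-NalphaN}. The underlying mechanism is uniform: every integrand is a product of a small geometric factor ($\hN$ or $\Sigma$) with $2k$-th (or $(2k{+}1)$-th) order differential expressions in $\phi$ and $\phi'$, so after integration by parts to redistribute derivatives one applies Sobolev embedding $H^2 \hookrightarrow L^\infty$ (and sometimes $H^1 \hookrightarrow L^4$) in three different ways, yielding exactly the three terms appearing on the right-hand side of the lemma.

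First I would dispatch the two pure-$\hN$ integrals (the first and sixth). For $\int \tau^2 \hN(|k| f_k - \phi \Delta^{k+1}\phi)\sqrt{g}$, integration by parts on the second piece yields $\tau^2 \nabla(\hN \phi)\cdot \nabla\Delta^k \phi$-type expressions, and distributing $k$ derivatives via the standard Leibniz/Sobolev estimate gives either $\|\hN\|_{L^\infty}\|\tau\phi\|_{H^{k+1}}^2$ or $\|\hN\|_{H^k}\|\tau\phi\|_{H^2}\|\tau\phi\|_{H^{k+1}}$, both of which sit inside the claimed bound once $\|\hN\|_{L^\infty}\lesssim \|\hN\|_{H^2}$ is invoked. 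The $m^2\hN \phi \Delta^k \phi$ integral is handled the same way after integration by parts, using the mass piece of $\tEphi{k}$.

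Next I turn to the five integrals containing the contracted $\Sigma$-derivative operator $\Sigma^I\Delta^k_I$. The key observation is that $\Sigma^I \Delta^k_I u$ is a sum of terms of the form $g^{\cdots}\Sigma^{ab}\nabla\cdots \nabla_a\nabla_b \cdots u$ where $\Sigma$ occupies one pair of contraction slots and $2k$ covariant derivatives fall on $u$. Integrating by parts until $k$ (or $k{+}1$) derivatives have been moved onto the other factor, the resulting expressions are estimated via Lemma~\ref{lem:common-Hk-est} with $v=\Sigma$; this yields three types of contribution: $\|\Sigma\|_{L^\infty}\cdot(\text{top-order})\lesssim\|\Sigma\|_{H^2}\tEphi{k}$, a `companion' term $\|\Sigma\|_{H^k}\tEphi{2}$ coming from placing the non-$\Sigma$ factor in $L^\infty$, and the mixed $\|\Sigma\|_{H^k}\tEphi{k}$ term from the $L^4\times L^4$ piece. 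The two integrals that involve compound factors $N^{-1}\phi\hN$ or $(\tfrac32-N)N^{-1}\phi$ are first reduced to pure $\tau\phi$-expressions using \eqref{eq:est-Ninv-phi} and \eqref{eq:est-NalphaN}, which only introduces additional $\|\hN\|_{H^{k+1}}$ factors that are absorbed into $\|\hN\|_{H^k}\tEphi{k}$.

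The main technical obstacle I anticipate is the bookkeeping for the $\Sigma^I\Delta^k_I$-integrals when $k$ is large: the highest-order term in $\Delta^{k+1}_I\phi$ involves $2k{+}2$ derivatives on $\phi$, which is one order above the natural energy level for $\phi$ appearing in $\tEphi{k}^{1/2}\cong \|\tau\phi\|_{H^{k+1}}+\cdots$. One must be careful that after integrating by parts exactly once in the $\Sigma$-slot, the derivative count drops to $2k{+}1$ and is then symmetrically redistributed as $H^{k+1}\times H^k$; this is precisely where the $\|\Sigma\|_{H^k}\tEphi{k}$ contribution (rather than $\tEphi{k+1}$) arises. Once that step is executed correctly, summing the seven estimates produces the claimed bound.
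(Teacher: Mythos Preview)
Your proposal is correct and follows essentially the same approach as the paper: both treat each of the seven integrals in $I_k^1$ by integrating by parts to redistribute derivatives and then invoking the trilinear product estimate of Lemma~\ref{lem:common-Hk-est}, with the paper illustrating this explicitly only for the $\hN m^2\phi\Delta^k\phi$ and $m^2\phi\,\Sigma^I\Delta^k_I\phi$ terms and declaring the rest ``similar.'' Your more careful discussion of the $\Sigma^I\Delta^{k+1}_I\phi$ term (where one must avoid losing a derivative to $\tEphi{k+1}$) is a useful elaboration of a point the paper leaves implicit.
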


\begin{proof}
The results of Lemma \ref{lem:common-Hk-est} allow us to easily obtain
\begin{align*}
\Big| & \int_\Sigma \hN m^2 \phi \Delta^k \phi \sqrt{g} \Big| + \Big| \int_\Sigma \tau^2 \hN \big( |k| f_k - \phi \Delta^{k+1} \phi \big) \sqrt{g}\Big|
\\
& \quad
\lesssim
	\| \hN \|_{H^2} \tEphi{k} 
	+ \| \hN \|_{H^k} \tEphi{2}
	+ \| \hN \|_{H^k} \tEphi{k}.
\end{align*}
The remaining terms involving contractions with $\Sigma^{ab}$ can similarly be estimated. For example
\begin{align*}
\Big| \int_\Sigma m^2 \phi \Sigma^I \Delta^k_I \phi \sqrt{g} \Big| 
& \lesssim
	\Big( \| \Sigma \|_{H^2} \| m \phi \|_{H^k} + \| m \phi \|_{L^\infty} \| \Sigma \|_{H^k} + \| m \phi \|_{H^k} \| \Sigma \|_{H^k} \Big) \| m \phi \|_{H^k}
\\
& \lesssim
		\| \Sigma \|_{H^2} \tEphi{k} 
	+ \| \Sigma \|_{H^k} \tEphi{2}
	+ \| \Sigma \|_{H^k} \tEphi{k}.
\end{align*}
\end{proof}

\begin{lem} \label{lem:Lk23}
Assume the bootstrap assumptions \eqref{eq:bootstraps}  hold, then the following estimate holds.
\begin{align*}
I_k^2+I_k^3
& \lesssim
	\big( \|\hN\|_{H^{k+1}} 
	+ \| \hN \|_{H^3} 
	\big)\tEphi{k}
	+ \| \hN \|_{H^{k+1}} \tEphi{3}
	+ |\tau| \varepsilon^4
\end{align*}
\end{lem}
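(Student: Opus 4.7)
The plan is to bound each of the integrals in \eqref{eq:Ik-def} that comprise $I_k^2$ and $I_k^3$ by integrating by parts, distributing the $2k$ Laplacian derivatives so that $\oddk + 2\fk = k$ land on each factor, and then applying the product-type estimates of Lemma \ref{lem:common-Hk-est} together with the specialized bounds \eqref{eq:est-Ninv-phi}, \eqref{eq:est-NhN}, and \eqref{eq:est-NalphaN}. The bootstrap assumptions \eqref{eq:bootstraps} then convert the intermediate bounds into the stated form. The overall guiding principle is that every integrand in $I_k^2 + I_k^3$ must be shown to carry \emph{either} an explicit factor of $\hN$ or $\nabla \hN$, \emph{or} a hidden $|\tau|$ inherited from the $|\tau|$-weights in \eqref{eq:Ephi-equiv}.

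For $I_k^2$ the situation is clean: each integrand is linear in $\hN$ or $\nabla N = \nabla \hN$, so the smallness is explicit. I would split each term according to whether most derivatives hit the $\hN$-containing factor or the $\phi$-containing factor. In the first case one gets a contribution of the form $\|\hN\|_{H^{k+1}}\tEphi{k}$ (or $\|\hN\|_{H^{k+1}}\tEphi{3}$ from the mixed low-high estimate of Lemma \ref{lem:common-Hk-est}); in the second case, the remaining low-derivative factor of $\hN$ is pulled out via $H^2 \hookrightarrow L^\infty$, yielding a $\|\hN\|_{H^3}\tEphi{k}$ contribution. These are exactly the first two families of terms on the right-hand side of the claim.

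The more delicate set is $I_k^3$, because several coefficients such as $(2-N)$ and $(\tfrac32-N)$ are not themselves small. The smallness must then be extracted from the companion factor $\tau \phi$ or $\tau N^{-1}\phi$, whose $H^k$-norm is $\tEphi{k}^{1/2}$-controlled (via \eqref{eq:Ephi-equiv}) with an explicit $|\tau|$, combined with the observation that derivatives of such coefficients are derivatives of $\hN$. Estimate \eqref{eq:est-NalphaN} packages precisely this pairing and delivers the schematic bound $(|\tau| + \|\hN\|_{H^{k+1}})\tEphi{k}^{1/2} + \|\hN\|_{H^k}\tEphi{1}^{1/2}$. Multiplying by $\|\tau\phi'\|_{H^k} \lesssim \tEphi{k}^{1/2}$ or $\|\tau N^{-1}\phi\|_{H^k}$ controlled by \eqref{eq:est-Ninv-phi}, reproduces contributions of the desired shape.

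The principal technical obstacle is the careful bookkeeping to ensure no loss of regularity in the distribution of derivatives, and, most crucially, to identify the origin of the inhomogeneous $|\tau|\varepsilon^4$ term. This term arises from the quadratic cross contribution $\|\tau\phi\|_{L^\infty}\|\hN\|_{H^k}$ inside estimate \eqref{eq:est-NhN}: when one such $|\tau|\varepsilon^2$ factor meets a further $\varepsilon^2$-bounded factor controlled by the bootstraps, the product is $|\tau|\varepsilon^4$ rather than absorbable into $\tEphi{k}$. Once this accounting is done, summing the estimates over all the integrals listed in \eqref{eq:Ik-def} and absorbing constants yields the claim.
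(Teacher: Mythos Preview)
Your overall strategy matches the paper's: integration by parts plus Lemma \ref{lem:common-Hk-est} together with the auxiliary bounds \eqref{eq:est-Ninv-phi}--\eqref{eq:est-NalphaN}. There is, however, one place where your stated plan of \emph{symmetric} distribution (placing $k$ derivatives on each factor) does not go through, namely the first integral in $I_k^2$,
\[
\int_\Sigma \tau^2\, N\Delta\phi\,\Delta^k\bigl(N^{-1}\phi\,\hN\bigr)\sqrt{g}.
\]
Moving $k$ derivatives onto $N\Delta\phi$ forces you to control $\|\tau\phi\|_{H^{k+2}}$, which is bounded only by $\tEphi{k+1}^{1/2}$, not $\tEphi{k}^{1/2}$. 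The paper handles this by integrating by parts only $k-1$ times, so that the pairing becomes
\[
\|\tau N\Delta\phi\|_{H^{k-1}}\,\|\tau N^{-1}\hN\phi\|_{H^{k+1}}.
\]
The first factor then costs only $\|\tau\phi\|_{H^{k+1}}\lesssim\tEphi{k}^{1/2}$, while the second is controlled via
\[
\|\tau N^{-1}\hN\phi\|_{H^{k+1}} \lesssim \|\hN\|_{H^{k+1}}\tEphi{k}^{1/2} + |\tau|\varepsilon^2,
\]
which is admissible precisely because the lapse is carried at one order higher than $\phi$. Your remark that the ``careful bookkeeping to ensure no loss of regularity'' is the obstacle is right, but the asymmetric split is the concrete mechanism you need to make it work here. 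Apart from this point, your treatment of $I_k^3$ via \eqref{eq:est-NalphaN} and your identification of the origin of the $|\tau|\varepsilon^4$ term through \eqref{eq:est-NhN} are both in line with the paper.
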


\begin{proof}
 We make frequent use of the identities from Lemma \ref{lem:common-Hk-est} and also the identity \eqref{eq:nablaN-inverse} for derivatives of $N^{-1}$. 
For the first term in $L^2_k$, we integrate by parts only $k-1$ times so that we avoid terms such as $\| \tau \phi \|_{H^{k+2}}$ since this is only controlled by $\tEphi{k+1}$. 
\begin{align*}
&\left| \int_\Sigma (-1)^k \tau^2 N \Delta \phi \Delta^k ( N^{-1} \phi \hN ) \sqrt{g} \right|
\lesssim
	\| \tau N \Delta \phi \|_{H^{k-1}} \| \tau (N^{-1} \hN) \phi \|_{H^{k+1}}
\\
& \quad \lesssim
	\Big( \| N \|_{L^\infty} \| \tau \phi \|_{H^{k+1}}
	+ \| \tau \Delta \phi \|_{L^\infty} \sum_{1 \leq |I|\leq k-1} \| \nabla^I N\|_{L^2}
	+ \sum_{1 \leq |I|\leq k-2} \| \nabla^I N\|_{L^4}
	\sum_{1 \leq |J|\leq k-2} \| \tau \nabla^J \Delta \phi \|_{L^4}
	\Big)
\\
& \quad \quad \times
	\Big( 
		\| N^{-1} \hN \|_{L^\infty} \| \tau \phi \|_{H^{k+1}}
		+ |\tau| \| \phi \|_{L^\infty} \sum_{1 \leq |I|\leq k+1} \| \nabla^I N^{-1} \|_{L^2}
		+ \sum_{1 \leq |I|\leq k} \| \nabla^I N^{-1} \|_{L^4}
	\sum_{1 \leq |J|\leq k} \| \tau \nabla^J  \phi \|_{L^4}
	\Big)
\\
& \quad \lesssim
	\Big( \| \tau \phi \|_{H^{k+1}} 
	+ \| \tau \phi \|_{H^4} \| \hN \|_{H^{k-1}} \Big)
	\Big(
		|\tau| \| \phi \|_{H^2} \| \hN \|_{H^{k+1}}
		+ \| \hN \|_{H^{k+1}} \| \tau \phi \|_{H^{k+1}}
		\Big)
\\
& \quad \lesssim
	\| \hN \|_{H^{k+1}} \tEphi{3}
	+ \| \hN \|_{H^{k+1}} \Ephi{k}	.
\end{align*}
Sums involving $|I| \leq k-1$ or $|I| \leq k-2$ do not exist for $k =1$ and $k=2$ respectively.  The key point above is the estimate 
$$
\| \tau N^{-1} \hN \phi \|_{H^{k+1}} 
\lesssim 
\| \tau \phi \|_{H^{k+1}}\| \hN \|_{H^{k+1}} + |\tau| \varepsilon^2
\lesssim
\tEphi{k}^{1/2} \| \hN \|_{H^{k+1}} + | \tau | \varepsilon^2.
$$ 
Note the first term $\tEphi{k}^{1/2} \| \hN \|_{H^{k+1}}$ above is worse than the term $\tEphi{k}^{1/2} |\tau|$ from \eqref{eq:est-NhN}. This is because have more derivatives to distribute and so we must allow for a term with both high derivatives in $\hN$ and $\phi$. 

For the remaining terms of $L_k^1$ we integrate by parts $k$ times to obtain
\begin{align*}
&\Big|\int_\Sigma \tau^2 
	\nabla^a N \nabla_ a\phi \Delta^k \phi'  \Big|
	+\Big|
	\int_\Sigma \tau^2 
	\nabla^a N  \nabla_a \phi \Delta^k(N^{-1} \phi \hN) \Big|
\\
& \lesssim \Big(
	\| \nabla N \|_{L^\infty} \| \tau \nabla \phi \|_{H^k} 
	+ \| \tau \nabla \phi \| _{L^\infty} \| \nabla N \|_{H^k}
	+ \| \nabla N \|_{H^k} \| \tau \nabla \phi \|_{H^k}
	\Big)
	\big( \| \tau \phi' \|_{H^k} 
	+ \| \tau N^{-1} \phi \hN \|_{H^k}
	\big)
\\
& \lesssim	
	\big( 
	\| \hN \|_{H^3} + \| \hN \|_{H^{k+1}} 
	\big) \Ephi{k}
	+ |\tau| \| \hN \|_{H^{k+1}} \tEphi{3}
	+ \varepsilon^4 |\tau| \| \hN \|_{H^2}.
\end{align*}
Thus
\begin{align*}
I_k^2
& \lesssim
	\big( \|\hN\|_{H^{k+1}} 
	+ \| \hN \|_{H^3} 
	\big)\tEphi{k}
	+ \| \hN \|_{H^{k+1}} \tEphi{3}
	+ |\tau| \varepsilon^4.
\end{align*}
Turning to $I^3_k$, we see that the last two terms contain no factors of $\hN$. We estimate these using \eqref{eq:est-NalphaN} to obtain
\eq{ \label{eq:est-2-N} \begin{aligned}
\Big| \int_\Sigma \tau^2 		
	(2-N) N^{-1} \phi \Delta^k \phi' \sqrt{g}\Big|
& \lesssim
\| \tau (2-N)N^{-1} \phi  \|_{H^k} \| \tau \phi' \|_{H^k}
\\
& \lesssim
	\big( |\tau| + \| \hN \|_{H^{k+1}}\big) \tEphi{k}
	+ \| \hN \|_{H^k} \tEphi{2}. 
\end{aligned}}
The other terms of $I^3_k$ are estimated in a similar manner, using instead \eqref{eq:est-NhN}.  Thus
\begin{align*}
I^3_k
& \lesssim
	\big( \|\hN\|_{H^{k+1}} 
	+ |\tau| 
	+ \| \hN \|_{H^2} 
	\big)\tEphi{k}
	+ \| \hN \|_{H^k} \tEphi{2}
	+ |\tau| \varepsilon^4.
\end{align*}

\end{proof}

We now estimate the commutator terms $C_k$, divided into those commutators of the form $[\hdel, \Delta^k]$, see Lemma \ref{lem:commutator-hdel}, or those of the form $[\Delta^k, N]$, see Lemma \ref{lem:commutatorsN}. We start with the following identity, adapted from \cite[(6.6)]{AF17} and \cite{ChMo01}. 

\begin{lem}[Commutator identity] \label{lem:commutator-hdel}
For some general functions $v,w$ and $k \geq 1$ we have
\eq{
\begin{split}
&\Big| \int_\Sigma v [\hdel, \Delta^k] w \sqrt{g} \Big|
\\
&\lesssim
 \big( 
		\| v \|_{H^k}  \| w \|_{H^{k-1}} 
		+ 
		\| w \|_{H^k}  \| v \|_{H^{k-1}} 
		\big)
		\big( 
	\| \Sigma \|_{H^3}
	+ \| \hN \|_{H^3}
	+ \| \Sigma \|_{H^k} + \| \hN \|_{H^k}
\big).
\end{split}
}
\end{lem}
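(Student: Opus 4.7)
The plan is to expand the commutator $[\hdel,\Delta^k]w$ explicitly as a spatial differential operator in $w$ with coefficients built from $\Sigma$, $\hN$, and their covariant derivatives (together with smooth factors of $N$, $N^{-1}$), and then to bound each resulting integral against $v$ by a combination of integration by parts and the Sobolev embeddings available on the closed 3-manifold $M$. The starting point is the identity \eqref{eq:hdelg-up}, $\hdel g^{ab}=-2N\Sigma^{ab}-2\hN g^{ab}$, together with the commutator relation $[\hdel,\nabla_a]T=-(\hdel\Gamma)^{\cdot}_{a\cdot}T$, where $\hdel\Gamma^c_{ab}=\tfrac12 g^{cd}(\nabla_a(\hdel g_{db})+\nabla_b(\hdel g_{da})-\nabla_d(\hdel g_{ab}))$ is schematically a first covariant derivative of $\Sigma$ and $\hN$. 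Passing $\hdel$ iteratively across the $k$ copies of $g^{ab}\nabla_a\nabla_b$ constituting $\Delta^k$, one would write
\[
[\hdel,\Delta^k]w \;=\; \sum_{\alpha,I} c_{\alpha,I}\, Q_I\cdot \nabla^{\alpha}w,
\]
where each coefficient $Q_I$ is a contraction of covariant derivatives of $\Sigma$ and $\hN$, always carries at least one factor of $\Sigma$ or $\hN$, and the total derivative order satisfies $|Q_I|+\alpha\leq 2k$.

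For each such summand, the next step is to integrate by parts in $\int_M v\,Q_I\,\nabla^\alpha w\,\sqrt{g}$ until the integrand takes the schematic form $\nabla^a v\cdot Q_I'\cdot \nabla^b w$ with $a\leq k$, $b\leq k-1$ (or the mirror distribution $a\leq k-1$, $b\leq k$), where $Q_I'$ is a Leibniz-expanded coefficient with remaining derivative count at most $2k-a-b\geq 1$. The Sobolev embeddings $H^2(M)\hookrightarrow L^\infty(M)$ and $H^1(M)\hookrightarrow L^4(M)$ are then invoked in the spirit of Lemma \ref{lem:common-Hk-est}: low-derivative factors of $\Sigma$ or $\hN$ are placed in $L^\infty$ and controlled by $\|\Sigma\|_{H^3}+\|\hN\|_{H^3}$, while high-derivative factors are placed in $L^2$ or $L^4$ and controlled by $\|\Sigma\|_{H^k}+\|\hN\|_{H^k}$. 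The symmetric $v\leftrightarrow w$ structure of the claimed bound provides the flexibility, whenever a given split would force $v$ into $H^k$ and $w$ into $H^{k-1}$, to instead shift the extra derivative onto $w$ and thereby absorb the term into the mirror summand $\|w\|_{H^k}\|v\|_{H^{k-1}}$.

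The hard part will be the principal term in the expansion, in which $\hdel$ acts only on the metric factors and produces the coefficient $-2(N\Sigma^{ab}+\hN g^{ab})$ with no derivatives on $\Sigma$ or $\hN$, multiplying $\nabla_a\nabla_b\Delta^{k-1}w$. A symmetric integration by parts of $k$ steps leaves both $v$ and $w$ in $H^k$, which is one derivative too high on the $w$-side. To save this derivative I would integrate by parts $k+1$ times and expand $\nabla^{k+1}[v(\Sigma+\hN)]$ via Leibniz into $\sum_{a+b=k+1}\binom{k+1}{a}\nabla^a v\,\nabla^b(\Sigma+\hN)$. The favourable split $(a,b)=(k,1)$ yields exactly $\|v\|_{H^k}\|\nabla(\Sigma+\hN)\|_{L^\infty}\|w\|_{H^{k-1}}\lesssim \|v\|_{H^k}(\|\Sigma\|_{H^3}+\|\hN\|_{H^3})\|w\|_{H^{k-1}}$. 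The dangerous end split $(a,b)=(k+1,0)$, which would require $\|v\|_{H^{k+1}}$, is reabsorbed by performing one further integration by parts that transfers a derivative from $\nabla^{k+1}v$ back onto $(\Sigma+\hN)\nabla^{k-1}w$; this produces either a derivative-of-coefficient term that is handled as above, or a term of the form $\|v\|_{H^{k-1}}\|\Sigma+\hN\|_{L^\infty}\|w\|_{H^k}$, which fits into the mirror summand $\|w\|_{H^k}\|v\|_{H^{k-1}}$ of the claimed bound. Summing these estimates over all terms in the expansion of the commutator yields the desired inequality.
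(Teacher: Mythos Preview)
Your overall strategy---expand the commutator, integrate by parts to balance derivatives, and close with the Sobolev embeddings $H^2\hookrightarrow L^\infty$, $H^1\hookrightarrow L^4$---is exactly what the paper does. The paper packages the expansion via the tensor $K^a_{bc}=\nabla_b(Nk^a_c)+\nabla_c(Nk^a_b)-\nabla^a(Nk_{cb})$ (this is essentially your $\hdel\Gamma$), writes $[\hdel,\Delta^k]w=\sum_{|I|+|J|=2k-1,\,|J|\geq1}c_{IJ}\nabla^I(\nabla^J w\cdot K)$, splits the sum according to whether $|I|\leq k$ or $|J|\leq k$, and then estimates by $(\|v\|_{H^k}\|w\|_{H^{k-1}}+\|w\|_{H^k}\|v\|_{H^{k-1}})(\|K\|_{L^\infty}+\|K\|_{H^{k-1}})$.

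There is, however, a misreading of the notation that leads you to fight a battle that is not there. In this paper $[\hdel,\Delta^k]u$ is \emph{defined} (in the proof of Proposition~\ref{prop:prelim-est-Ek-KG}) as $g^{a_1a_2}\cdots g^{a_{2k-1}a_{2k}}[\hdel,\nabla_{a_1}\cdots\nabla_{a_{2k}}]u$: the inverse metrics sit \emph{outside} the commutator. Consequently there is no ``principal term'' with an undifferentiated $N\Sigma^{ab}+\hN g^{ab}$ multiplying $\nabla^{2k}w$; those contributions are split off earlier as the $\Sigma^I\Delta^k_I$ and $\hN\Delta^k$ terms and estimated separately in $I_k^1$. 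Every genuine term of $[\hdel,\Delta^k]w$ already carries at least one derivative on $\Sigma,\hN$ through $K$, which is precisely why the total count on $w$ drops to $2k-1$ and the balanced split $(\|v\|_{H^k}\|w\|_{H^{k-1}}+\text{mirror})$ goes through cleanly.

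This matters because your proposed fix for the phantom principal term does not close. After one further integration by parts on $\int\nabla^{k+1}v\,(\Sigma+\hN)\,\nabla^{k-1}w$ you obtain $\int\nabla^k v\,\nabla(\Sigma+\hN)\,\nabla^{k-1}w+\int\nabla^k v\,(\Sigma+\hN)\,\nabla^k w$; the second integral is controlled by $\|v\|_{H^k}\|\Sigma+\hN\|_{L^\infty}\|w\|_{H^k}$, not by $\|v\|_{H^{k-1}}\|\Sigma+\hN\|_{L^\infty}\|w\|_{H^k}$ as you claim, and this does not fit the stated bound. Fortunately, once you adopt the paper's convention for $[\hdel,\Delta^k]$, this entire paragraph can simply be deleted.
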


\begin{proof}
From \cite{AF17} we have
\begin{align*}
& [ \hdel, \Delta^k] w 
:= 
	g^{a_1 a_2} \cdots g^{a_{2k-1}} g^{a_{2k}} [\hdel, \nabla_{a_1} \nabla_{a_2} \cdots \nabla_{a_{2k-1}} \nabla_{a_{2k}} ] w
\\
&=
	g^{a_1 a_2} \cdots g^{a_{2k-1}} g^{a_{2k}} 
	\Big[ \sum_{i \leq 2k-1} \sum_{i+1 \leq j \leq 2k}
	\nabla_{a_1} \cdots \nabla_{a_{i-1}}
	\Big( \nabla_{a_{i+1}} \cdots \nabla_{a_{j-1}} \nabla_c \nabla_{a_{j+1}} \cdots \nabla_{a_{2k}} (w) \cdot K^c_{a_j a_i} \Big)
	\Big]
\end{align*}
where 
$$
K_{bc}^a := \nabla_b(Nk^a_c) + \nabla_c (N k^a_b) - \nabla^a (k_{cb}) .
$$
Thus after integration by parts, we find
\begin{align*}
\Big| &\int_\Sigma v [\hdel, \Delta^k] w \sqrt{g} \Big|
=
	\Big| \sum_{\substack{|I|+|J|=2k-1 \\ |J| \geq 1}} 
	 \int_\Sigma c_{IJ} v \nabla^I(\nabla^J(w)K)  \sqrt{g} \Big|
\\
&
	\lesssim
	\Big( \sum_{\substack{|I|+|J|=2k-1 \\  1 \leq |J| \leq k}}
	+
	\sum_{\substack{|I|+|J|=2k-1 \\ |I| \leq k}} 
	\Big) c_{IJ} \int_\Sigma  v \nabla^I(\nabla^J(w)K) \sqrt{g}
\\
& \lesssim
	\sum_{\substack{|I''|+|J|=k-1 \\ |I'| = k \\ |J| \geq 1}}
	\int_\Sigma \big| \nabla^{I'} v \nabla^{I''} \left( \nabla^J w \cdot K \right) \big| \sqrt{g}
	+
	\sum_{\substack{|I| +|J''|=k-1 \\ |J'| = k \\ |J''| \geq 1}}
	\int_\Sigma \big| \nabla^{J'} w \nabla^{J''} \left( \nabla^I v \cdot K \right) \big| \sqrt{g}
\\
& \lesssim
		\left( 
		\| v \|_{H^k}  \| w \|_{H^{k-1}} 
		+ 
		\| w \|_{H^k}  \| v \|_{H^{k-1}} 
		\right)
		\left( 
	\|K\|_{L^\infty} + \| K \|_{H^{k-1}}
	\right).
\end{align*}
Finally recall $k = \Sigma + g/3$ so that $K = \nabla(Nk) = \nabla(N\Sigma) + {g \over 3} \nabla N$. Thus
\begin{align*}
&\|K\|_{L^\infty} + \| K \|_{H^{k-1}}
\lesssim
	\| \Sigma \|_{H^3}
	+ \| \hN \|_{H^3}
	+ \| \Sigma \|_{H^k} + \| \hN \|_{H^k}.
\end{align*}
\end{proof}

\begin{lem}
Assume the bootstrap assumptions \eqref{eq:bootstraps}  hold, then
\eq{C_k^1
\lesssim
	\big( \tEphi{k}
	+ | \tau | \varepsilon^4 \big)
	\big( 
	\|K\|_{L^\infty} + \| K \|_{H^{k-1}}
	\big).
}
\end{lem}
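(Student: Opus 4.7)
The plan is to bound each of the four integrals making up $C_k^1$ by a direct application of the commutator identity Lemma \ref{lem:commutator-hdel}, then convert the resulting $H^k$- and $H^{k-1}$-norms into $\tEphi{k}^{1/2}$ using the equivalence \eqref{eq:Ephi-equiv} together with the auxiliary estimates \eqref{eq:est-Ninv-phi}--\eqref{eq:est-NalphaN} from Lemma \ref{lem:common-Hk-est}. The factor $(\|K\|_{L^\infty}+\|K\|_{H^{k-1}})$ is carried along untouched.

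First I would treat the three ``diagonal'' terms. For $\int m^2 \phi [\hdel,\Delta^k]\phi \sqrt{g}$ I apply Lemma \ref{lem:commutator-hdel} with $v=w=m\phi$ after moving one factor of $m$ onto each slot; this gives $\|m\phi\|_{H^k}\|m\phi\|_{H^{k-1}} \lesssim \tEphi{k}$ times the $K$-factor. For $\int \tau^2 \phi' [\hdel,\Delta^k]\phi'\sqrt{g}$ I apply the lemma with $v=w=\tau\phi'$ to get $\|\tau\phi'\|_{H^k}^2 \lesssim \tEphi{k}$. For $\int \tau^2 \phi[\hdel,\Delta^{k+1}]\phi\sqrt{g}$ the naive application would require $\|K\|_{H^k}$, so I will use the derivation identity $[\hdel,\Delta^{k+1}] = \Delta[\hdel,\Delta^k] + [\hdel,\Delta]\Delta^k$, integrate by parts once in the first piece to redistribute one Laplacian onto the test function, and then invoke Lemma \ref{lem:commutator-hdel} with $v=\tau\Delta\phi$, $w=\tau\phi$ (and separately handle the $[\hdel,\Delta]$ commutator, which is first order in $K$). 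The equivalence \eqref{eq:Ephi-equiv} controls $\|\tau\phi\|_{H^{k+1}}\|\tau\phi\|_{H^k}$ by $\tEphi{k}$.

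Next I would handle the two mixed terms. For $\int \tau^2 \phi' [\hdel,\Delta^k](N^{-1}\phi\hN)\sqrt{g}$, apply Lemma \ref{lem:commutator-hdel} with $v=\tau\phi'$ and $w=\tau N^{-1}\phi\hN$. The factor $\|\tau\phi'\|_{H^k} \lesssim \tEphi{k}^{1/2}$, while \eqref{eq:est-NhN} gives $\|\tau N^{-1}\phi\hN\|_{H^k} \lesssim |\tau|\tEphi{k}^{1/2}+|\tau|\varepsilon^2$. The product yields $|\tau|\tEphi{k} + |\tau|\varepsilon^2 \tEphi{k}^{1/2} \lesssim \tEphi{k} + |\tau|\varepsilon^4$ after invoking the bootstrap $\tEphi{k}^{1/2}\lesssim \varepsilon e^{\gamma T}$ and absorbing with Young's inequality. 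For $\int \tau^2 (N^{-1}\phi)[\hdel,\Delta^k]((\tfrac32-N)N^{-1}\phi)\sqrt{g}$, apply the lemma with $v=\tau N^{-1}\phi$ and $w=\tau(\tfrac32-N)N^{-1}\phi$, and bound these using \eqref{eq:est-Ninv-phi} and \eqref{eq:est-NalphaN} respectively; again the bootstraps on $\|\hN\|_{H^{k+1}}$ convert the mixed $\varepsilon^2$-products into an $|\tau|\varepsilon^4$ remainder.

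The step I expect to be the trickiest is the $\Delta^{k+1}$ term, because the commutator identity as stated produces a $\|K\|_{H^k}$ factor that is not present in the claimed right-hand side; the reorganization via $[\hdel,\Delta^{k+1}] = \Delta[\hdel,\Delta^k] + [\hdel,\Delta]\Delta^k$ followed by integration by parts is essential to recover $\|K\|_{H^{k-1}}$ (the $[\hdel,\Delta]$ piece is only first order in $\nabla K$, hence bounded by $\|K\|_{L^\infty}+\|K\|_{H^{k-1}}$ after distributing the remaining $2k$ derivatives between $\phi$ and $\phi$). Once this is handled, summation of the individual estimates yields $C_k^1 \lesssim (\tEphi{k}+|\tau|\varepsilon^4)(\|K\|_{L^\infty}+\|K\|_{H^{k-1}})$.
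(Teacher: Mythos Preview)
Your treatment of the two ``diagonal'' terms $\int m^2\phi[\hdel,\Delta^k]\phi$ and $\int\tau^2\phi'[\hdel,\Delta^k]\phi'$, and of the two mixed terms via \eqref{eq:est-NhN}--\eqref{eq:est-NalphaN}, is exactly what the paper does.  The only point of divergence is the $\Delta^{k+1}$ term, and there your proposed decomposition introduces a genuine regularity gap.

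After integrating by parts you write $\int\tau^2\Delta\phi\,[\hdel,\Delta^k]\phi$ and propose to apply Lemma~\ref{lem:commutator-hdel} with $v=\tau\Delta\phi$, $w=\tau\phi$.  But that lemma's bound is symmetric in $v,w$: it contains the term $\|v\|_{H^k}\|w\|_{H^{k-1}}=\|\tau\Delta\phi\|_{H^k}\|\tau\phi\|_{H^{k-1}}$, and $\|\tau\Delta\phi\|_{H^k}\cong\|\tau\phi\|_{H^{k+2}}$ is one derivative beyond what $\tEphi{k}$ controls (the equivalence \eqref{eq:Ephi-equiv} gives only $\|\tau\phi\|_{H^{k+1}}$).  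So the conclusion ``$\|\tau\phi\|_{H^{k+1}}\|\tau\phi\|_{H^k}$'' that you announce does not follow from the black-box application you describe.  Your second piece $\int\tau^2\phi\,[\hdel,\Delta]\Delta^k\phi$ has the same problem on the $K$ side: integrating by parts $k$ times to balance the two $\phi$'s forces up to $k$ derivatives onto $K$ via Leibniz, not $k-1$, so the claimed gain over the naive approach does not materialise.

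The paper avoids the $\phi$-regularity issue by \emph{not} decomposing: it simply applies Lemma~\ref{lem:commutator-hdel} with $k$ replaced by $k+1$ and $v=w=\tau\phi$, obtaining $\|\tau\phi\|_{H^{k+1}}\|\tau\phi\|_{H^{k}}$.  The key observation is that the \emph{mass} term in the energy controls $\|m\phi\|_{H^k}$ without a $\tau$-weight, so $\|\tau\phi\|_{H^k}=|\tau|\|\phi\|_{H^k}\lesssim|\tau|m^{-1}\tEphi{k}^{1/2}$ and the product becomes $|\tau|\,\tEphi{k}$, safely inside the stated bound.  (Your worry that the $K$-factor should then read $\|K\|_{H^k}$ rather than $\|K\|_{H^{k-1}}$ is legitimate; the paper does not dwell on this, and the spare $|\tau|$ is what renders it harmless downstream.)
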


\begin{proof}
Using Lemma \ref{lem:commutator-hdel} the `symmetric' terms are controlled by
\begin{align*}
\Big| \int_\Sigma & 
	\big( \tau^2\phi' [\hdel,\Delta^k ] \phi'  
	+  m^2 
	\phi[\hdel, \Delta^k] \phi 
	\big) \sqrt{g} \Big|
\lesssim
	\left( 
	\|K\|_{L^\infty} + \| K \|_{H^{k-1}}
	\right) \tEphi{k}.
\end{align*}
The remaining terms are
\begin{align*}
\Big| \int_\Sigma & \tau^2\phi [\hdel, \Delta^{k+1}] \phi	\sqrt{g} \Big|
	+ \Big| \int_\Sigma 
		\tau^2 \phi' [\hdel,\Delta^k] \big( N^{-1} \phi \hN \big)\sqrt{g} \Big|
	+\Big| \int_\Sigma \tau^2 (N^{-1} \phi)  [\hdel, \Delta^k] \big( (\tfrac32-N) N^{-1} \phi \big) \sqrt{g} \Big|
\\
& \lesssim
	\Big( |\tau| \| \tau \phi \|_{H^{k+1}} \| \phi \|_{H^k}
	+ \| \tau \phi' \|_{H^k} \| \tau N^{-1} \phi \hN \|_{H^k}  
	+ \| \tau N^{-1} \phi \|_{H^k} \| \tau \phi(1-3/2N) \|_{H^k} 
	\Big)
	\big( 
	\|K\|_{L^\infty} + \| K \|_{H^{k-1}}
	\big)
\\
& \lesssim
	\Big( 
	|\tau| \Ephi{k}
	+ | \tau | \varepsilon^4
	+ \|\hN\|_{H^{k+1}} \Ephi{k} \Big)
	\big( 
	\|K\|_{L^\infty} + \| K \|_{H^{k-1}}
	\big).
\end{align*}
In the final line we used Lemma \ref{lem:common-Hk-est}. 

\end{proof}

The final result in this section controls commutators involving $N$. 
\begin{lem} \label{lem:commutatorsN}
Assume the bootstrap assumptions \eqref{eq:bootstraps}  hold, then
\eq{
C_k^2
\lesssim
	\| \hN \|_{H^3} \tEphi{k} 
	+ \| \hN \|_{H^{k+1}} \tEphi{3}
	+ \| \hN\|_{H^{k+1}} \tEphi{k} .
}
\end{lem}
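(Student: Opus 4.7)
The plan is to expand the commutator explicitly into a sum of products and then run a careful integration-by-parts argument tailored to the derivative budget available from $\tEphi{k}^{1/2}$ and $\| \hN \|_{H^{k+1}}$. Since $\nabla^I N = \nabla^I \hN$ for $|I| \geq 1$, one has
\begin{equation*}
[\Delta^k, N]\phi' = \sum_{\substack{|I|+|J|=2k \\ |I|\geq 1}} c_{IJ}\, \nabla^I \hN \, \nabla^J \phi',
\end{equation*}
where the coefficients $c_{IJ}$ depend smoothly on $g$ and its lower-order derivatives and are uniformly bounded under the bootstrap. Thus $C_k^2$ reduces to integrals of the form $\int_\Sigma \tau^2 v \cdot \nabla^I \hN \cdot \nabla^J \phi' \sqrt{g}$ with $v \in \{\Delta \phi,\, N^{-1}\phi\}$.

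By Lemma \ref{lem:E-equivalence} and \eqref{eq:Ephi-equiv}, the relevant control norms are $\tau \phi \in H^{k+1}$, $\tau \phi' \in H^k$, and $\hN \in H^{k+1}$. For each term I would first integrate by parts as many times as needed to ensure that no single factor exceeds its budget: at most $k+1$ derivatives on $\phi$, $k+1$ on $\hN$, and $k$ on $\phi'$. When $|I| \leq k+1$ no rearrangement is necessary; when $|I| > k+1$ one transfers $|I|-(k+1)$ derivatives off $\hN$ onto $\Delta \phi$ or $\nabla^J \phi'$. A counting check shows that the required $k-1-|J|$ transfers fit within the combined slack $(k-1)+(k-|J|)$ available on $\phi$ and $\phi'$, using $|I|+|J|=2k$.

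Once all factors are within budget, I would apply H\"older with the Sobolev embeddings $H^1 \hookrightarrow L^4$ and $H^2 \hookrightarrow L^\infty$ exactly as in Lemma \ref{lem:common-Hk-est}. The three terms on the right-hand side correspond to three ways of placing one factor in $L^\infty$: placing the (always-differentiated) $\hN$ factor in $L^\infty$ via $H^3 \hookrightarrow L^\infty$ yields $\| \hN \|_{H^3}\tEphi{k}$; placing a low-derivative $\tau \phi$ or $\tau \phi'$ factor in $L^\infty$ against $\tEphi{3}^{1/2}$ yields $\| \hN \|_{H^{k+1}} \tEphi{3}$; splitting across all three factors via $L^4 \cdot L^4 \cdot L^2$ yields $\| \hN \|_{H^{k+1}} \tEphi{k}$. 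For the second integrand ($v=N^{-1}\phi$), one invokes the bound $\| \tau N^{-1} \phi \|_{H^j} \lesssim (|\tau| + \| \hN \|_{H^{j+1}}) \tEphi{j}^{1/2}$ from Lemma \ref{lem:common-Hk-est} to convert back to standard energy norms; the additional factors of $\| \hN \|_{H^{k+1}}$ that arise are absorbed into the three listed right-hand side pieces using bootstrap smallness.

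The main obstacle is the derivative bookkeeping in the extreme case $|I|$ close to $2k$, where nearly all derivatives are on $\hN$ and must be moved onto the other factors without ever exceeding the $H^{k+1}$-budget on $\phi$ or the $H^k$-budget on $\phi'$. The counting argument above shows this is always feasible, but it is precisely this tight redistribution that forces the particular form of the right-hand side and, in particular, the mixed term $\| \hN \|_{H^{k+1}} \tEphi{3}$.
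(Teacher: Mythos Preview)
Your approach matches the paper's: expand the commutator, integrate by parts to bring every factor within its derivative budget, then estimate via H\"older together with $H^2 \hookrightarrow L^\infty$ and $H^1 \hookrightarrow L^4$. One gap in your case split, however: you only address $|I| > k+1$, but when $|I|$ is small (e.g.\ $|I|=1$, $|J|=2k-1$ for $k\geq 2$) it is $\phi'$ that overruns its $H^k$ budget, so the claim ``when $|I|\leq k+1$ no rearrangement is necessary'' is false. The paper avoids this by integrating by parts so that either $\phi'$ carries exactly $k$ derivatives or $\hN$ carries exactly $k+1$, with $v$ absorbing at most $k-1$ in either case; the $\tEphi{3}$ contribution then comes from $\|\tau\Delta\phi\|_{H^2}\lesssim\|\tau\phi\|_{H^4}\lesssim\tEphi{3}^{1/2}$ when $v=\Delta\phi$ is the low-derivative factor. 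Your counting argument is salvageable once you treat the $|J|>k$ regime symmetrically.
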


\begin{proof}
First note the expansion
\begin{align*}
[\Delta^k, N] w 
&=
	\sum_{|I|+|J| = 2k-1} c_{IJ} \nabla^{I+1} N \nabla^J w
\end{align*}
where the constants $c_{IJ}$ are functions of $g$ and so will be bounded below by some large overall constant. For general functions $w,v$ and $k \geq 1$ we have the following 
\begin{align*}
&\int_\Sigma 
	v [\Delta^k, N] w \sqrt{g}
\\
&\lesssim
	\sum_{\substack{|I|+|J'|=k-1 \\ |J''|=k}} 
	\| \nabla^{J'}  v \nabla^{I+1} N \nabla^{J''} w \|_{L^1}
	+
	\sum_{\substack{|I'|+|J|=k-1 \\ |I''|=k+1}} 
	\|\nabla^{I'}  v \nabla^J w \nabla^{I''} N \|_{L^1}
\\
& \lesssim
	\Big( \|  v \|_{L^\infty} \| \hN \|_{H^k} 
	+ \| \nabla N \|_{L^\infty} \|  v \|_{H^{k-1}} 
	+ \sum_{|\alpha| \leq k-2} \|  \nabla^\alpha v \|_{L^4} \sum_{|\beta| \leq k-1} \| \hN \|_{L^4}
	\Big) \|\tau w \|_{H^k}
\\
& \quad
	+ \Big( \|  w \|_{L^\infty} \|  v \|_{H^{k-1}} 
	+ \|  v \|_{L^\infty} \|  w \|_{H^{k-1}} 
	+ \sum_{|\alpha| \leq k-2} \|  \nabla^\alpha v \|_{L^4} \sum_{|\beta| \leq k-2} \|  w\|_{L^4}
	\Big) \| \hN \|_{H^{k+1}}
\\
& \lesssim
		\Big( \| v \|_{H^2} + \| v \|_{H^{k-1}} \Big) \| \hN \|_{H^{k+1}} \| w \|_{H^k} 
		+ \Big( \| w\|_{H^k}\| \hN \|_{H^3} + \| w \|_{H^2} \| \hN \|_{H^{k+1}}\Big) \| \hN \|_{H^{k-1}} \| w \|_{H^k} .
\end{align*}
The claim follows, recalling for example the estimate \eqref{eq:est-Ninv-phi}. 

\end{proof}


\section{Lapse and shift estimates}
We first recall the following elliptic estimates from \cite[Proposition 17]{AF17} for the lapse and shift. 
\begin{prop} \label{prop:AF17-N-X}
Under appropriate smallness conditions we have the pointwise estimate $N \in (0, 3]$ and the following estimates
\eq{
\begin{split}
\| \hN \|_{H^\ell} & \leq	
	C \big( \| \Sigma \|^2_{H^{\ell-2}} 
	+ |\tau| \| \eta \|_{H^{\ell-2}} \big),
\\
\| X \|_{H^\ell} & \leq 
	C \big( 	\| \Sigma \|_{H^{\ell-2}}^2
	+ \| g-\gamma\|_{H^{\ell-1}}^2
	+ |\tau| \| \eta \|_{H^{\ell-3}}
	+ \tau^2 \| N j \|_{H^{\ell-2}}
	\big).
\end{split}
}

\end{prop}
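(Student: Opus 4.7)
The plan is to treat the two equations \eqref{eq:EoM-lapse} and \eqref{eq:EoM-shift} as elliptic PDEs with source terms that are controlled by the stated quantities, and then to invoke standard elliptic regularity on the closed manifold $M$.

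\textbf{Lapse estimate.} First I would establish that the operator $\Delta-\tfrac13$ acting on scalar functions is an isomorphism from $H^\ell$ to $H^{\ell-2}$ for each $\ell\geq 2$. Since $M$ is closed, the spectrum of $\Delta$ is non-positive, so $\Delta-\tfrac13$ has trivial kernel and is therefore invertible with uniform elliptic bound $\|u\|_{H^\ell}\lesssim\|(\Delta-\tfrac13)u\|_{H^{\ell-2}}$. I would then apply this to \eqref{eq:EoM-lapse}, writing the right-hand side as $N(|\Sigma|_g^2-\tau\eta)$ and using the pointwise bound $N\in(0,3]$ (which is inherited from a maximum-principle argument on the lapse equation, cf.~\cite{AF17}) together with the Moser-type multiplication estimate in $H^{\ell-2}$ to obtain
\[
\|\hN\|_{H^\ell}\;\lesssim\;\|N\|_{L^\infty}\bigl(\|\Sigma\|_{H^{\ell-2}}^2+|\tau|\|\eta\|_{H^{\ell-2}}\bigr),
\]
which gives the stated lapse bound after absorbing the $L^\infty$ bound on $N$ into the constant.

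\textbf{Shift estimate.} For the shift I would first check that the vector-field operator $\mathcal L X:=\Delta X+\operatorname{Ric}(X,\cdot)^{\sharp}$ is invertible on $H^\ell$. On a negative Einstein background this follows from the Bochner identity: pairing $\mathcal LX$ with $X$ and integrating by parts produces $-\|\nabla X\|_{L^2}^2+\int R^i{}_m X_i X^m\mu_\gamma$, which is strictly negative under $R_{ij}[\gamma]=-\tfrac29\gamma_{ij}$, so $\mathcal L$ is injective and hence an isomorphism. Under the smallness assumption on $g-\gamma$, this invertibility persists for the operator built from $g$ rather than $\gamma$, yielding $\|X\|_{H^\ell}\lesssim\|\mathcal LX\|_{H^{\ell-2}}$. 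I would then estimate each term on the right-hand side of \eqref{eq:EoM-shift} in $H^{\ell-2}$. The $\nabla N\cdot\Sigma$ and $\nabla\hN$ contributions are handled by the previously-obtained lapse estimate (losing two or one derivatives respectively), producing $\|\Sigma\|_{H^{\ell-2}}^2+|\tau|\|\eta\|_{H^{\ell-3}}$. The explicit matter term contributes $\tau^2\|N\jmath\|_{H^{\ell-2}}$. Finally, the term involving $\Gamma^i_{mn}-\hat\Gamma^i_{mn}$ is schematically $\nabla(g-\gamma)$ contracted with $N\Sigma-\nabla X$; a product estimate then gives a bound by $\|g-\gamma\|_{H^{\ell-1}}\bigl(\|\Sigma\|_{H^{\ell-2}}+\|X\|_{H^{\ell-1}}\bigr)$, of which the $\|X\|_{H^{\ell-1}}$ factor is absorbed on the left after reducing $\|g-\gamma\|_{H^{\ell-1}}$ to be small. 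Summing these contributions and applying Young's inequality on the quadratic $\|g-\gamma\|\cdot\|\Sigma\|$ piece yields the stated shift bound.

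\textbf{Main obstacle.} The most delicate step is managing the coupling of the shift equation to $\hN$ and to $g-\gamma$ without losing derivatives or dropping factors of smallness. In particular the nonlinear $\Gamma-\hat\Gamma$ term must be kept quadratic in perturbations and not allowed to generate a linear-in-$X$ error on the right-hand side, which is what requires the Young-inequality absorption step and hence the bootstrap smallness on $\|g-\gamma\|_{H^{\ell-1}}$; similarly one has to verify that the Bochner-type positivity of $\mathcal L$ survives the perturbation of $\gamma$ to $g$, which again uses the smallness of $g-\gamma$.
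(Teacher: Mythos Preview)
Your sketch is correct and is in fact more than the paper itself provides: the paper does not prove this proposition at all but simply recalls it verbatim from \cite[Proposition~17]{AF17}. The argument you outline---maximum principle for the pointwise bound on $N$, invertibility of $\Delta-\tfrac13$ on scalars and of $\Delta+\operatorname{Ric}$ on vector fields (the latter via the negativity of the Einstein constant, stable under the smallness of $g-\gamma$), followed by elliptic regularity and product estimates on the source terms with absorption of the $\nabla X\cdot(\Gamma-\hat\Gamma)$ contribution---is exactly the content of the cited result in \cite{AF17} (which in turn follows \cite{AnMo03}).
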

Applied to the present case this yields the following estimate for the lapse function.
\begin{lem}[Lapse estimate] \label{lem:lapse-est}
Assume the bootstrap assumptions \eqref{eq:bootstraps}  hold, then for $2 \leq \ell \leq N+1$ we have
\eq{
\| \hN \|_{H^\ell} 
\lesssim
	\| \Sigma \|^2_{H^{\ell-2}} + |\tau| \| \rho \|_{L^\infty} + |\tau| \tEphi{\ell-2}
 }
and  furthermore
 \eq{
\| \hN \|_{H^\ell} 
 \lesssim 
	\varepsilon^2 |\tau|
	+|\tau| \tEphi{\ell-2}
}

\end{lem}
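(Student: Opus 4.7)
The plan is to invoke the elliptic estimate for $\hN$ from Proposition \ref{prop:AF17-N-X}, which already produces the $\|\Sigma\|^2_{H^{\ell-2}}$ term and reduces the problem to bounding $|\tau|\|\eta\|_{H^{\ell-2}}$ by $|\tau|\|\rho\|_{L^\infty} + |\tau|\tEphi{\ell-2}$.

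To estimate $\|\eta\|_{H^{\ell-2}}$, I would use the explicit form \eqref{eq:eta}, writing $\eta = -\tfrac12 (m\phi)^2 + 2 \bigl(\tau(\tfrac32 N^{-1}\phi - \phi')\bigr)^2$ as a sum of squares. The Moser-type product rule $\|u^2\|_{H^{\ell-2}} \lesssim \|u\|_{L^\infty}\|u\|_{H^{\ell-2}}$ then applies to each summand. The $L^\infty$ factor is bounded by $\|\rho\|_{L^\infty}^{1/2}$ directly from the coercive expression \eqref{eq:rho} for $\rho$, in which both $(m\phi)^2$ and $(\tau(\tfrac32 N^{-1}\phi - \phi'))^2$ appear as non-negative summands. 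The $H^{\ell-2}$ factor is controlled by $\tEphi{\ell-2}^{1/2}$: the $\|m\phi\|_{H^{\ell-2}}$ and $\|\tau\phi'\|_{H^{\ell-2}}$ pieces follow from the equivalence \eqref{eq:Ephi-equiv} of Lemma \ref{lem:L2norm-laplacian-equiv} together with Lemma \ref{lem:E-equivalence}, while the $\|\tau N^{-1}\phi\|_{H^{\ell-2}}$ piece is handled by the auxiliary estimate \eqref{eq:est-Ninv-phi}, which transfers derivatives off $N^{-1}$ at the cost of higher-regularity norms of $\hN$. Applying AM-GM to the product $\|\rho\|_{L^\infty}^{1/2}\tEphi{\ell-2}^{1/2}$ yields the first claimed inequality.

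For the second inequality I would substitute the uniform bound $\|\rho\|_{L^\infty} \lesssim \varepsilon^2$ supplied by Proposition \ref{prop:modified-cont}, and convert the bootstrap decay $\|\Sigma\|^2_{H^{\ell-2}} \leq C_I\varepsilon^2 e^{-3T/2}$ into the form $\lesssim \varepsilon^2|\tau|$. Since $|\tau| = |\tau_0| e^{-T}$ with $|\tau| \leq |\tau_0|$ throughout the evolution, one has $e^{-3T/2} = (|\tau|/|\tau_0|)^{3/2} \lesssim |\tau|$, so both the $\Sigma$ and $\rho$ contributions absorb into $\varepsilon^2|\tau|$.

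The main bookkeeping challenge I anticipate is ensuring the $N^{-1}$ factors inside $\eta$ do not force a loss of regularity: at the top value $\ell = N+1$, estimating $\|\tau N^{-1}\phi\|_{H^{N-1}}$ consumes $\hN$ at regularity $H^{N}$, which is comfortably provided by the bootstrap $\|\hN\|_{H^{N+1}} \lesssim \varepsilon e^{(-1+\gamma)T}$. This reflects the one-order regularity gain of the lapse built into the hierarchy of bootstrap norms, and is precisely the structural feature that makes the Moser estimate close without loss.
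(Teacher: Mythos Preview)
Your proposal is correct and follows essentially the same approach as the paper: invoke the elliptic estimate of Proposition~\ref{prop:AF17-N-X}, expand $\eta$ via \eqref{eq:eta}, apply a Moser-type product estimate to the squares, bound the $L^\infty$ factors by $\|\rho\|_{L^\infty}^{1/2}$ and the $H^{\ell-2}$ factors by $\tEphi{\ell-2}^{1/2}$, and finish with AM--GM. The paper records the product estimate in the slightly weaker form $\|w^2\|_{H^k} \lesssim \|w\|_{L^\infty}\|w\|_{H^k} + \|w\|_{H^k}^2$ (see \eqref{eq:Hk-square-product}), which produces the $|\tau|\tEphi{\ell-2}$ term directly rather than via AM--GM, but this is cosmetic; your handling of the $N^{-1}$ factor via \eqref{eq:est-Ninv-phi} and of the second inequality via Proposition~\ref{prop:modified-cont} and the bootstrap decay of $\Sigma$ is exactly what the paper has in mind.
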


\begin{proof}
For $k\geq 0$ and some function $w$ we have
\begin{align}
\| w^2 \|_{H^k}^2 
& \lesssim
	\| w \|_{L^\infty}^2 \sum_{|I| \leq k} \int_\Sigma | \nabla^I w |^2 \sqrt{g}
	+ \sum_{|I|+|J|+2\leq k} \int_\Sigma 
	\big( | \nabla^{I+1} w |^4 + |\nabla^{J+1} w|^4  \big) \sqrt{g}
\notag \\
& \lesssim
	\| w \|_{L^\infty}^2 \|  w \|_{H^\ell}^2  + \sum_{|I|+1\leq k-1} \| \nabla^{I+1} w \|_{L^4}^4
\notag \\
& \lesssim
	\| w \|_{L^\infty}^2 \| w \|_{H^k}^2  + \| w \|_{H^k}^4.
\label{eq:Hk-square-product}
\end{align}
Recalling the definition of $\eta$ from \eqref{eq:eta} and the estimate from Proposition \ref{prop:AF17-N-X}
we see
\begin{align*}
\| \hN \|_{H^\ell} 
& \lesssim
	\| \Sigma \|^2_{H^{\ell-2}} 
	+ |\tau| \| \eta \|_{H^{\ell-2}}
\\
& \lesssim
	\| \Sigma \|^2_{H^{\ell-2}}
	+ |\tau| \big( \| m\phi \|_{L^\infty} + \|\tau N^{-1} \phi \|_{L^\infty} + \| \tau \phi' \|_{L^\infty} \big) 
	 \tEphi{\ell-2}^{1/2} + |\tau| \tEphi{k-2}
	\\
& \lesssim
	\| \Sigma \|^2_{H^{\ell-2}} 
	+ |\tau| \tEphi{\ell-2}
	+ |\tau| \| \rho \|_{L^\infty}.
 \end{align*}

\end{proof}

\begin{rem}
It is crucial in the final line of the previous proof (specifically for  $\ell=2$) to use the pointwise estimate from Proposition \ref{prop:modified-cont} instead of the standard Sobolev estimate invoking $\tEphi{2}$, since the latter would have created a $e^{\gamma T}$ growth preventing the envisioned bootstrap argument. 
\end{rem}

\begin{lem}[Shift estimate]\label{lem:shift-est}
Assume the bootstrap assumptions \eqref{eq:bootstraps}  hold, then for $3 \leq \ell \leq N+1$
\begin{align*}
\| X \|_{H^\ell}
& \lesssim
	\| \Sigma \|_{H^{\ell-2}}^2
	+ \| g-\gamma\|_{H^{\ell-1}}^2
	+| \tau| \| \rho \|_{L^\infty}
	+  |\tau | \tEphi{\ell-2} 
\end{align*}
and furthermore
 \eq{
\| X \|_{H^\ell}
 \lesssim 
	\varepsilon^2 |\tau|
	+|\tau|\tEphi{\ell-2}.
}
\end{lem}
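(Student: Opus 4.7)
The plan is to apply the elliptic estimate for $X$ from Proposition~\ref{prop:AF17-N-X} and then control the two matter terms on the right-hand side, namely $|\tau|\|\eta\|_{H^{\ell-3}}$ and $\tau^2\|N\jmath\|_{H^{\ell-2}}$. The overall strategy mirrors the proof of Lemma~\ref{lem:lapse-est}, with the momentum-current term $\jmath$ being the only genuinely new ingredient. The geometric contributions $\|\Sigma\|_{H^{\ell-2}}^2$ and $\|g-\gamma\|_{H^{\ell-1}}^2$ already appear directly in the elliptic estimate and are carried through unchanged.

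For the $\eta$ piece the argument is essentially identical to the one in Lemma~\ref{lem:lapse-est}: using formula \eqref{eq:eta} together with the algebraic identity \eqref{eq:Hk-square-product}, I peel off the lowest-order factor in $L^\infty$ and, crucially, apply the pointwise bound on $\rho$ from Proposition~\ref{prop:modified-cont} rather than the naive Sobolev embedding that would have cost $\tEphi{2}$ and destroyed the bootstrap. This produces the term $|\tau|\|\rho\|_{L^\infty}$ together with a higher-order piece bounded by $|\tau|\tEphi{\ell-3}\leq|\tau|\tEphi{\ell-2}$.

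For the momentum-current piece I would expand
\[
N\jmath^a = \tau\bigl(\tfrac32\phi - N\phi'\bigr)g^{ab}\nabla_b\phi
\]
from \eqref{eq:j} and distribute the three powers of $|\tau|$ (two from the prefactor, one from inside $\jmath$) so that each Klein-Gordon factor carries one weight of $|\tau|$. The equivalence \eqref{eq:Ephi-equiv} then turns $\|\tau\phi\|_{H^{\ell-1}}$, $\|\tau\phi'\|_{H^{\ell-2}}$, and $\|\tau\nabla\phi\|_{H^{\ell-2}}$ into factors of $\tEphi{\ell-2}^{1/2}$. A standard Leibniz-and-Sobolev argument on the remaining metric and lapse factors (using that $\ell-2\geq 1$ and $\|\hN\|_{L^\infty}$, $\|g-\gamma\|_{L^\infty}$ are small, together with the control of $\nabla^J N^{-1}$ from \eqref{eq:nablaN-inverse}) then yields $\tau^2\|N\jmath\|_{H^{\ell-2}}\lesssim|\tau|\tEphi{\ell-2}$.

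Combining these estimates gives the first inequality. The second (the \emph{furthermore}) follows from the bootstrap assumptions \eqref{eq:bootstraps} and Proposition~\ref{prop:modified-cont}: writing $|\tau|=|\tau_0|e^{-T}$, the geometric contributions $\|\Sigma\|_{H^{\ell-2}}^2+\|g-\gamma\|_{H^{\ell-1}}^2$ are bounded by $\varepsilon^2 e^{-3T/2}\lesssim\varepsilon^2|\tau|$, while $|\tau|\|\rho\|_{L^\infty}\lesssim\varepsilon^2|\tau|$ by Proposition~\ref{prop:modified-cont}. I do not anticipate any serious obstacle; the only subtlety is careful bookkeeping of the $|\tau|$ weights in the current term so that they combine to produce precisely the single $|\tau|$ factor demanded by the statement.
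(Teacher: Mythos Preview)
Your approach is essentially the same as the paper's: invoke Proposition~\ref{prop:AF17-N-X}, reuse the $\eta$-estimate from Lemma~\ref{lem:lapse-est}, and then handle the current term $\jmath$ via a product estimate. One small imprecision: your claim $\tau^2\|N\jmath\|_{H^{\ell-2}}\lesssim|\tau|\tEphi{\ell-2}$ is slightly too strong at the bottom of the range $\ell=3$, because the Leibniz--Sobolev product estimate in $H^{1}$ still forces an $L^\infty$ factor on one of the two Klein--Gordon pieces, and $H^{1}\not\hookrightarrow L^\infty$ in three dimensions. The paper deals with this exactly as you did for $\eta$: it bounds the $L^\infty$ factor by $\|\rho\|_{L^\infty}^{1/2}$ (using the algebraic form of $\rho$ in \eqref{eq:rho}) rather than by $\tEphi{2}^{1/2}$, obtaining $\tau\|\jmath\|_{H^k}\lesssim\|\rho\|_{L^\infty}+\tEphi{k}$. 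Since $|\tau|\|\rho\|_{L^\infty}$ is already on the right-hand side of the stated inequality, this does not change the conclusion, but you should route the $\jmath$ estimate through $\|\rho\|_{L^\infty}$ rather than claim a pure $\tEphi{\ell-2}$ bound.
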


\begin{proof}
Recall the estimate from Proposition \ref{prop:AF17-N-X}.  We may use the estimate for $\| \eta \|_{H^k}$ derived in Lemma \ref{lem:lapse-est}, and will also need an estimate for the rescaled matter current
\eq{
j^a = \tau \big(\tfrac32 N^{-1} \phi - \phi' \big) \nabla^a \phi .
}
For $k \geq 0$  using the standard Sobolev embeddings, we have
\begin{align*}
\tau \| j \|_{H^k}
& \lesssim
	\| \tau \nabla \phi \|_{L^\infty} \tEphi{k}^{1/2} 
	+ \big( \| \tau N^{-1}\phi\|_{L^\infty} + \| \tau \phi'\|_{L^\infty} \big)
	\tEphi{k}^{1/2}
	+ \tEphi{k}
\\
& \lesssim
	\| \rho \|_{L^\infty} + \tEphi{k} .
\end{align*}
So for $3 \leq \ell \leq N+1$ we have
\begin{align*}
\| X \|_{H^\ell}
& \lesssim
	\| \Sigma \|_{H^{\ell-2}}^2
	+ \| g-\gamma\|_{H^{\ell-1}}^2
	+ |\tau| \| \eta \|_{H^{\ell-3}}
	+ \tau^2 \| N j \|_{H^{\ell-2}}
\\
& \lesssim
	\| \Sigma \|_{H^{\ell-2}}^2
	+ \| g-\gamma\|_{H^{\ell-1}}^2
	+ |\tau| \tEphi{\ell-3}+ |\tau| \| \rho \|_{L^\infty}
\\
& \quad
	+ |\tau| \Big( \| N \|_{L^\infty} \| \tau j \|_{H^{\ell-2}} 
	+ \sum_{|I|\leq \ell-2} \| \nabla^I \hN\|_{L^\infty} \| \tau j \|_{L^2} 
	+ \| \hN\|_{H^{\ell-2}} \| \tau j \|_{H^{\ell-2}}
	\Big)
\\
& \lesssim
	\| \Sigma \|_{H^{\ell-2}}^2
	+ \| g-\gamma\|_{H^{\ell-1}}^2
	+ |\tau| \tEphi{\ell-3}
	+ |\tau| \big(  
	\| \rho \|_{L^\infty}
	+  \tEphi{\ell-2} 
	+ \| \hN \|_{H^\ell} \tEphi{2} 
	\big).
\end{align*}
\end{proof}

\section{Hierarchy between Lapse and Klein-Gordon field}
In the following Lemma we estimate the borderline terms for the Klein-Gordon energy. 
\begin{lem}[Borderline terms]\label{lem:BL-terms}
Assume the bootstrap assumptions \eqref{eq:bootstraps}  hold, then for $1 \leq \ell \leq N $ we have
\eq{\label{eq:BL-decay-est}
\begin{split}
\sum_{k=1}^\ell \Big| \int_\Sigma BL_k \sqrt{g} \Big|
& \lesssim
	|\tau|^{-1} \varepsilon \tEphi{\ell}^{1/2} \| \hN \|_{H^\ell} 
	+ |\tau|^{-1} \tEphi{\ell}
		\|\hN\|_{H^2} 
		+ |\tau|^{-1} \tEphi{\ell}^{1/2} \tEphi{\ell-1}^{1/2} \| \hN \|_{H^{\ell+1}}.
\end{split}
}
\end{lem}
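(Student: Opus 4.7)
The structural obstruction is that $B_k = m^2\phi'[N,\Delta^k]\phi$ contains an undifferentiated $\phi'$ rather than $\tau\phi'$, and $\tEphi{k}^{1/2}$ only controls $\|\tau\phi'\|_{H^k}$. So the first step is to write $\phi' = |\tau|^{-1}(\tau\phi')$, which manufactures the universal $|\tau|^{-1}$ prefactor on the right-hand side of \eqref{eq:BL-decay-est}. Having done this, it remains to estimate
$$\sum_{k=1}^{\ell}\left|\int_\Sigma (m\tau\phi')\cdot m[N,\Delta^k]\phi\,\sqrt{g}\right|$$
with the expansion from the proof of Lemma \ref{lem:commutatorsN}, namely
$$[N,\Delta^k]\phi = -\sum_{|I|+|J|=2k-1} c_{IJ}\,\nabla^{I+1}\hN\,\nabla^J\phi,$$
where $c_{IJ}$ depend smoothly on $g$. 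Treating the constant $m$ as $O(1)$ and distributing it so that one factor sits with $\phi$ or $\phi'$, we reduce to trilinear estimates on $(m\tau\phi', \nabla^{I+1}\hN, m\nabla^J\phi)$ or $(m\tau\phi', \nabla^{I+1}\hN, \nabla^J(m\phi))$.

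The sum is then split according to which factor carries the majority of derivatives (since $|I|+|J|=2k-1$ is odd, exactly one of $|J|\le k-1$ or $|J|\ge k$ holds). In the case $|J|\le k-1$, place $\nabla^J\phi$ and $m\tau\phi'$ in $L^\infty$ using Sobolev embedding $H^2\hookrightarrow L^\infty$, noting there is enough regularity budget since $J+2\le k+1\le\ell+1$. The \emph{crucial} input here is the improved pointwise bound $m\|\phi\|_{L^\infty}\lesssim \rho^{1/2}\lesssim \varepsilon$ supplied by Proposition \ref{prop:modified-cont}: this is precisely what provides the $\varepsilon$-smallness in the first term of \eqref{eq:BL-decay-est}, paired with $\|\nabla^{I+1}\hN\|_{L^2}\le\|\hN\|_{H^{\ell}}$ and $\|m\tau\phi'\|_{L^2}\lesssim\tEphi{\ell}^{1/2}$. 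In the case $|J|\ge k$, derivatives concentrate on $\phi$ while $|I|+1\le k$, and Sobolev embedding gives $\|\nabla^{I+1}\hN\|_{L^\infty}\lesssim\|\hN\|_{H^{I+3}}\lesssim\|\hN\|_{H^{k+2}}$; when $k$ is small this reduces to $\|\hN\|_{H^2}$, and a Hölder pairing $\|m\tau\phi'\|_{L^2}\|m\nabla^J\phi\|_{L^2}\lesssim\tEphi{\ell}$ gives the second term of \eqref{eq:BL-decay-est}.

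The remaining contribution comes from the borderline middle range $|J|\in\{k-1,k\}$ where a straightforward Sobolev split fails: neither factor quite fits in $L^\infty$ at the top regularity $k=\ell$. Here I would perform a single integration by parts to transfer one derivative from $\nabla^J\phi$ either onto $\nabla^{I+1}\hN$ (producing the factor $\|\hN\|_{H^{\ell+1}}$) or onto $m\tau\phi'$; this produces the product $\tEphi{\ell}^{1/2}\tEphi{\ell-1}^{1/2}\|\hN\|_{H^{\ell+1}}$ which is the third term of \eqref{eq:BL-decay-est}. Throughout, one uses the equivalence \eqref{eq:Ephi-equiv} to convert the three types of factors $\|m\phi\|_{H^j}$, $\|\tau\phi\|_{H^{j+1}}$, $\|\tau\phi'\|_{H^j}$ into $\tEphi{j}^{1/2}$.

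The main obstacle is the careful case-by-case bookkeeping in the borderline middle-derivative range, and in particular ensuring that the $\varepsilon$-smallness in the first term on the right-hand side is extracted through the pointwise bound from Proposition \ref{prop:modified-cont} rather than through $\tEphi{2}^{1/2}$. The latter would yield a factor $\varepsilon e^{\gamma T}$ (by the bootstrap) instead of $\varepsilon$, which would be fatal to the hierarchy scheme driving the global argument; this is the same mechanism that motivated the modified continuity equation in Section 5.
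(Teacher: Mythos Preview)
Your overall picture is right --- the $|\tau|^{-1}$ comes from writing $\phi'=|\tau|^{-1}(\tau\phi')$, and the $\varepsilon$ in the first term of \eqref{eq:BL-decay-est} must come from the pointwise bound $\|m\phi\|_{L^\infty}\lesssim\rho^{1/2}\lesssim\varepsilon$ of Proposition~\ref{prop:modified-cont} rather than from $\tEphi{2}^{1/2}$. That insight matches the paper exactly.

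However, there is a genuine gap in the derivative bookkeeping. You work directly with the raw Leibniz expansion
\[
[N,\Delta^k]\phi \;=\; -\sum_{|I|+|J|=2k-1} c_{IJ}\,\nabla^{I+1}\hN\,\nabla^J\phi,
\]
and split into $|J|\le k-1$ versus $|J|\ge k$, proposing only a \emph{single} integration by parts in the borderline range. This is not enough. In the case $|J|\le k-1$ you have $|I|+1=2k-|J|\ge k+1$, and for $|J|=0$ this is $|I|+1=2k$; at top order $k=\ell=N$ that asks for $\|\hN\|_{H^{2N}}$, far beyond the available $\|\hN\|_{H^{N+1}}$. Symmetrically, in the case $|J|\ge k$ you could have $|J|=2k-1$, requiring $\|m\phi\|_{H^{2k-1}}$, which $\tEphi{\ell}$ does not control. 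Your claimed bounds $\|\nabla^{I+1}\hN\|_{L^2}\le\|\hN\|_{H^\ell}$ and $\|m\nabla^J\phi\|_{L^2}\lesssim\tEphi{\ell}^{1/2}$ are therefore false in the extreme subcases.

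What the paper does --- and what is essential --- is to integrate by parts roughly $k$ times \emph{before} expanding, so as to move $k$ derivatives from the commutator onto $\phi'$ (and symmetrically onto $\phi$). After this redistribution one is left with integrands of the schematic form $\nabla^{I+1}N\,\nabla^J\phi\,\nabla^{I'}\phi'$ where $|I|+1+|J|=k$ and $|I'|\le k$, i.e.\ each of the three factors carries at most $k$ derivatives. Only then do the H\"older/Sobolev splits you describe go through: the $L^\infty$ endpoint contributions use $\|m\phi\|_{L^\infty},\|\tau\phi'\|_{L^\infty}\lesssim\rho^{1/2}\lesssim\varepsilon$ to produce the first term, the low-$|I|$ contributions give $\|\hN\|_{H^2}\tEphi{k}$, and the high-$|I|$ contributions give $\|\hN\|_{H^{k+1}}\tEphi{k}^{1/2}\tEphi{k-1}^{1/2}$.
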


\begin{proof}
\begin{align*}
 \Big| \int_M & 
	  BL_k
	\sqrt{g} \Big|
=
	\Big| m^2 \int_\Sigma 
	\phi' \left( N \Delta^k \phi -\Delta^k (N \phi) \right)
	\sqrt{g} \Big|
\\
& \lesssim
	\int_\Sigma \sum_{|I|+1+|J| = k} | m^2 \nabla^{I+1} N \nabla^J \phi| | \sum_{|I'| \leq k} \nabla^{I'} \phi'| \sqrt{g}
	+ \int_M \sum_{|I|+1+|J| = k} |  \nabla^{I+1} N \nabla^J \phi'| \sum_{|I'| \leq k} | m^2 \nabla^{I'} \phi| \sqrt{g}
\\
& \lesssim
	\| m \phi \|_{L^\infty} |\tau|^{-1} \| \tau \phi' \|_{H^k} \| \hN \|_{H^k} 
	+ |\tau|^{-1} \| \tau \phi' \|_{H^k} 
	\sum_{|I|+1+|J|=k} \| \nabla^{I+1}N \|_{H^1}  \| m \nabla^J \phi \|_{H^1}
\\
& \quad
	+|\tau|^{-1} \| \tau \phi' \|_{L^\infty} \|  m\phi \|_{H^k} \| \hN \|_{H^k} 
	+ |\tau|^{-1} \| m \phi \|_{H^k} 
	\sum_{|I|+1+|J|=k} \| \nabla^{I+1}N \|_{H^1} \| \tau \nabla^J \phi' \|_{H^1}
\\
& \lesssim
	|\tau|^{-1} \| \rho \|_{L^\infty}^{1/2}  \Ephi{k}^{1/2} \| \hN \|_{H^k} 
	+ |\tau|^{-1} \tEphi{k}^{1/2} \Big(
		\| \nabla N \|_{H^1} \| \phi \|_{H^k}
		+ \| \phi \|_{H^{k-1}} \| \hN \|_{H^{k+1}} \Big)
\\
& \quad
	+|\tau|^{-1} \varepsilon \tEphi{k}^{1/2} \| \hN \|_{H^k} 
	+ |\tau|^{-1} \tEphi{k}^{1/2}
	\Big( \| \nabla N \|_{H^1} \| \tau \phi'\|_{H^k} + \| \hN \|_{H^{k+1}} \| \tau \phi' \|_{H^{k-1}} \Big)
\\
& \lesssim
	|\tau|^{-1} \varepsilon \tEphi{k}^{1/2} \| \hN \|_{H^k} 
	+ |\tau|^{-1} \tEphi{k}
		\|\hN\|_{H^2} 
		+ |\tau|^{-1} \tEphi{k}^{1/2} \tEphi{k-1}^{1/2} \| \hN \|_{H^{k+1}}.
\end{align*}
Summing from $k=1$ to $\ell$ gives the required result. 
\end{proof}

\begin{rem}
We now outline the key ideas behind closing the Lapse and Klein-Gordon bootstrap assumptions, as proved below in Proposition \ref{prop:KG-lapse-heirarchy}. The estimates for $\tEphi{0}$ and $\| \hN \|_{H^2}$ are readily improved. Then, starting from $\ell=1$, the most problematic terms needed for the $\tEphi{\ell}$ estimate are contained in Lemma \ref{lem:BL-terms}. Nonetheless, Lemma \ref{lem:BL-terms} tells us that we need information about $\| \hN \|_{H^2}$, $\| \hN \|_{H^{\ell+1}}$ and $\tEphi{\ell-1}$, all of  which have been upgraded from the previous steps. The upgraded estimate for $\tEphi{\ell}$ will then be used, via Lemma \ref{lem:lapse-est}, to close the bootstrap estimate for $\| \hN \|_{H^{\ell+2}}$. One then moves onto improving the estimate for $\tEphi{\ell+1}$ and continues until $\ell=N$. 
\end{rem}

\begin{prop}[Upgraded Lapse and Klein-Gordon estimates] \label{prop:KG-lapse-heirarchy}
Assume the bootstrap assumptions \eqref{eq:bootstraps}  hold, then
\begin{align}
\| \hN \|_{H^2} &\lesssim \varepsilon^2 e^{-T} ,
\\
\tEphi{0} & \lesssim \varepsilon^2 ,
\end{align}
and for higher orders $1 \leq  \ell \leq N$
\begin{align}
\| \hN \|_{H^{k+1}} &\lesssim \varepsilon^2 e^{(-1+C \varepsilon)T} ,
\\
\tEphi{k} & \lesssim \varepsilon^2 e^{C \varepsilon T} .
\end{align}

\end{prop}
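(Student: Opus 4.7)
The plan is a two-step initialization followed by an induction on regularity. First I exploit Proposition \ref{prop:modified-cont}, which for free yields the pointwise bound $\|\rho\|_{L^\infty}\lesssim\varepsilon^2$ uniformly in $T$. Inserting the bootstrap assumptions \eqref{eq:bootstraps} into the coefficient $\|\Sigma\|_{H^2}+\|\hN\|_{H^3}+|\tau|$ appearing in Proposition \ref{prop:Ezero-KG} shows that it is bounded by $C\varepsilon e^{(-1+\gamma)T}$, which is integrable on $[T_0,\infty)$; Gr\"onwall then gives $\tEphi{0}\lesssim\varepsilon^2$. Substituting both bounds into Lemma \ref{lem:lapse-est} at $\ell=2$ produces $\|\hN\|_{H^2}\lesssim\varepsilon^2|\tau|\lesssim\varepsilon^2 e^{-T}$, matching the first two claims of the Proposition.

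For the inductive step I assume $\tEphi{k}\lesssim\varepsilon^2 e^{C\varepsilon T}$ and $\|\hN\|_{H^{k+2}}\lesssim\varepsilon^2|\tau|e^{C\varepsilon T}$ for all $k<\ell$ (with $k=0$ supplied by the initialization) and run Proposition \ref{prop:prelim-est-Ek-KG} combined with Lemma \ref{lem:BL-terms}. Each factor of $|\tau|^{-1}$ in the border-line terms cancels against a $|\tau|$ supplied by the lapse bound: the crucial cancellation is $|\tau|^{-1}\|\hN\|_{H^2}\tEphi{\ell}\lesssim\varepsilon^2\tEphi{\ell}$, giving a linear-in-$\tEphi{\ell}$ coefficient of size $\varepsilon^2$ that is constant in $T$. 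The other two border-line terms, together with the remaining contributions from Proposition \ref{prop:prelim-est-Ek-KG}, either reduce to the same $\varepsilon^2\tEphi{\ell}$ growth term after applying Young's inequality, or turn into time-integrable sources of size $\varepsilon^4$ once the inductive bounds are inserted. Gr\"onwall then yields $\tEphi{\ell}\lesssim\varepsilon^2 e^{C\varepsilon^2 T}$, which is strictly stronger than the claimed $\varepsilon^2 e^{C\varepsilon T}$ and, for $\varepsilon$ small, beats the bootstrap rate $2\gamma$. Feeding the upgraded $\tEphi{\ell}$ into Lemma \ref{lem:lapse-est} at level $\ell+2$ gives $\|\hN\|_{H^{\ell+2}}\lesssim\varepsilon^2|\tau|e^{C\varepsilon T}$, closing the induction.

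The hard part is the absorption of the $|\tau|^{-1}$ prefactor in the border-line terms of Lemma \ref{lem:BL-terms}. This is only possible because Lemma \ref{lem:lapse-est} extracts a factor $|\tau|$ out of $\|\rho\|_{L^\infty}$, and this pointwise quantity is kept uniformly $\lesssim\varepsilon^2$ by the modified continuity equation of Section 5. Were one to estimate $\|\rho\|_{L^\infty}\lesssim\tEphi{2}$ by Sobolev embedding instead, the $|\tau|^{-1}\|\hN\|_{H^2}$ coefficient would pick up $\varepsilon^2 e^{2\gamma T}$ growth, destroying integrability already at $\ell=1$ and collapsing the hierarchy. Equivalently, the exact choice $\lambda=-1/2$ in Proposition \ref{prop:modified-cont}, which cancels the two dangerous terms $Nm^2\phi^2$ and $N(\tau\phi')^2$ on the right-hand side of the standard continuity equation, is precisely what makes this whole bootstrap close.
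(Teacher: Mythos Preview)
Your argument is correct and follows the same inductive hierarchy as the paper: initialize via Propositions \ref{prop:Ezero-KG} and \ref{prop:modified-cont} together with Lemma \ref{lem:lapse-est} at $\ell=2$, then alternate Proposition \ref{prop:prelim-est-Ek-KG} (plus Lemma \ref{lem:BL-terms}) with Lemma \ref{lem:lapse-est} to climb in regularity. Your sharper rate $e^{C\varepsilon^2 T}$ is genuine---the paper simply bounds $\varepsilon^2\leq\varepsilon$ in the Gr\"onwall exponent---but your phrase ``time-integrable sources of size $\varepsilon^4$'' is imprecise: the constant $\varepsilon^4$ source is not literally integrable, yet $\int_{T_0}^T\varepsilon^4\,ds=\varepsilon^4(T-T_0)\lesssim\varepsilon^2 e^{C\varepsilon^2 T}$ still closes.
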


\begin{proof}
From Proposition \ref{prop:Ezero-KG}
\eq{
\tilde E_0(\phi) \big|_T \lesssim  \tilde E_0(\phi) \big|_{T_0} .
}
The lapse estimate Lemma \ref{lem:lapse-est} with $\ell=2$ then implies 
\eq{
\| \hN \|_{H^2} 
 \lesssim 
	\varepsilon^2 e^{-T}
	+e^{-T} \tEphi{0}
\lesssim \varepsilon^2 e^{-T}.
}
The Klein-Gordon estimate of Proposition \ref{prop:prelim-est-Ek-KG} combined with the borderline estimate of Lemma \ref{lem:BL-terms} for $\ell=2-1=1$ together imply
\begin{align*}
\p_T \tEphi{1}
& \lesssim
	\varepsilon e^{(-1+\gamma)T} 
	\tEphi{1}
	+ |\tau| \tEphi{1}
	+   \varepsilon^3 e^{(-1+\gamma)T}
	+ \varepsilon^4 |\tau|
	+ |\tau|^{-1} \varepsilon \tEphi{1}^{1/2} \| \hN \|_{H^2} 
\\
& \quad
	+ |\tau|^{-1} \tEphi{1}
		\|\hN\|_{H^2} 
		+ |\tau|^{-1} \tEphi{1}^{1/2} \tEphi{0}^{1/2} \| \hN \|_{H^{2}}
\\
& \lesssim
	\varepsilon^3 e^{(-1+\gamma)T}
	+ \varepsilon^4
	+ \big( 
		\varepsilon e^{(-1+\gamma)T}
		+ |\tau| 
		+ \varepsilon^2
		\big) \tEphi{1}.
\end{align*}
Thus Gr\"onwall's inequality implies
\begin{align}
\tEphi{1}\big|_T
&\leq 
	\Big( \tEphi{1}\big|_{T_0} 
	+ C \int_{T_0}^T \big( \varepsilon^3 e^{(-1+\gamma)s}
	+ \varepsilon^4 \big) ds \Big)
	\exp \Big(  C \int_{T_0}^T 
	 \big(e^{-s} + \varepsilon e^{(-1+\gamma) s} + \varepsilon^2\big) ds \Big)
\notag
\\
&\lesssim 
	\big( \tEphi{1}\big|_{T_0} 
	+  \varepsilon^3 e^{C \varepsilon T} \big) \exp(C \varepsilon T).
\end{align}
Note we used the identity: $x \leq 1+x \leq e^x$ for $x \geq 0$. 

Returning to the lapse estimate from Lemma \ref{lem:lapse-est} with $\ell=3$ now implies 
\eq{
\| \hN \|_{H^3} 
 \lesssim 
	\varepsilon^2 e^{-T}
	+e^{-T} \tEphi{1}
\lesssim \varepsilon^2 e^{(-1+2 C \varepsilon )T}.
}
Now we use this result to improve the $\ell=2$ estimate for the Klein-Gordon field. From Lemma \ref{lem:BL-terms}, Proposition \ref{prop:prelim-est-Ek-KG} and the upgraded estimates obtained so far, we have
\begin{align*}
\p_T \tEphi{2}
&\lesssim
	 \varepsilon e^{(-1+\gamma)T} 
	\tEphi{2}
	+ |\tau| \tEphi{2}
	+  \varepsilon^3 e^{(-1+\gamma)T}
	+ \varepsilon^4 |\tau|
	+|\tau|^{-1} \varepsilon \tEphi{2}^{1/2} \| \hN \|_{H^2} 
\\
& \quad
	+ |\tau|^{-1} \tEphi{2}
		\|\hN\|_{H^2} 
		+ |\tau|^{-1} \tEphi{2}^{1/2} \tEphi{1}^{1/2} \| \hN \|_{H^{3}}
\\
& \lesssim
	\varepsilon e^{(-1+\gamma)T} 
	\tEphi{2}
	+ |\tau| \tEphi{2}
	+ \varepsilon^3 e^{(-1+\gamma)T}
	+\varepsilon^3  \tEphi{2}^{1/2} 
	+\varepsilon^2\tEphi{2}
\\
& \quad
		+ \varepsilon^3 e^{C\varepsilon T} \tEphi{2}^{1/2}
\\
& \lesssim
	\varepsilon^3 e^{(-1+\gamma)T}
	+ \varepsilon^4 e^{2C\varepsilon T} 
	+ \big(\varepsilon e^{(-1+\gamma)T}
	+ |\tau|
	+ \varepsilon^2 \big)\tEphi{2}.
\end{align*}
Thus by Gr\"onwall:
\begin{align*}
\tEphi{2}\big|_T
&\lesssim
	\Big( \tEphi{2}\big|_{T_0} 
	+ \int_{T_0}^T \big(\varepsilon^3  e^{(-1+\gamma)s}   + \varepsilon^4 e^{C \varepsilon s} ds\big)  \Big)
	\exp \Big( C \int_{T_0}^T (e^{-s} + \varepsilon)  ds \Big)
\\
&\lesssim 
	\big( \tEphi{2}\big|_{T_0} 
	+ \varepsilon^3 e^{C \varepsilon T} \big) e^{C \varepsilon T}.
\end{align*}
Continuing in this way we stop at $\ell=N+1$ for the lapse estimate. 
\eq{
\| \hN \|_{H^{N+1}} 
 \lesssim 
	\varepsilon^2 e^{-T}
	+e^{-T} \tEphi{N-1}
\lesssim
	 \varepsilon^2 e^{(-1+C\varepsilon )T}.
}
Using this to estimate the final $\ell = N$ energy (recall $N \geq 4$) for the Klein-Gordon field gives
\begin{align*}
\p_T \tEphi{N}
&\lesssim
	 \varepsilon e^{(-1+C \varepsilon)T} 
	\tEphi{N}
	+ |\tau| \tEphi{N}
	+ \varepsilon^3 e^{(-1+C \varepsilon)T}
	+ |\tau| \varepsilon^4
	+ |\tau|^{-1} \varepsilon \tEphi{N}^{1/2} \| \hN \|_{H^N} 
\\
& \quad
	+ |\tau|^{-1} \tEphi{N}
		\|\hN\|_{H^2} 
		+ |\tau|^{-1} \tEphi{N}^{1/2} \tEphi{N-1}^{1/2} \| \hN \|_{H^{N+1}}
\\
&\lesssim
	\varepsilon^3 e^{(-1+C \varepsilon)T}
	+|\tau|
	\tEphi{N}
	+ \varepsilon^3 e^{C \varepsilon T}  \tEphi{N}^{1/2}
	+ \varepsilon^2 \tEphi{N}
\\
&\lesssim
	\varepsilon^3 e^{(-1+C \varepsilon)T}
	+ \varepsilon^4 e^{2C \varepsilon T} 
	+\big( \varepsilon e^{(-1+C \varepsilon)T} 
	+ |\tau| + \varepsilon^2 \big) \tEphi{N}.
\end{align*}
Thus by Gr\"onwall we have
\begin{align*}
\tEphi{N}\big|_T
&\lesssim
	\Big( \tEphi{N}\big|_{T_0} 
	+ \int_{T_0}^T \big(\varepsilon^2  e^{(-1+C \varepsilon)s}   + \varepsilon^4 e^{C \varepsilon s} \big)ds  \Big)
	\exp \Big( C \int_{T_0}^T (\varepsilon + e^{-s} ) ds \Big)
\\
&\lesssim 
	\big( \tEphi{N}\big|_{T_0} 
	+ \varepsilon^3 e^{C \varepsilon T} \big) e^{C \varepsilon T}.
\end{align*}
\end{proof}

\section{Energy estimate - Geometry}
In this final section we obtain improved estimates for the shift vector field and the second fundamental form and metric perturbation. 
\begin{corol}[Improved Shift estimate]
Assume the bootstrap assumptions \eqref{eq:bootstraps} hold, then for $3 \leq \ell \leq N+1$
 \eq{
\| X \|_{H^\ell} \lesssim \varepsilon^2 e^{(-1+C\varepsilon) T}.
}
\end{corol}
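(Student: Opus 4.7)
The plan is to combine the shift estimate of Lemma \ref{lem:shift-est} with the upgraded bounds from Proposition \ref{prop:KG-lapse-heirarchy}. Recall that Lemma \ref{lem:shift-est} already gives
\[
\| X \|_{H^\ell} \lesssim \varepsilon^2 |\tau| + |\tau| \tEphi{\ell-2},
\]
so the problem reduces to inserting the sharp bounds on $\tEphi{\ell-2}$ that have just been established.

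First, I would note that by definition of logarithmic time $T = -\ln(\tau/\tau_0)$ we have $|\tau| = |\tau_0| e^{-T}$, and hence $|\tau| \lesssim e^{-T}$. For the range $3 \leq \ell \leq N+1$ we have $1 \leq \ell-2 \leq N-1$, so Proposition \ref{prop:KG-lapse-heirarchy} is directly applicable and yields $\tEphi{\ell-2} \lesssim \varepsilon^2 e^{C\varepsilon T}$.

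Substituting these two facts into the estimate from Lemma \ref{lem:shift-est} gives
\[
\| X \|_{H^\ell} \lesssim \varepsilon^2 e^{-T} + e^{-T} \cdot \varepsilon^2 e^{C\varepsilon T} \lesssim \varepsilon^2 e^{(-1+C\varepsilon)T},
\]
which is the desired bound. Because the corollary is essentially a bookkeeping consequence of previously established results, there is no genuine obstacle here; the only minor care needed is to ensure the regularity bookkeeping $1 \leq \ell-2 \leq N-1$ is compatible with the range of $k$ for which Proposition \ref{prop:KG-lapse-heirarchy} provides control, which indeed it is since the proposition goes up to $k = N$.
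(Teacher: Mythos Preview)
Your proof is correct and follows exactly the same approach as the paper, which simply states that the result ``follows clearly from Lemma \ref{lem:shift-est} and Proposition \ref{prop:KG-lapse-heirarchy}.'' You have merely spelled out the substitution and the regularity bookkeeping in more detail than the paper does.
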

\begin{proof}
This follows clearly from Lemma \ref{lem:shift-est} and Proposition \ref{prop:KG-lapse-heirarchy}.
\end{proof}

Recall the definitions from Section \ref{subsec:energy-geom-def}. Using the energy estimate given in \cite[Lemma 20]{AF17}, itself adapted from \cite{AnMo11} we have the following estimate for the second fundamental form and metric perturbation.  

\begin{corol}[Improved geometry estimate]
Assume the bootstrap assumptions \eqref{eq:bootstraps} hold, then for $1 \leq \ell \leq N+1$
\eq{ \label{eq:geom-est-AF}
\p_T \Eg_\ell 
\leq	
	- 2 \alpha \Eg_\ell + 6 \Eg_\ell^{1/2} |\tau| \| N S \|_{H^{\ell-1}} + C \Eg_\ell^{3/2}
	+ C \Eg_\ell^{1/2} \big( 
	|\tau| \| \eta \|_{H^{\ell-1}} 
	+ \tau^2 \| N j \|_{H^{\ell-2}}
	\big).
}
Furthermore
\begin{align*}
\p_T \Eg_\ell 
&\leq
	- 2 \alpha \Eg_\ell 
	+ C \Eg_\ell^{1/2} \varepsilon^2 e^{(-1+C\varepsilon)T}
	 + C \Eg_\ell^{3/2} ,
\end{align*}
Finally 
\eq{\label{final-decay-en}
\Eg_\ell \big|_T
\leq C \varepsilon^2 e^{-2\alpha \zeta T} .
}
where for sufficiently small $\varepsilon$ we may choose $\zeta$ arbitrarily close to 1, in particular  $\zeta \leq 1- \tfrac{C \varepsilon}{\alpha}$.

\end{corol}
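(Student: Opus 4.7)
\emph{Plan of proof.} The proof splits into three steps matching the three displays in the statement. For the first inequality \eqref{eq:geom-est-AF}, the plan is to invoke \cite[Lemma~20]{AF17} (itself adapted from \cite{AnMo11}) directly. That lemma furnishes the differential energy estimate for the corrected $L^2$-energy $\Eg_\ell$ introduced in Section \ref{subsec:energy-geom-def}, with the dissipative coefficient $-2\alpha$ arising from the spectral gap of $\mathcal{L}_{g,\gamma}$, the cubic self-interaction $C\Eg_\ell^{3/2}$ collecting all geometric nonlinearities, and the three source terms capturing the coupling to the matter tensors $S_{ab}$, $\eta$ and $j^a$ through the $\Sigma$-evolution and the constraint equations.

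For the second inequality, I would refine each matter norm using the explicit Klein--Gordon formulas \eqref{eq:rho}--\eqref{eq:gab-Tab} together with Proposition \ref{prop:KG-lapse-heirarchy}. Concretely, Sobolev algebra applied to $NS_{ab} = \tfrac12 N m^2 \phi^2 g_{ab} + N\tau^2 \nabla_a\phi\nabla_b\phi$ gives
$$
|\tau|\| N S \|_{H^{\ell-1}}
\lesssim |\tau|\bigl(\| m\phi\|_{H^{\ell-1}}^2 + \|\tau\nabla\phi\|_{H^{\ell-1}}^2\bigr)(1+\|\hN\|_{H^{\ell-1}})
\lesssim |\tau|\,\tEphi{\ell-1},
$$
and an analogous treatment of $\eta$ and $Nj^a$ produces identical bounds. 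Since Proposition \ref{prop:KG-lapse-heirarchy} yields $\tEphi{\ell-1}\lesssim \varepsilon^2 e^{C\varepsilon T}$ for $\ell\le N+1$, each source is controlled by $C\varepsilon^2 e^{(-1+C\varepsilon)T}$, which combines with \eqref{eq:geom-est-AF} into the claimed inequality.

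For the pointwise decay \eqref{final-decay-en}, I would reduce the differential inequality to a scalar linear one and apply Gr\"onwall. The bootstrap \eqref{eq:bootstraps} gives $\Eg_\ell^{1/2}\lesssim \varepsilon e^{-3T/4}$, hence $C\Eg_\ell^{3/2} \le C\varepsilon e^{-3T/4}\,\Eg_\ell$, which is absorbable. Young's inequality turns the matter source into $\delta \Eg_\ell + C_\delta \varepsilon^4 e^{(-2+2C\varepsilon)T}$ for any chosen $\delta>0$, giving
$$
\p_T \Eg_\ell + (2\alpha - C\varepsilon)\,\Eg_\ell \le C\varepsilon^4 e^{(-2+2C\varepsilon)T}.
$$
Gr\"onwall then produces the decay $\Eg_\ell(T)\lesssim \varepsilon^2 e^{-(2\alpha-C\varepsilon)T}$. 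Comparing the two competing rates $2\alpha - C\varepsilon$ and $2-2C\varepsilon$ (using $\alpha\le 1$), the slower rate dominates; setting $\zeta = 1 - C\varepsilon/(2\alpha)$ delivers \eqref{final-decay-en}, and by shrinking $\varepsilon$ we can make $\zeta$ as close to $1$ as desired.

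The main obstacle is the second step at the top order $\ell=N+1$, where one must check that the loss of one derivative in $S_{ab}$ and $\eta$ and of two derivatives in $j^a$ on the right of \eqref{eq:geom-est-AF} matches exactly the one-order regularity gap between the geometric energies and the Klein--Gordon energy in \eqref{eq:bootstraps}; so the bound uses only $\tEphi{\ell-1}$ and never exceeds $\tEphi{N}$. A related subtlety is that for low $\ell$ one should not estimate the $\rho$ contribution to $\eta$ by the naive Sobolev embedding $H^2\hookrightarrow L^\infty$ applied to $\tEphi{2}$ (which would reintroduce an $e^{\gamma T}$ loss and spoil the closeness of $\zeta$ to $1$), but rather use the uniform bound $\|\rho\|_{L^\infty}\lesssim \varepsilon^2$ established via the modified continuity equation in Proposition \ref{prop:modified-cont}.
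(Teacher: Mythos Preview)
Your proposal is correct and matches the paper for the first two steps: the differential inequality \eqref{eq:geom-est-AF} is indeed quoted from \cite[Lemma~20]{AF17}, and the matter sources are bounded using the explicit formulas \eqref{eq:rho}--\eqref{eq:Sij} together with the upgraded estimates of Proposition~\ref{prop:KG-lapse-heirarchy}, just as you outline.

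For the final decay estimate your route differs slightly from the paper's. You work directly with $\Eg_\ell$, absorb the cubic term via the bootstrap bound $\Eg_\ell^{1/2}\lesssim\varepsilon e^{-3T/4}$, and apply Young's inequality to the linear source before Gr\"onwall. The paper instead passes to $\Eg_\ell^{1/2}$, multiplies by the integrating factor $e^{3T/4}$, and splits the dissipative coefficient as $-\alpha=-\alpha\zeta-\alpha(1-\zeta)$, using the portion $\alpha(1-\zeta)$ to absorb the quadratic term $C\Eg_\ell$ (which forces $\zeta\le 1-C\varepsilon/\alpha$) and the portion $\alpha\zeta$ to drive the Gr\"onwall decay. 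Both arguments close; the paper's avoids Young's inequality and delivers the relation $\zeta\le 1-C\varepsilon/\alpha$ directly, whereas in your version the Young parameter $\delta$ must be taken of order $\varepsilon$ to match this, and then $C_\delta\sim\varepsilon^{-1}$ so that the source is $O(\varepsilon^3)$ rather than the $O(\varepsilon^4)$ you wrote---a harmless slip. Your closing remark about the pointwise $\|\rho\|_{L^\infty}$ bound is accurate but not strictly needed at this stage: since Proposition~\ref{prop:KG-lapse-heirarchy} has already upgraded $\tEphi{\ell-1}$ to $\varepsilon^2 e^{C\varepsilon T}$, even the naive Sobolev bound would suffice here.
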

\begin{proof}
The estimate \eqref{eq:geom-est-AF} comes from \cite[Lemma 20]{AF17}. Recall $S_{ij}$ from \eqref{eq:Sij}. 
So for $2 \leq \ell \leq N+1$ we find
\begin{align*}
&\| N S \|_{H^{\ell-1}}
\\
& \lesssim
	\| N \|_{L^\infty}
	\Big( \| \phi \|_{L^\infty} \| \phi \|_{H^{\ell-1}} + \| \phi \|_{H^{\ell-1}}^2
	+ \| \tau \nabla \phi \|_{L^\infty} \| \tau \nabla \phi \|_{H^{\ell-1}}
	+ \| \tau \nabla \phi \|_{H^{\ell-1}}^2 
	\Big)
\\
& 
	+ \| \hN\|_{H^{\ell-1}}
	\Big(\| \phi \|_{L^\infty}^2 + \| \phi \|_{L^\infty}^2 \| \phi \|_{H^{\ell-1}} + \| \phi \|_{H^{\ell-1}}^2
	+ \| \tau \nabla \phi \|_{L^\infty}^2
	+  \| \tau \nabla \phi \|_{L^\infty} \| \tau \nabla \phi \|_{H^{\ell-1}}
	+ \| \tau \nabla \phi \|_{H^{\ell-1}}^2 
	\Big)
\\
& \lesssim
	\| \rho \|_{L^\infty}
	+ \tEphi{\ell-1}
	+ \| \hN \|_{H^{\ell-1}} 
	\big( \| \rho \|_{L^\infty}
	+ \tEphi{\ell-1}
	\big)
\end{align*}
where we took note of the product identity \eqref{eq:Hk-square-product}. Also following the method of Lemmas \ref{lem:lapse-est} and \ref{lem:shift-est} we have
 \begin{align*}
|\tau| \| \eta \|_{H^{\ell-1}} 
	+ \tau^2 \| N j \|_{H^{\ell-2}}
& \lesssim
	|\tau| \tEphi{\ell-1} + |\tau| \| \rho \|_{L^\infty}
	+ |\tau| \big(  \| \rho \|_{L^\infty}
	+  \tEphi{\ell-2}\big)
	\big( \| N \|_{L^\infty} + \| \hN \|_{H^\ell} \big).
\end{align*}
Putting these together gives
\begin{align*}
\p_T \Eg_\ell 
& \leq	
	- 2 \alpha \Eg_\ell + 6 \Eg_\ell^{1/2} \tau \| N S \|_{H^{\ell-1}} + C \Eg_\ell^{3/2}
	+ C \Eg_\ell^{1/2} \big( 
	|\tau| \| \eta \|_{H^{\ell-1}} 
	+ \tau^2 \| N j \|_{H^{\ell-2}}
	\big)
\\
& \leq
	- 2 \alpha \Eg_\ell 
	+ 6 \Eg_\ell^{1/2} |\tau| 
	\big( \| \rho \|_{L^\infty} + \tEphi{\ell-1} \big) 
	 + C \Eg_\ell^{3/2}
\\
& \quad
	+ C \Eg_\ell^{1/2} \Big(
	|\tau| \tEphi{\ell-1} + |\tau| \| \rho \|_{L^\infty}
	+ |\tau| \| \rho \|_{L^\infty} \| \hN \|_{H^\ell}
	+ |\tau| \tEphi{\ell-2}\| \hN \|_{H^\ell} \Big).
\end{align*}
Thus
\begin{align*}
\p_T \big( \Eg_\ell ^{1/2} \big)
&\leq
	- \alpha \Eg_\ell ^{1/2}
	+ C \varepsilon^2 e^{-T}
	 + C  \Eg_\ell.
\end{align*}
Now recall $\alpha \in [1-\delta_\alpha, 1]$ where $\delta_\alpha$ can be made suitably small. Given $\alpha$, pick $\zeta$ such that $\tfrac34 < \zeta <1$ and $-(\alpha \zeta - \tfrac34) < 0$ (ie, $\alpha \zeta > \tfrac34$). Indeed we can guarantee $\alpha \zeta > \tfrac34$ holds by choosing $\varepsilon$ sufficiently small such that $1-\delta_\alpha(\varepsilon) > {3 \over 4 \zeta}$. Then we have
\begin{align*}
\p_T \big( e^{\tfrac34 T} \Eg_\ell ^{1/2} \big)
&\leq
	- (\alpha \zeta - \tfrac34) e^{\tfrac34 T}\Eg_\ell ^{1/2}
	+ C \varepsilon^2 e^{(-1+\tfrac34)T}
	- e^{\tfrac34 T} \Eg_\ell^{1/2}
	\Big( \alpha (1-\zeta) - C  \Eg_\ell^{1/2}\Big)
\\
&\leq
	- (\alpha \zeta - \tfrac34) e^{\tfrac34 T}\Eg_\ell ^{1/2}
	+ C \varepsilon^2 e^{(-1+\tfrac34)T}
	- e^{\tfrac34 T} \Eg_\ell^{1/2}
	\Big( \alpha (1-\zeta) - C  \varepsilon \Big)
\\
&\leq
	- (\alpha \zeta - \tfrac34) e^{\tfrac34 T}\Eg_\ell ^{1/2}
	+ C \varepsilon^2 e^{(-1+\tfrac34)T}
\end{align*}
where we dropped the final term by picking $\varepsilon$ small enough so that $\alpha(1-\zeta)- C \varepsilon \geq 0$. 
Then by Gr\"onwall we have
\eq{
e^{\delta T} \Eg_\ell ^{1/2} \big|_T
\leq \Big( e^{\tfrac34 T_0} \Eg_\ell ^{1/2} \big|_{T_0} + C \varepsilon^2 \int_{T_0}^T e^{(-1+\tfrac34)s} ds \Big)\Big)
\exp \Big( - \int_{T_0}^T  (\alpha \zeta -\tfrac34) ds \Big).
}
This implies 
\eq{
\Eg_\ell ^{1/2} \big|_T
\leq 
	\big( e^{\tfrac34 T_0} \Eg_\ell ^{1/2} \big|_{T_0} + C \varepsilon^2 \big)
	 e^{-\tfrac34 T} e^{-(\alpha \zeta - \tfrac34)(T-T_0)}
}
and thus
\eq{
\Eg_\ell \big|_T
\leq (\Eg_\ell |_{T_0}  + C \varepsilon^4 )e^{-2\alpha \zeta T}
< C \varepsilon^2 e^{-\tfrac32 T}.
}

\end{proof}

\begin{proof}[Proof of the main theorem]
The main theorem is now a consequence of the foregoing lemmas. Considering a sufficiently small perturbation of the Milne initial data it follows along the lines of the corresponding argument in \cite{FK15} that there exists a CMC surface in the maximal development with initial data close to the Milne geometry. This initial data set is now evolved by the rescaled CMCSH EKGS. The local existence theory then implies the existence of a solution, which according to our previous analysis obeys the decay estimates for the perturbation given in \eqref{final-decay-en}.
In particular, the solution exists for $T\rightarrow\infty$ and moreover is future complete, which follows analogous to \cite{AnMo11}.
\end{proof}

\end{document}